\documentclass[11pt]{amsart}
\usepackage[utf8]{inputenc} 	
\usepackage[T1]{fontenc}
\usepackage{lmodern}
\usepackage{yhmath}

\usepackage{amsthm,amsmath,amssymb,amsbsy,bbm,mathrsfs,supertabular,
eurosym,graphicx,enumitem,xcolor}
\usepackage{bm}
\usepackage{mathtools}

\newtheoremstyle{BBstyle0}  {}{}{\itshape}{}{\bfseries}{}{6pt}{}
\newtheoremstyle{BBstyle1}  {3pt}{3pt}{\rmfamily}{}{\itshape}{: }{3pt}{}
\newtheoremstyle{BBstyle2}  {3pt}{3pt}{\itshape}{}{\bfseries\large}{}{0pt}{}
\newtheoremstyle{BBstyle3}  {}{}{\itshape}{}{\bfseries}{: }{3pt}{}
\newtheoremstyle{BBstyle4}  {}{}{\rmfamily}{}{\bfseries}{}{6pt}{}

\parskip 2mm
\usepackage[normalem]{ulem} 
\usepackage[authoryear]{natbib}
%

\newtheorem{thm}{Theorem}
\newtheorem{lem}{Lemma}
\newtheorem{prop}{Proposition}
\newtheorem{df}{Definition}
\newtheorem{cor}{Corollary}
\newtheorem{ass}{Assumption}

\theoremstyle{definition}

\newtheorem{rmk}{Remark}
\usepackage[english]{babel}

\usepackage{hyperref}  		     


\newcommand{\pa}[1]{\left({#1}\right)}

\newcommand{\cro}[1]{\left[{#1}\right]}

\newcommand{\ac}[1]{\left\{{#1}\right\}}

\newcommand{\argmin}{\mathop{\rm argmin}}



\newcommand{\E}{{\mathbb{E}}}
 
\renewcommand{\L}{{\mathbb{L}}}
\newcommand{\N}{{\mathbb{N}}}
\renewcommand{\P}{{\mathbb{P}}}
\newcommand{\Q}{{\mathbb{Q}}} 
\newcommand{\R}{{\mathbb{R}}}

\newcommand{\sA}{{\mathscr{A}}}

\newcommand{\sC}{{\mathscr{C}}}

\newcommand{\sE}{{\mathscr{E}}}

\newcommand{\sQ}{{\mathscr{Q}}} 

\newcommand{\sT}{{\mathscr{T}}}

\newcommand{\sW}{{\mathscr{W}}}
\newcommand{\sX}{{\mathscr{X}}}
\newcommand{\sY}{{\mathscr{Y}}}

%
\DeclareMathAlphabet{\mathscrbf}{OMS}{mdugm}{b}{n}

\newcommand{\sbP}{{\mathscrbf{P}}}
\newcommand{\sbQ}{{\mathscrbf{Q}}}

\newcommand{\cA}{{\mathcal{A}}}

\newcommand{\cC}{{\mathcal{C}}}

\newcommand{\cE}{{\mathcal{E}}}
\newcommand{\cF}{{\mathcal{F}}}
 
\newcommand{\cH}{{\mathcal{H}}}
\newcommand{\cI}{{\mathcal{I}}}

\newcommand{\cK}{{\mathcal{K}}}
\newcommand{\cL}{{\mathcal{L}}} 

\newcommand{\cN}{{\mathcal{N}}}

\newcommand{\cP}{{\mathcal{P}}}

\newcommand{\cS}{{\mathcal{S}}} 
\newcommand{\cT}{{\mathcal{T}}}
\newcommand{\cU}{{\mathcal{U}}}

\newcommand{\cW}{{\mathcal{W}}}
\newcommand{\cX}{{\mathcal{X}}}
\newcommand{\cY}{{\mathcal{Y}}}



\newcommand{\gA}{{\mathbf{A}}}

\newcommand{\gP}{{\mathbf{P}}}
\newcommand{\gQ}{{\mathbf{Q}}} 
\newcommand{\gR}{{\mathbf{R}}}
 
\newcommand{\gT}{{\mathbf{T}}}

\newcommand{\gZ}{{\mathbf{Z}}}

\newcommand{\bs}[1]{\boldsymbol{#1}}

\newcommand{\bsX}{{\bs{X}}}

%


\newcommand{\ggamma}{\bs{\gamma}}

\newlist{listi}{enumerate}{1}
\setlist[listi,1]{label=(\roman*),ref=(\roman*),align=left}
\newcommand{\eref}[1]{(\ref{#1})}

\newcommand{\1}{1\hskip-2.6pt{\rm l}}

\newcommand{\etc}[1]{#1_1,\ldots,#1_n}

\newcommand{\on}{^{\otimes n}}

\newcommand{\et}{^{\star}}

\newcommand{\eps}{{\varepsilon}}

\usepackage{booktabs}
\usepackage{footnote}
\usepackage{enumitem}
\usepackage{algorithm}
\usepackage{algorithmic}

\def\bsg{{\ggamma}}

\begin{document}
\title[Robust nonparametric regression using deep networks]{Robust nonparametric regression based on deep ReLU neural networks}
\author{Juntong CHEN}
\address{\parbox{\linewidth}{Department of Mathematics (DMATH),\\
University of Luxembourg\\
Maison du nombre\\
6 avenue de la Fonte\\
L-4364 Esch-sur-Alzette\\
Grand Duchy of Luxembourg}}
\email{\vspace{5pt}juntong.chen@uni.lu}
\keywords{Nonparametric regression, robust estimation, deep neural networks, circumventing the curse of dimensionality, supremum of an empirical process}
\subjclass[2010]{Primary 62G35, 62G05; Secondary 68T01}
\thanks{This project has received funding from the European Union's Horizon 2020 research and innovation programme under grant agreement N\textsuperscript{o} 811017}
\date{\today}
\begin{abstract}
In this paper, we consider robust nonparametric regression using deep neural networks with ReLU activation function. While several existing theoretically justified methods are geared towards robustness against identical heavy-tailed noise distributions, the rise of adversarial attacks has emphasized the importance of safeguarding estimation procedures against systematic contamination. We approach this statistical issue by shifting our focus towards estimating conditional distributions. To address it robustly, we introduce a novel estimation procedure based on $\ell$-estimation. Under a mild model assumption, we establish general non-asymptotic risk bounds for the resulting estimators, showcasing their robustness against contamination, outliers, and model misspecification. We then delve into the application of our approach using deep ReLU neural networks. When the model is well-specified and the regression function belongs to an $\alpha$-H\"older class, employing $\ell$-type estimation on suitable networks enables the resulting estimators to achieve the minimax optimal rate of convergence. Additionally, we demonstrate that deep $\ell$-type estimators can circumvent the curse of dimensionality by assuming the regression function closely resembles the composition of several H\"older functions. To attain this, new deep fully-connected ReLU neural networks have been designed to approximate this composition class. This approximation result can be of independent interest.

\end{abstract}
\maketitle

\section{Introduction}\label{intro}
A standard nonparametric regression model takes the form
\begin{equation*}
Y_{i} = f\et(W_{i})+\sigma\varepsilon_{i},\quad i=1,\ldots,n,
\end{equation*}
where $Y_{1},\ldots,Y_{n}$ are real-valued observations, $W_{1},\ldots,W_{n}$ are fixed or random design points in $\sW$ (typically $\sW\subset\R^{d}$ for some positive integer $d$), $\sigma$ is a known positive constant, $\varepsilon_{1},\ldots,\varepsilon_{n}$ are unobserved i.i.d. standard real-valued Gaussian random variables which are independent of $W_{1},\ldots,W_{n}$, and $f\et:\sW\rightarrow\R$ is an unknown regression function that we want to estimate.

A substantial body of literature addresses this problem through the minimization of empirical least squares loss functions. By integrating such a classical estimation approach with various approximation models, several methods have been developed and investigated. These include kernel regression (e.g., \cite{Nadaraya1964} and \cite{Watson1964}), local polynomial regression (e.g., \citeauthor{Fan1992} (\citeyear{Fan1992}, \citeyear{Fan1993})), spline-based regression (e.g., \cite{Wahba1990} and \cite{Friedman1991}), and wavelet-based regression (e.g., \cite{2345967} and \cite{1024691081}), among others. In-depth discussions on different methods and theories related to nonparametric regression can also be found in books such as \cite{Gyorfi2002} and \cite{Tsybakov2009}. Particularly, when $f\et:\cro{0,1}^{d}\rightarrow\R$ is of $\alpha$-smoothness, \cite{Stone1982} demonstrated that the minimax optimal convergence rate is of the order $n^{-2\alpha/(2\alpha+d)}$ with respect to some squared $\L_{2}$-loss. As the value of $d$ becomes large, the convergence rate can become extremely slow, which is a well-known phenomenon called the curse of dimensionality. One possible way to overcome this difficulty is to make additional structural assumptions on the regression function $f\et$ namely to assume that the unknown function $f\et$ is of the form $f_{1}\circ f_{2}$ where $f_{1}$ and $f_{2}$ have some specific structures (e.g., \cite{Stone1985}, \cite{Mammen2007} and \cite{Baraud:2011fk}). For instance, under the generalized additive structure of $f\et$, \cite{Mammen2007} showed that, one can estimate the regression function $f\et$ with rate $n^{-2\alpha/(2\alpha+1)}$ which is independent of the dimension $d$. 

Recently, estimation based on neural networks has demonstrated remarkable success in both experimental and practical domains. Inspiring work has been carried out to systematically analyze the theoretical properties of least squares estimators implemented by various structured neural networks, particularly those employing a ReLU activation function. We mention the work of \cite{Schmidt2020}, \cite{Sophie2021}, \cite{suzuki2019deep} and \cite{Jiao2023}, among others. Based on the established approximation results, these studies have revealed that least squares estimators implemented using appropriate neural network architectures achieve the same minimax convergence rate as that obtained in \cite{Stone1982} when considering a regression function $f\et$ with $\alpha$-smoothness. However, these findings also indicate that without further assumptions on the underlying model, nonparametric regression using deep neural networks is not immune to the curse of dimensionality. Much effort has been devoted to mitigating this issue through network-based estimation approaches (e.g., \cite{SchmidtMani}, \cite{Minshuo2022} and \cite{Nakada2020} where they assume that the distribution of $W$ is supported on a low-dimensional manifold, or the covariates exhibit a low intrinsic dimension, and \cite{Kohler2019}, \cite{suzuki2019iclr}, where structural assumptions are imposed on $f\et$). In particular, it is worth mentioning that, as shown in \cite{Schmidt2020}, neural networks, especially deep ones, exhibit a natural advantage in approximating functions with a compositional structure compared to classical approximation methods.


Given a collection of candidate estimators for $f\et$, most of the aforementioned approaches derive their estimators by minimizing a least-squares-based objective function. While possessing several desirable properties, least squares estimators are highly susceptible to data contamination and the presence of outliers, which are common scenarios encountered in practical applications. To address this issue of instability, several alternative approaches have been proposed in the context of linear regression, such as Huber regression (\cite{Huber1973}), Tukey's biweight regression (\cite{Beaton1974}) and the least absolute deviation regression (\cite{Bassett1978}). 

In the realm of deep learning, a prevailing characteristic is the presence of data abundant in quantity but often deficient in quality. As a result, robustness becomes an essential property to consider when implementing estimation procedures based on deep neural networks (\cite{Barron2019}). However, there has been significantly less research conducted in this field. In \cite{Lederer2020}, upper bounds for the expected excess risks of a specific class of estimators were established. These estimators are obtained by minimizing empirical risk using unbounded, Lipschitz-continuous loss functions on feedforward neural networks, covering cases such as the least absolute deviation loss, Huber loss, Cauchy loss, and Tukey's biweight loss. \cite{Jiao2023} investigated a similar class of estimators. They relaxed several assumptions required in \cite{Lederer2020}, which led to the establishment of their non-asymptotic expected excess risk bounds under milder conditions. They also considered the approximation error introduced by the ReLU neural network and demonstrated that the curse of dimensionality can be mitigated for such class of estimators if the distribution of $W$ is assumed to be supported on an approximately low-dimensional manifold. Drawing upon the approximation results established in \cite{Schmidt2020} and \cite{suzuki2019iclr}, \cite{Padilla2022} examined the properties of quantile regression using deep ReLU neural networks. When the underlying quantile function can be represented as a composition of H\"older functions or when it belongs to a Besov space, they derived convergence rates for the resulting estimators in terms of the mean squared error at the design points. All the previously mentioned work that addresses robust nonparametric regression using deep neural networks assumes the existence of the regression function $f\et$. The approaches they considered and analyzed focus on the robustness under the scenarios where there is a departure from Gaussian distributions to heavy-tailed distributions. When it comes to the case of adversarial attacks, where the statistical model is misspecified from a distributional perspective, their results are unable to provide a theoretical guarantee for the performance of the resulting estimators.

In this paper, we approach the nonparametric regression problem from a novel perspective that acknowledges the possibility of misspecification at the distributional level. We propose a general procedure under mild assumptions to address this problem in a robust manner and investigate its application to ReLU neural networks. Specifically, our primary contributions are as follows.

\begin{enumerate}[label=(\roman*)]
  \item We consider this estimation problem from the perspective of estimating the conditional distributions $Q_{i}\et(W_{i})$ of $Y_{i}$ given $W_{i}$. To handle this statistical issue, we propose an $\ell$-type estimation procedure based on a development of $\ell$-estimation methodology proposed in \cite{Baraud2021}. Our approach is based on the presumption that there exists an underlying function $f\et$ on $\sW$ belonging to some collection $\overline\cF$ such that $Q_{i}\et(W_{i})$ is of the form $Q_{f\et(W_{i})}\sim\cN(f\et(W_{i}),\sigma^{2})$ for all $i\in\{1,\ldots,n\}$. However, our method is not confined to this assumption. In other words, we allow our statistical models to be slightly misspecified: $Q_{i}\et(W_{i})$ may not be exactly of the form $Q_{f\et(W_{i})}$ and even if they were, $f\et$ may not belong to the class $\overline\cF$. 
  \item Assuming that $\overline\cF$ is a VC-subgraph class on $\sW$, we derive a non-asymptotic risk bound for the resulting estimators, measured in terms of the total-variation type distance. Building upon this general result, we offer a comprehensive elucidation of the robustness of our estimators with regard to model misspecification at the distributional level. We also provide a quantitative comparison between the $\ell$-type estimators and another type of robust estimators known as $\rho$-estimators, which were introduced in \cite{Baraud2020}.
  \item We showcase the application of our $\ell$-type estimation procedure using ReLU neural network models. In the case of a well-specified model, we derive uniform risk bounds over H\"older classes for our estimators. By incorporating the lower bounds that we established, we demonstrate that the resulting estimators achieve the minimax optimal rate of convergence.
  
 \item We consider the problem of circumventing the curse of dimensionality by imposing structural assumptions on the underlying regression function $f\et$. More precisely, we assume the function $f\et$ can be expressed as a composition of several H\"older functions, following the consideration in \cite{Schmidt2020}. In contrast to using sparsity-based ReLU neural networks as in \cite{Schmidt2020}, we develop new deep fully-connected ReLU neural networks to approximate composite H\"older functions, enhancing the informativeness of the architectural design. This approximation result can be of independent interest. By leveraging the derived approximation theory, we demonstrate that the $\ell$-type estimators implemented based on appropriate network models can alleviate the curse of dimensionality while converging to the truth at a minimax optimal rate.
 
\end{enumerate}

The paper is organized as follows. In Section~\ref{stat-setting}, we describe our specific statistical framework and set notation. In Section~\ref{ell-construction-section}, we introduce our estimation procedure based on $\ell$-estimation and present our main result regarding the risk bounds for the resulting estimators. We also provide an explanation of why the deviation inequality we establish ensures the desired robustness property of the estimators and compare them with the $\rho$-estimators in that section. In Section~\ref{neural}, we delve into the implementation of our $\ell$-type estimation approach on ReLU neural networks. We establish uniform risk bounds over H\"older classes when the data are truly i.i.d. and the regression function exists. By combining the lower bounds we have derived, we demonstrate the minimax optimality of our estimators under the well-specified scenario. The problem of circumventing the curse of dimensionality is addressed in Section~\ref{curse}, where we impose structural assumptions on the regression function $f\et$. Section~\ref{proof-paper} is devoted to most of the proofs in this paper.

\section{The statistical setting}\label{stat-setting}
Let $X_{i}=(W_{i},Y_{i})$, for $i\in\{1,\ldots,n\}$ be $n$ pairs of independent, but not necessarily i.i.d., random variables with values in a measurable product space $(\sX,\cX)=(\sW\times\sY,\cW\otimes\cY)$. Denote the set of all probabilities on $(\sY,\cY)$ as $\sT$. We assume that the conditional distribution of $Y_{i}$ given $W_{i}=w_{i}$ exists and is given by the value at $w_{i}$ of a measurable function $Q_{i}\et$ from $(\sW,\cW)$ to $\sT$. We endow $\sT$ with the Borel $\sigma$-algebra $\cT$ associated with the total variation distance. Recall that when given two probabilities $P_{1}$ and $P_{2}$ on a measurable space $(A,\sA)$, the total variation distance $\|P_{1}-P_{2}\|_{TV}$ between $P_{1}$ and $P_{2}$ is defined as
\begin{equation*}
\|P_{1}-P_{2}\|_{TV}=\sup_{\cA\in\sA}\cro{P_{1}(\cA)-P_{2}(\cA)}=\frac{1}{2}\int_{A}\left|\frac{dP_{1}}{d\mu}-\frac{dP_{2}}{d\mu}\right|d\mu,
\end{equation*}
where $\mu$ is any reference measure that dominates both $P_{1}$ and $P_{2}$. With this chosen $\cT$, for any $i\in\{1,\ldots,n\}$, the mapping $w\mapsto\|Q_{i}\et(w)-R\|_{TV}$ on $(\sW,\cW)$ is measurable for any probability $R\in\sT$.
%

Given a class of real-valued measurable functions $\overline\cF$ on $\sW$, we presume that, there exists a function $f\et\in\overline\cF$ for which the conditional distributions $Q_{i}\et(W_{i})$ have the structure of $Q_{f\et(W_{i})}=\cN(f\et(W_{i}),\sigma^{2})$ or are at least in close proximity to it. The function $f\et$ is what we refer to as the regression function. It is worth emphasizing, as we mentioned in Section~\ref{intro}, that our statistical model could potentially be misspecified: the conditional distributions $Q_{i}\et(W_{i})$ might not precisely take the form $Q_{f\et(W_{i})}$, or even if they did, the regression function $f\et$ might not belong to the class $\overline\cF$. What we are truly assuming is that the collection $\{Q_{f},\; f\in\overline\cF\}$ provides a suitable approximation of the actual conditional distributions $Q_{i}\et$, for $i\in\{1,\ldots,n\}$. 

Let $\sQ_{\sW}$ represent the collection of all conditional probabilities from $(\sW,\cW)$ to $(\sT,\cT)$, and define $\sbQ_{\sW}=\sQ_{\sW}^{n}$. As a direct result, we obtain the $n$-tuple $\gQ\et=(Q\et_{1},\ldots,Q\et_{n})\in\sbQ_{\sW}$. We equip the space $\sbQ_{\sW}$ with a distance metric resembling the total variation distance.
More precisely, for $\gQ=(Q_{1},\ldots,Q_{n})$ and $\gQ'=(Q'_{1},\ldots,Q'_{n})$ in $\sbQ_{\sW}$, 
\begin{align}
\ell(\gQ,\gQ')&=\frac{1}{n}\E\cro{\sum_{i=1}^{n}\|Q_{i}(W_{i})-Q'_{i}(W_{i})\|_{TV}}\nonumber\\
&=\frac{1}{n}\sum_{i=1}^{n}\int_{\sW}\|Q_{i}(w)-Q'_{i}(w)\|_{TV}dP_{W_{i}}(w).\label{tv-type-loss}
\end{align}
Particularly, when $\ell(\gQ,\gQ')=0$, it signifies that $Q_{i}=Q'_{i}$ $P_{W_{i}}$-a.s., for all $i$.

Building on the $n$ observations $\bsX=(X_{1},\ldots,X_{n})$, we will introduce an estimation approach in the later section to develop an estimator $\widehat f(\bsX)\in\overline\cF$ for the potential regression function $f\et$ (which may not exist). Furthermore, we aim to estimate the $n$-tuple $\gQ\et=(Q\et_{1},\ldots,Q\et_{n})$ by means of the structure $\gQ_{\widehat f}=(Q_{\widehat f},\ldots,Q_{\widehat f})$. We assess the performance of the estimator $\gQ_{\widehat f}$ for $\gQ\et$ through the measure $\ell(\gQ\et,\gQ_{\widehat f})$.

We denote $P = Q\cdot P_{W}$ when $P$ represents the distribution of a random variable $(W,Y)\in\sW\times\sY$, where the marginal distribution of $W$ is $P_{W}$ and the conditional distribution of $Y$ given $W$ is $Q$. One can observe that when $P_{1}=Q_{1}\cdot P_{W}$ and $P_{2}=Q_{2}\cdot P_{W}$, the total variation distance between $P_{1}$ and $P_{2}$ can be represented as
\[
\|P_{1}-P_{2}\|_{TV}=\int_{\sW}\|Q_{1}(w)-Q_{2}(w)\|_{TV}dP_{W}(w).
\]
By defining $P_{i}\et=Q_{i}\et\cdot P_{W_{i}}$ and $P_{i,f}=Q_{f}\cdot P_{W_{i}}$ for a measurable function $f$ that maps $\sW$ to $\R$, we can represent $\ell(\gQ\et,\gQ_{f})$ as the average total variation distance over $n$ samples:
%
\begin{equation}\label{tv-connection-pseudo}
\ell(\gQ\et,\gQ_{f})=\frac{1}{n}\sum_{i=1}^{n}\|P_{i}\et-P_{i,f}\|_{TV}.
\end{equation}


In the case where $W_{i}$ are i.i.d. with the common distribution $P_{W}$ and $Q\et_{i}=Q\et$ for all $i\in\{1,\ldots,n\}$, we may slightly abuse the notation $\ell(Q\et,Q_{\widehat f})$ to measure the distance between $Q\et$ and $Q_{\widehat f}$ defined as 
\begin{equation}\label{n1-ell-def}
\ell(Q\et,Q_{\widehat f})=\int_{\sW}\|Q\et(w)-Q_{\widehat f(w)}\|_{TV}dP_{W}(w).
\end{equation}

%
%

We conclude this section by introducing some notations that will be useful later. We denote $\N^{*}$ the set of all positive natural numbers and $\R_{+}^{*}$ the set of all positive real numbers. For any $x\in\R$, we use the notation $\lfloor x\rfloor$ to represent the largest integer strictly smaller than $x$, and the notation $\lceil x\rceil$ to represent the least integer greater than or equal to $x$. Given any set $J$, we denote its cardinality by $|J|$. For a $\gR\in\sbQ_{\sW}$ and a set $\gA\subset\sbQ_{\sW}$, we define $\ell(\gR,\gA)=\inf_{\gR'\in\gA}\ell(\gR,\gR')$. Unless otherwise specified, $\log$ denotes the logarithm function with base $e$. Let $(E,\cE)$ be a measurable space and $\mu$ be a $\sigma$-finite measure on $(E,\cE)$. For $k\in\cro{1,+\infty}$, we define $\cL_{k}(E,\mu)$ the collection of all the measurable functions $f$ on $(E,\cE,\mu)$ such that $\|f\|_{k,\mu}<+\infty$, where
\begin{equation*}
\|f\|_{k,\mu}=\left\{
\begin{aligned}
&\left(\int_{E}|f|^{k}d\mu\right)^{1/k},&\mbox{for\ }k\in[1,+\infty),\\
&\inf\{K>0,\;|f|\leq K\;\mu-\mbox{a.e.}\},&\mbox{for\ }k=\infty.
\end{aligned}
\right.
\end{equation*}
We denote the associated equivalent classes as $\L_{k}(E,\mu)$ where any two functions coincide for $\mu$-a.e. can not be distinguished. In particular, we write the norm $\|\cdot\|_{k}$ with $k\in\cro{1,+\infty}$ when $\mu=\lambda$ is the Lebesgue measure. Throughout the paper, $c$ or $C$ denotes positive numerical constant which may vary from line to line.

\section{$\ell$-Type estimation under regression setting}\label{ell-construction-section}
We employ an $\ell$-type estimator, drawing inspiration from the concepts outlined in a series of papers presented in \cite{Baraud2021} within a general framework, as well as from the content of \cite{Baraud2022}, which is specifically dedicated to density estimation.
Consider a set of $n$ independent random variables denoted as $X_{1},\ldots,X_{n}$, where their values are drawn from a measured space $(\sX,\cX)$. In essence, $\ell$-estimation offers a versatile approach to acquiring a robust estimator for the actual joint distribution $\gP\et$ of $X_{1},\ldots,X_{n}$. The established $\ell$-estimation approach begins by introducing a set of potential probabilities $\overline\sbP$, intended to offer a suitable approximation of $\gP\et$. The primary challenge in implementing $\ell$-estimation within a regression framework lies in the absence of information concerning the marginal distributions $P_{W_{i}}$ required for constructing candidate probabilities and designing the estimation procedure. Moreover, our objective does not encompass the task of estimating these marginal distributions. In this scenario, further effort is necessary to implement $\ell$-type estimation and establish a risk bound for the resulting estimator.

\subsection{Constructing the $\ell$-type estimator}
Let $\overline\cF$ be a collection of real-valued measurable functions on $\sW$, which we call it a model. For any $f\in\overline\cF$, we denote $Q_{f}$ the conditional Gaussian distribution induced by the function $f$, i.e., given any $w\in\sW$, $Q_{f(w)}$ is a normal distribution centered around $f(w)$, with a variance of $\sigma^2$, and denote $q_{f(w)}$ the density function of the Gaussian distribution $Q_{f(w)}$ with respect to the Lebesgue measure. To prevent any measurability issue, we introduce the notation $\cF$, representing either a finite or, at most, a countable subset of $\overline\cF$. Subsequently, the majority of our discussion will be focused on the set $\cF$. Nevertheless, as we delve into further details, it turns out that through careful choice of $\cF$, no approximation power will be sacrificed in comparison to estimations based on $\overline{\cF}$.

Given $f_{1},f_{2}\in\cF$, we define for any $(w,y)\in\sW\times\sY$, $$t_{(f_{1},f_{2})}(w,y)=
\1_{q_{f_{2}(w)}(y)>q_{f_{1}(w)}(y)}-Q_{f_{1}(w)}\left(q_{f_{2}(w)}>q_{f_{1}(w)}\right).$$
Employing the function $t_{(f_{1},f_{2})}(\cdot,\cdot)$ produces the following inequalities.
\begin{lem}\label{t-property}
Let $P\et=Q\et\cdot P_{W}$ represent the distribution of a pair of random variables $(W,Y)\in\sW\times\sY$, where the first marginal distribution is $P_{W}$, and the conditional distribution of $Y$ given $W$ is denoted by $Q\et$. For any $f_{1},f_{2}\in\cF$, any $P_{W}$ and any $Q\et\in\sQ_{\sW}$, we have
\begin{equation}\label{t-preproperty}
\ell(Q_{f_{1}},Q_{f_{2}})-\ell(Q\et,Q_{f_{2}})\leq\E_{P\et}\cro{t_{(f_{1},f_{2})}(W,Y)}\leq\ell(Q\et,Q_{f_{1}}).
\end{equation}
\end{lem}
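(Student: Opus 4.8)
The plan is to work conditionally on $W$ and reduce everything to a one-dimensional statement about Gaussian densities, then integrate against $P_{W}$. Fix $w\in\sW$ and write $A_{w}=\{y:q_{f_{2}(w)}(y)>q_{f_{1}(w)}(y)\}$. The key observation is that for any two probabilities $P,R$ on $(\sY,\cY)$ and any set $A$, one has $P(A)-R(A)\le\|P-R\|_{TV}$, and moreover $\|Q_{f_{1}(w)}-Q_{f_{2}(w)}\|_{TV}=Q_{f_{2}(w)}(A_{w})-Q_{f_{1}(w)}(A_{w})$ exactly, because $A_{w}$ is precisely the set on which the density of $Q_{f_{2}(w)}$ dominates that of $Q_{f_{1}(w)}$ — this is the standard variational characterization of total variation as attained on the likelihood-ratio super-level set. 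So first I would record these two elementary facts about $\|\cdot\|_{TV}$.

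Next, compute $\E_{P\et}[t_{(f_{1},f_{2})}(W,Y)]$ by conditioning on $W=w$: the conditional expectation is $Q\et(w)(A_{w})-Q_{f_{1}(w)}(A_{w})$. For the right-hand inequality, bound $Q\et(w)(A_{w})-Q_{f_{1}(w)}(A_{w})\le\|Q\et(w)-Q_{f_{1}(w)}\|_{TV}$ and integrate against $P_{W}$, which gives exactly $\ell(Q\et,Q_{f_{1}})$ by the definition \eqref{n1-ell-def}. For the left-hand inequality, write
\[
Q\et(w)(A_{w})-Q_{f_{1}(w)}(A_{w})=\bigl[Q_{f_{2}(w)}(A_{w})-Q_{f_{1}(w)}(A_{w})\bigr]-\bigl[Q_{f_{2}(w)}(A_{w})-Q\et(w)(A_{w})\bigr].
\]
The first bracket equals $\|Q_{f_{1}(w)}-Q_{f_{2}(w)}\|_{TV}$ by the exact identity above, and the second bracket is at most $\|Q_{f_{2}(w)}-Q\et(w)\|_{TV}$ by the crude bound. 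Hence the conditional expectation is $\ge\|Q_{f_{1}(w)}-Q_{f_{2}(w)}\|_{TV}-\|Q\et(w)-Q_{f_{2}(w)}\|_{TV}$; integrating against $P_{W}$ and using \eqref{n1-ell-def} again yields $\ell(Q_{f_{1}},Q_{f_{2}})-\ell(Q\et,Q_{f_{2}})$, as desired.

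The only genuine subtlety — and the step I would be most careful about — is the exact identity $\|Q_{f_{1}(w)}-Q_{f_{2}(w)}\|_{TV}=Q_{f_{2}(w)}(A_{w})-Q_{f_{1}(w)}(A_{w})$: one must check that the supremum in the definition of $\|\cdot\|_{TV}$ is attained on $A_{w}$ (equivalently, that $\{q_{f_{2}(w)}>q_{f_{1}(w)}\}$ and $\{q_{f_{2}(w)}\ge q_{f_{1}(w)}\}$ give the same value, which holds since $\{q_{f_{2}(w)}=q_{f_{1}(w)}\}$ carries no mass difference). For Gaussians with common variance $\sigma^{2}$, if $f_{1}(w)=f_{2}(w)$ both sides are $0$; otherwise $A_{w}$ is a half-line and the identity is immediate from the $\frac12\int|dP_{1}/d\mu-dP_{2}/d\mu|\,d\mu$ formula. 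Everything else is bookkeeping with Fubini (justified by measurability of $w\mapsto\|Q_{i}\et(w)-R\|_{TV}$ noted in Section~\ref{stat-setting}) and the definitions of $\ell$ and $t_{(f_{1},f_{2})}$.
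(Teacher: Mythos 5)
Your proposal is correct and takes essentially the same approach as the paper: condition on $W=w$, use the crude bound $P(A)-R(A)\le\|P-R\|_{TV}$ for the right-hand inequality, use the exact identity $\|P-R\|_{TV}=R(r>p)-P(r>p)$ together with the crude bound for the left-hand inequality, and integrate against $P_W$ to get $\ell$. Your careful remark about $\{q_{f_2}>q_{f_1}\}$ versus $\{q_{f_2}\ge q_{f_1}\}$ (null set for equal-variance Gaussians when $f_1(w)\ne f_2(w)$, vacuous otherwise) is a valid and welcome clarification of a point the paper takes for granted by citing the standard identity.
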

The proof of Lemma~\ref{t-property} is deferred to Section~\ref{Lem-t-property}. Lemma~\ref{t-property} implies that the family of test statistics $t_{(f_{1},f_{2})}$ holds information concerning the $\ell$-type distance between two of $Q_{f_{1}}$, $Q_{f_{2}}$, and $Q\et$, which is an essential property for constructing our final estimator.

For any $f_{1},f_{2}\in\cF$ and $n$ pairs of observations $\bsX=(X_{1},\ldots,X_{n})$ with $X_{i}=(W_{i},Y_{i})$, $i\in\{1,\ldots,n\}$, we design the function
\[
\gT_{l}(\bsX,f_{1},f_{2})=\sum_{i=1}^{n}t_{(f_{1},f_{2})}(W_{i},Y_{i})
\]
and set
\[
\gT_{l}(\bsX,f_{1})=\sup_{f_{2}\in\cF}\gT_{l}(\bsX,f_{1},f_{2}).
\]
Our final estimator of $\gQ\et=(Q\et_{1},\ldots,Q\et_{n})$ is defined as $\gQ_{\widehat f}=(Q_{\widehat f},\ldots, Q_{\widehat f})$, where $\widehat f(\bsX)$ is an $\epsilon$-minimizer over $\cF$ of the map
$f_{1}\mapsto\gT_{l}(\bsX,f_{1})$. More precisely, given $\epsilon>0$, the $\ell$-type estimator within the set $\cF$ is defined as any measurable function $\widehat f(\bsX)$ of the random (and non-void) set 
\begin{equation*}\label{def-l}
\sE(\bsX,\epsilon)=\left\{f\in\cF,\;\gT_{l}(\bsX,f)\leq\inf_{f'\in\cF}\gT_{l}(\bsX,f')+\epsilon\right\}.
\end{equation*}
\begin{rmk}
The parameter $\epsilon$ is devised to ensure the existence of the estimator $\widehat f$. As we will explore in Section~\ref{performance-ell}, it is prudent to choose a relatively small value for $\epsilon$, specifically not significantly greater than 1, as this choice improves the risk bound of an $\ell$-type estimator. Specifically, when a function $f\in\cF$ exists such that $\gT_{l}(\bsX,f)=\inf_{f'\in\cF}\gT_{l}(\bsX,f')$, it is advisable to prioritize this $f$ as the estimator $\widehat f$.

Furthermore, considering that $\gT_{l}(\bsX,f)\geq\gT_{l}(\bsX,f,f)=0$ for all $f\in\cF$, any function $\widehat f\in\cF$ meeting the condition $0\leq\gT_{l}(\bsX,\widehat f)\leq\epsilon$ qualifies as an $\ell$-type estimator.

\end{rmk}
\subsection{The performance of the $\ell$-type estimator}\label{performance-ell}
Before delving into the theoretical performance of our $\ell$-type estimator, we lay the foundation by stating our main assumption on the model $\overline\cF$. To facilitate this, we introduce the following definition:
\begin{df}[VC-subgraph]\label{vcsubgraph}
An (open) subgraph of a function $f$ in $\overline\cF$ is the subset of $\sW\times\R$ given by
\begin{equation*}
\sC_{f}=\left\{(w,u)\in\sW\times\R,\;f(w)>u\right\}.
\end{equation*}
A collection $\overline\cF$ of real-valued measurable functions on $\sW$ is VC-subgraph with dimension not larger than $V$ if, for any finite subset $\cS\subset\sW\times\R$ with $|\cS|=V+1$, there exists at least one subset $S$ of $\cS$ such that for any $f\in\overline\cF$, $S$ is not the intersection of $\cS$ with $\sC_{f}$, i.e.
\begin{equation*}
S\ne \cS\cap\sC_{f}\quad \text{whatever $f\in\overline\cF$.}
\end{equation*}
\end{df}
Herein, we proceed to introduce our primary assumption concerning the model $\overline\cF$.
\begin{ass}\label{vcass}
The class of functions $\overline\cF$ is VC-subgraph on $\sW$ with dimension not larger than $V\geq1$.
\end{ass}
Encompassing a range of widely employed examples, Assumption~\ref{vcass} is formulated under a considerably broad scope. For instance, when $\overline\cF$ is contained in a linear space with finite dimension $D$, Assumption~\ref{vcass} is fulfilled with $V=D+1$ according to Lemma~2.6.15 of \cite{MR1385671}. Moreover, when $\overline{\cF}$ represents a fully connected ReLU neural network, it has been demonstrated in \cite{BarlettJMLR} [Theorem~7] that the VC-dimension of $\overline{\cF}$ is linked to the depth and width of the network. Further elaboration on $\ell$-estimation based on neural networks will be provided in Section~\ref{neural} and Section~\ref{curse}.

Building upon Assumption~\ref{vcass}, we can establish the following non-asymptotic exponential inequalities for the upper deviations of a total variation type distance between the true distribution of the data and the estimated one based on $\widehat f(\bsX)$.
\begin{thm}\label{general}
Under Assumption~\ref{vcass}, whatever the conditional distributions $\gQ\et=(Q\et_{1},\ldots,Q\et_{n})$ of the $Y_{i}$ given $W_{i}$ and the distributions of $W_{i}$, any $\ell$-type estimator $\widehat f$ based on the class $\cF$ satisfies that for any $\overline f\in\cF$ and any $\xi>0$, with a probability at least $1-e^{-\xi}$,
\begin{equation}\label{exp-inequa-everypoint}
\ell(\gQ_{\overline f},\gQ_{\widehat f})\leq2\ell(\gQ\et,\gQ_{\overline f})+291.2\sqrt{\frac{V}{n}}+14573.4\frac{V}{n}+\sqrt{\frac{8(\xi+\log2)}{n}}+\frac{\epsilon}{n}.
\end{equation}
In particular, with the triangle inequality,
\begin{equation}\label{exp-inequa}
\ell(\gQ\et,\gQ_{\widehat f})\leq3\ell(\gQ\et,\sbQ)+291.2\sqrt{\frac{V}{n}}+14573.4\frac{V}{n}+\sqrt{\frac{8(\xi+\log2)}{n}}+\frac{\epsilon}{n},
\end{equation}
where $\sbQ=\{\gQ_{f},\; f\in\cF\}$. As a consequence of \eref{exp-inequa}, for any $n\geq V$, integration with respect to $\xi>0$ yields the following risk bound for the resulting estimator $\gQ_{\widehat f}=(Q_{\widehat f},\ldots,Q_{\widehat f})$ 
\begin{equation}\label{riskbound}
\E\cro{\ell(\gQ\et,\gQ_{\widehat f})}\leq C_{\epsilon}\cro{\ell(\gQ\et,\sbQ)+\sqrt{\frac{V}{n}}},
\end{equation}
where $C_{\epsilon}>0$ is a numerical constant depending on $\epsilon$ only.
\end{thm}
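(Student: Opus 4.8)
The plan is to follow the standard architecture of $\ell$-estimation / $\rho$-estimation arguments: first derive a deterministic ``basic inequality'' for the estimator from its defining property as an $\epsilon$-minimizer of $\gT_l(\bsX,\cdot)$, then control the stochastic fluctuation term by a concentration inequality for the supremum of an empirical process whose index class is built from the test functions $t_{(f_1,f_2)}$, and finally invoke Assumption~\ref{vcass} to bound the expected supremum. The deterministic step goes as follows. Fix $\overline f\in\cF$. Since $\widehat f\in\sE(\bsX,\epsilon)$, we have $\gT_l(\bsX,\widehat f)\le\gT_l(\bsX,\overline f)+\epsilon$, hence $\gT_l(\bsX,\widehat f,\overline f)\le\gT_l(\bsX,\widehat f)\le\gT_l(\bsX,\overline f)+\epsilon=\sup_{f_2}\gT_l(\bsX,\overline f,f_2)+\epsilon$. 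Using antisymmetry of $\gT_l$ in its last two arguments (which follows because $t_{(f_1,f_2)}+t_{(f_2,f_1)}$ is deterministic — actually equal to a constant in $(w,y)$, up to the centering terms; this needs a short check from the definition of $t$) one rewrites $\gT_l(\bsX,\widehat f,\overline f)$ and $\sup_{f_2}\gT_l(\bsX,\overline f,f_2)$ in terms of a single index $f$ ranging over $\cF$. The key quantity to introduce is the centered process
\[
Z(f_1,f_2)=\gT_l(\bsX,f_1,f_2)-\sum_{i=1}^n\E_{P_i\et}\!\left[t_{(f_1,f_2)}(W_i,Y_i)\right].
\]
Combining the minimizing inequality with the two-sided bound of Lemma~\ref{t-property} applied termwise (summing \eqref{t-preproperty} over $i$ and using \eqref{tv-connection-pseudo}) yields, after rearrangement, an inequality of the shape
\[
n\,\ell(\gQ_{\overline f},\gQ_{\widehat f})\le 2n\,\ell(\gQ\et,\gQ_{\overline f})+2\sup_{f_1,f_2\in\cF}|Z(f_1,f_2)|+\epsilon,
\]
so that everything reduces to controlling $\sup_{f_1,f_2\in\cF}|Z(f_1,f_2)|$.

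For the stochastic step I would apply a bounded-difference / McDiarmid concentration inequality to $G(\bsX):=\sup_{f_1,f_2\in\cF}|Z(f_1,f_2)|$: each $t_{(f_1,f_2)}$ takes values in $[-1,1]$ (indicator minus a probability), so changing one coordinate $X_i$ changes $G$ by at most $2$, giving $G\le\E[G]+\sqrt{2n(\xi+\log 2)}$ with probability at least $1-e^{-\xi}$ — this matches the $\sqrt{8(\xi+\log2)/n}$ term in \eqref{exp-inequa-everypoint} after dividing by $n$. It remains to bound $\E[G]$. By symmetrization $\E[G]\le 2\,\E\big[\sup_{f_1,f_2\in\cF}|\sum_i\varrho_i t_{(f_1,f_2)}(W_i,Y_i)|\big]$ with Rademacher signs $\varrho_i$, and then a chaining / uniform-entropy bound applies once we know the relevant class has controlled VC dimension. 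Here is where Assumption~\ref{vcass} enters: the class of sets $\{q_{f_2(w)}>q_{f_1(w)}\}$ indexed by $(f_1,f_2)\in\overline\cF^2$ is a VC class of sets in $\sW\times\R$ whose VC dimension is bounded by a constant multiple of $V$ — this is because for fixed $w$ the inequality $q_{f_2(w)}(y)>q_{f_1(w)}(y)$ between two Gaussian densities with common variance reduces to a \emph{linear} inequality in $y$ with coefficients that are affine functions of $f_1(w)$ and $f_2(w)$ (specifically $\mathrm{sign}(f_2(w)-f_1(w))\,y$ versus $(f_1(w)+f_2(w))/2$), so the subgraph-type condition on $\overline\cF$ propagates to these sets. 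A standard uniform-entropy bound (Dudley's integral, or the form used in \cite{Baraud2021}/\cite{Baraud2022}) then gives $\E[G]\le C_1\sqrt{nV}+C_2 V$, which after division by $n$ produces the $291.2\sqrt{V/n}+14573.4\,V/n$ terms; the explicit numerical constants come from tracking the constants in the entropy bound and in symmetrization, exactly as in the cited $\ell$-estimation references.

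Assembling the pieces: dividing the displayed inequality by $n$ and substituting the high-probability bound on $G/n$ gives \eqref{exp-inequa-everypoint}; taking $\overline f$ to (nearly) achieve the infimum $\ell(\gQ\et,\sbQ)=\inf_{f\in\cF}\ell(\gQ\et,\gQ_f)$ and applying the triangle inequality $\ell(\gQ\et,\gQ_{\widehat f})\le\ell(\gQ\et,\gQ_{\overline f})+\ell(\gQ_{\overline f},\gQ_{\widehat f})$ gives \eqref{exp-inequa} with the factor $3$; finally, since \eqref{exp-inequa} holds with probability $1-e^{-\xi}$ for every $\xi>0$ and the left side is a nonnegative random variable, integrating the tail bound $\int_0^\infty \P(\ell(\gQ\et,\gQ_{\widehat f})>\cdots)\,d\xi$ and using $\int_0^\infty\sqrt{\xi}\,e^{-\xi}d\xi<\infty$ together with $\epsilon/n\le \epsilon$ and $V/n\le\sqrt{V/n}$ for $n\ge V$ collapses all the $n$-dependent terms into a single $\sqrt{V/n}$, yielding \eqref{riskbound} with $C_\epsilon$ absorbing $\epsilon$ and all numerical constants. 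The main obstacle I anticipate is the chaining/VC step for $\E[G]$ — specifically verifying cleanly that the level sets of the Gaussian density ratio form a VC class whose dimension is $O(V)$, and then carrying the explicit constants through symmetrization and the entropy integral to recover the stated numerical values; handling the non-i.i.d.\ structure (the $W_i$ need not share a law) is a minor additional bookkeeping point, but it causes no real trouble since all bounds above are applied termwise and the concentration inequality only uses independence, not identical distribution.
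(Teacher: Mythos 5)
Your overall architecture --- a deterministic basic inequality from the $\epsilon$-minimizing property, McDiarmid concentration for the supremum of the centered process, and a VC/entropy bound for its mean --- is the same skeleton the paper uses. But the stochastic step has a genuine gap, in two places.

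First, $t_{(f_1,f_2)}(w,y)$ is not an indicator: it is the indicator $\1_{q_{f_2(w)}(y)>q_{f_1(w)}(y)}$ \emph{minus} the conditional probability $Q_{f_1(w)}\bigl(q_{f_2(w)}>q_{f_1(w)}\bigr)$. After symmetrization you would need to control not only the Rademacher process over indicators of sets $\{q_{f_2}>q_{f_1}\}$, but also the process over the smooth functions $w\mapsto Q_{f_1(w)}(q_{f_2(w)}>q_{f_1(w)})$, which are $[0,1]$-valued but not indicators, so a plain VC-for-sets entropy bound (your Dudley step) does not apply directly. The paper handles this explicitly: Proposition~\ref{Q-equality} rewrites $t_{(f_1,f_2)}=g_{(f_1,f_2)}-h_{(f_1,f_2)}+k_{(f_1,f_2)}$ where $g$ is the set-indicator, $h_{(f_1,f_2)}(w)=\Phi(-|f_1(w)-f_2(w)|/2\sigma)$, and $k$ is the half-indicator of $\{f_1=f_2\}$. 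The $g$- and $k$-pieces are set-indicators and are bounded via Lemma~\ref{Baraud2016prop}; the $h$-piece is a $[0,1]$-valued VC-subgraph class (not a VC class of sets) and requires the separate Lemma~\ref{supemp}, which is exactly the source of the extra additive $V/n$ term. Your sketch silently drops the $h$ and $k$ contributions.

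Second, your VC claim is stated for the two-parameter class $\{(w,y):q_{f_2(w)}(y)>q_{f_1(w)}(y)\}_{(f_1,f_2)\in\cF^2}$, but the argument that reduces this to a ``linear inequality in $y$ with coefficients affine in $f_1(w),f_2(w)$'' is not quite correct: the boundary is $(f_1(w)+f_2(w))/2$ but the \emph{direction} of the inequality flips with $\mathrm{sign}(f_1(w)-f_2(w))$, so the set is not a subgraph of an affine combination. More importantly, the paper never needs the two-parameter class at all: Lemma~\ref{thmlemma} keeps $\overline f$ fixed, so the relevant supremum is $\gZ(\bsX,\overline f)$ over the single index $f'\in\cF$, and Propositions~\ref{vcsubsets} and~\ref{vcequa} establish VC bounds only for the one-parameter families $\sC_{\pm}(\cF,\overline f)$ and $\sC_{=}(\cF,\overline f)$, via the exponential-family representation $\widetilde q_{f}$ and Proposition~\ref{vcconnect}. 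Working with the one-parameter class is what makes the difference-of-functions argument (via Lemma~2.6.18 of van~der~Vaart--Wellner, which covers subtracting a \emph{fixed} function) go through; for the full two-parameter class you would need a separate, nontrivial bound on the VC dimension of $\{\widetilde q_{f_2}-\widetilde q_{f_1}\}_{(f_1,f_2)}$.

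Minor: the antisymmetry of $\gT_l$ you invoke to ``rewrite in terms of a single index'' is not used in the paper and does not hold cleanly (the sum $t_{(f_1,f_2)}+t_{(f_2,f_1)}$ is $1$ minus the two centering probabilities only where $f_1(w)\ne f_2(w)$, and $0$ on $\{f_1=f_2\}$, so it is not a constant in $(w,y)$); the paper's Lemma~\ref{thmlemma} avoids this by bounding $\gT_l(\bsX,\overline f)$ through Lemma~\ref{t-property} directly. The rest of your plan (McDiarmid to get $\sqrt{8(\xi+\log2)/n}$, triangle inequality to pass from \eqref{exp-inequa-everypoint} to \eqref{exp-inequa}, and integration in $\xi$ for \eqref{riskbound}) matches the paper.
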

The proof of Theorem~\ref{general} is deferred to Section~\ref{general-thm-proof}. Let us now provide some remarks regarding this result.
\begin{rmk}
Consider the set $\overline\sbQ=\{\gQ_{f},\;f\in\overline\cF\}$. It is clear that if $\sbQ$ is dense in $\overline\sbQ$ with respect to the (pseudo) distance $\ell$, both \eref{exp-inequa} and \eref{riskbound} also remain valid when replacing $\sbQ$ with $\overline\sbQ$. This is the situation in which the subset $\cF$ is dense in $\overline\cF$ with respect to the topology of pointwise convergence. For further insights in this direction, we refer to Section 4.2 of \cite{Baraud2018}. For the sake of simplicity in our explanation, let us temporarily assume in this section that $\sbQ$ is dense in $\overline\sbQ$ with respect to $\ell$.

\end{rmk}

\begin{rmk}
According to \eref{riskbound}, the risk of the resulting estimator is bounded, up to a numerical constant, by the sum of two terms. The term $\ell(\gQ\et,\overline\sbQ)$ corresponds to the approximation error incurred by employing the model $\overline\cF$, while $\sqrt{V/n}$ illustrates the complexity of the considered model $\overline\cF$. Hence, a suitable model $\overline\cF$ should strike a balance between these two factors, namely, a model that is not excessively complex yet offers a good approximation of the underlying regression function.
\end{rmk}
\begin{rmk}\label{ideal-situation}
In the favourable situation where the data $X_{i}=(W_{i},Y_{i})$ are truly i.i.d. with $Q\et_{i}=Q_{f\et}$, $i\in\{1,\ldots,n\}$ for some $f\et\in\overline\cF$, we can deduce from \eref{riskbound} that
\begin{equation*}
\E\cro{\ell(Q_{f\et},Q_{\widehat f})}\leq C_{\epsilon}\sqrt{\frac{V}{n}}.
\end{equation*}
In typical situations, the value of $V$ aligns with the magnitude of parameters necessary to parametrize $\overline\cF$, which cannot be improved in general. As we shall observe in Section~\ref{holder-risk}, the above risk bound will lead to an optimal rate of convergence in the minimax sense when the regression function $f\et$ is assumed to be a smooth function of regularity $\alpha$.
\end{rmk}

\begin{rmk}
The term $\ell(\gQ\et,\overline\sbQ)$ elucidates the robustness property of the resulting estimator concerning model misspecification. To illustrate, let us consider the general scenario where the data are only independent and the true joint distribution is given by
\begin{equation}\label{formulation}
\gP\et=\bigotimes_{i=1}^{n}P\et_{i}=\bigotimes_{i=1}^{n}\left[(1-\beta_{i})P_{i,\overline f}+\beta_{i} R_{i}\right],\quad\quad\sum_{i=1}^{n}\beta_{i}\leq\frac{n}{2},
\end{equation}
with some $\overline f\in\overline \cF$, $P_{i,\overline f}=Q_{\overline f}\cdot P_{W_{i}}$, $R_{i}$ being an arbitrary distribution on $\sX=\sW\times\sY$ and $\beta_{i}$ taking values in $\cro{0,1}$ for all $i\in\{1,\ldots,n\}$. 
With the connection \eref{tv-connection-pseudo} between the pseudo distance $\ell$ and $\|\cdot\|_{TV}$, we can deduce from \eref{riskbound} that 
\begin{align}
\E\cro{\frac{1}{n}\sum_{i=1}^{n}\|P\et_{i}-P_{i,\widehat f}\|_{TV}}&\leq C_{\epsilon}\cro{\frac{1}{n}\sum_{i=1}^{n}\|P\et_{i}-P_{i,\overline f}\|_{TV}+\sqrt{\frac{V}{n}}}\nonumber\\
&\leq C_{\epsilon}\cro{\frac{1}{n}\sum_{i=1}^{n}\beta_{i}+\sqrt{\frac{V}{n}}},\label{misspecification-bound}
\end{align}
where the second inequality comes from the fact that $\|\cdot\|_{TV}$ is bounded by 1. The above result implies that as long as the quantity $(\sum_{i=1}^{n}\beta_{i})/n$ remains small compared to the term $\sqrt{V/n}$, the performance of the resulting estimator will not deteriorate significantly in comparison to the ideal situation presented in Remark~\ref{ideal-situation}.

The formulation \eref{formulation} can be utilized to provide a more detailed explanation of the stability of $\ell$-type estimation procedure. More precisely, in the case of the presence of outliers, the observations include several outliers, the indices of which are marked as a non-empty subset $J$ of $\{1,\ldots,n\}$. For any $i\in J$, $R_{i}=\delta_{a_{i}}$ and $\beta_{i}=\1_{i\in J}$ for all $i\in\{1,\ldots,n\}$. The bound \eref{misspecification-bound} indicates that our estimation procedure remains stable as long as $|J|/n$ remains small compared to $\sqrt{V/n}$. This accounts for the robustness when the outliers present. Under another scenario, where the data are contaminated, $(W_{i}, Y_{i})$ are i.i.d., and $P_{W_{i}}=P_{W}$. A portion $\beta\in(0,1/2]$ of the $n$ samples is drawn according to an arbitrary distribution $R_{i}=R$ (where $R$ is not equal to $P_{i,\overline f}$), while the remaining part follows the distribution $P_{i,\overline f}$. In this case, as an immediate consequence of \eref{misspecification-bound}, the performance of our estimator remains stable as long as the contamination proportion $\beta$ remains small compared to the value of $\sqrt{V/n}$.
\end{rmk}

\subsection{Connection to $\L_{1}$-distance between the regression functions}\label{l1-risk-ell-connect}
As we have seen in Section~\ref{performance-ell}, we establish non-asymptotic inequalities for the upper deviations of a total variation type distance between the true conditional distributions and the estimated one based on $\widehat f$. In the context of a regression setting where the data are truly i.i.d. and follow the common marginal distribution $P_{W}$, the function $f\et$ exists, such that $Q_{i}\et=Q_{f\et}$. It would be interesting to investigate the performance of the $\ell$-type estimator $\widehat{f}(\bsX)$ in relation to the regression function $f\et$, utilizing a suitable distance metric, as typically considered in the literature. Given two real-valued functions $f$ and $f'$ on $\sW$, it turns out that $\ell(Q_{f},Q_{f'})$ can be related to the $\L_{1}(P_{W})$-distance between $f$ and $f'$. We present the result as follows.
\begin{lem}\label{l1-ell}
For any two measurable real-valued functions $f,f'$ on $\sW$, and any $w\in\sW$, we have
\begin{equation}\label{tv-equality}
\|Q_{f(w)}-Q_{f'(w)}\|_{TV}=1-2\Phi\left(-\frac{|f'(w)-f(w)|}{2\sigma}\right),
\end{equation}
where the notation $\Phi$ stands for the cumulative distribution function of the standard normal distribution. Consequently,
\begin{equation}\label{tvconnect}
0.78\min\left\{\frac{\|f-f'\|_{1,P_{W}}}{\sqrt{2\pi}\sigma},1\right\}\leq\ell(Q_{f},Q_{f'})\leq\min\left\{\frac{\|f-f'\|_{1,P_{W}}}{\sqrt{2\pi}\sigma},1\right\}.
\end{equation}
\end{lem}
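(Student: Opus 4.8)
\textbf{Proof proposal for Lemma~\ref{l1-ell}.}

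The plan is to prove the exact identity \eqref{tv-equality} first, and then derive the two-sided bound \eqref{tvconnect} by analyzing the resulting scalar function of the gap $|f(w)-f'(w)|$ and integrating against $P_{W}$. For the identity, fix $w\in\sW$ and write $a=f(w)$, $b=f'(w)$, $\delta=|a-b|$. The total variation distance between two Gaussians with the same variance $\sigma^{2}$ and means $a,b$ depends only on $\delta$; by translation and symmetry we may as well compare $\cN(-\delta/2,\sigma^{2})$ and $\cN(\delta/2,\sigma^{2})$. Using the representation $\|P_{1}-P_{2}\|_{TV}=\frac{1}{2}\int|dP_{1}/d\mu-dP_{2}/d\mu|\,d\mu$ with $\mu$ Lebesgue measure, the two Gaussian densities cross exactly at the midpoint $0$, so the integral of the absolute difference splits at $0$ and evaluates to $2\big[\Phi(\delta/(2\sigma))-\Phi(-\delta/(2\sigma))\big]=2\big[1-2\Phi(-\delta/(2\sigma))\big]$; dividing by $2$ gives $\|Q_{a}-Q_{b}\|_{TV}=1-2\Phi(-\delta/(2\sigma))$, which is \eqref{tv-equality}. (Equivalently one can invoke the general fact $\|P_{1}-P_{2}\|_{TV}=\P(Z>\text{threshold})$ for the likelihood-ratio test, but the explicit density computation is cleanest.)

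Next I would convert \eqref{tv-equality} into a bound that is linear in $\delta$ up to truncation at $1$. Set $g(\delta)=1-2\Phi(-\delta/(2\sigma))$ for $\delta\ge0$; this is increasing, concave, with $g(0)=0$ and $g(\delta)\to1$. The upper bound is the elementary estimate $g(\delta)\le \delta\,\phi(0)/\sigma = \delta/(\sqrt{2\pi}\sigma)$, coming from $g'(\delta)=\phi(\delta/(2\sigma))/\sigma\le\phi(0)/\sigma$, combined with the trivial bound $g(\delta)\le1$; hence $g(\delta)\le\min\{\delta/(\sqrt{2\pi}\sigma),1\}$. The lower bound is the place requiring a small quantitative argument: since $g$ is concave on $[0,\infty)$ but unbounded-domain concavity alone does not give a clean linear lower bound, I would instead compare $g$ with its truncated linear surrogate $h(\delta)=\min\{\delta/(\sqrt{2\pi}\sigma),1\}$ and show $g(\delta)\ge 0.78\,h(\delta)$ for all $\delta\ge0$. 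On the region where $h(\delta)=\delta/(\sqrt{2\pi}\sigma)$, i.e. $\delta\le\sqrt{2\pi}\sigma$, concavity of $g$ through the origin gives $g(\delta)/\delta\ge g(\sqrt{2\pi}\sigma)/(\sqrt{2\pi}\sigma)$, so the worst ratio $g(\delta)/h(\delta)$ on this region is attained at $\delta=\sqrt{2\pi}\sigma$ and equals $g(\sqrt{2\pi}\sigma)\cdot\sqrt{2\pi}\sigma/(\sqrt{2\pi}\sigma)=g(\sqrt{2\pi}\sigma)=1-2\Phi(-\sqrt{2\pi}/2)=1-2\Phi(-1.2533\ldots)$, which one checks numerically is $\ge0.78$; on the region $\delta\ge\sqrt{2\pi}\sigma$ we have $h(\delta)=1$ and $g$ increasing, so $g(\delta)\ge g(\sqrt{2\pi}\sigma)\ge0.78=0.78\,h(\delta)$. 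This yields $0.78\,h(\delta)\le g(\delta)\le h(\delta)$ pointwise in $\delta$.

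Finally I would integrate over $w$. Applying the pointwise chain $0.78\min\{|f(w)-f'(w)|/(\sqrt{2\pi}\sigma),1\}\le\|Q_{f(w)}-Q_{f'(w)}\|_{TV}\le\min\{|f(w)-f'(w)|/(\sqrt{2\pi}\sigma),1\}$ and integrating against $dP_{W}(w)$ gives, via the definition \eqref{n1-ell-def} of $\ell(Q_{f},Q_{f'})$, the middle quantity $\ell(Q_{f},Q_{f'})$ sandwiched between $\int 0.78\min\{\cdots\}\,dP_{W}$ and $\int\min\{\cdots\}\,dP_{W}$. For the upper bound one uses $\min\{x,1\}\le x$ inside the integral to get $\int\min\{|f-f'|/(\sqrt{2\pi}\sigma),1\}\,dP_{W}\le\|f-f'\|_{1,P_{W}}/(\sqrt{2\pi}\sigma)$, and also $\min\{x,1\}\le1$ to get the bound $1$, hence $\le\min\{\|f-f'\|_{1,P_{W}}/(\sqrt{2\pi}\sigma),1\}$. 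For the lower bound one needs $\int\min\{a(w),1\}\,dP_{W}(w)\ge\min\{\int a(w)\,dP_{W}(w),1\}$ for nonnegative $a$, which is exactly Jensen's inequality applied to the concave function $x\mapsto\min\{x,1\}$; this promotes the pointwise minimum to a minimum of the integral, giving $\ell(Q_{f},Q_{f'})\ge0.78\min\{\|f-f'\|_{1,P_{W}}/(\sqrt{2\pi}\sigma),1\}$. The main (mild) obstacle is the lower bound: getting a clean \emph{universal} constant in front of the truncated $\L_{1}$-norm, which requires both the concavity-through-the-origin comparison for the Gaussian $g$ and the Jensen step to exchange the truncation with the integral; neither is deep but both need to be spelled out to justify the stated $0.78$.
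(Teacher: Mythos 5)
Your derivation of \eref{tv-equality} is correct, and so is the pointwise two-sided comparison $0.78\min\{\delta/(\sqrt{2\pi}\sigma),1\}\le g(\delta)\le\min\{\delta/(\sqrt{2\pi}\sigma),1\}$, which you prove directly from $g'\le 1/(\sqrt{2\pi}\sigma)$ and concavity of $g$ through the origin. This part is a genuine improvement in self-containedness over the paper, which at this step simply cites Lemma~1 of \cite{Baraud2021} rather than reproducing the Gaussian computation.

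The genuine gap is in the final integration step for the lower bound of \eref{tvconnect}. You assert that $\int\min\{a(w),1\}\,dP_{W}(w)\ge\min\{\int a(w)\,dP_{W}(w),1\}$ follows from Jensen's inequality applied to the concave map $x\mapsto\min\{x,1\}$. Jensen for a \emph{concave} function gives the reverse inequality, $\E[\min\{X,1\}]\le\min\{\E[X],1\}$, and the inequality you actually need is false in general: take $a$ equal to $M\ge1$ with probability $1/M$ and equal to $0$ otherwise, so the left-hand side is $1/M$ while the right-hand side is $1$. Integrating your pointwise lower bound therefore yields only $0.78\int\min\{|f-f'|/(\sqrt{2\pi}\sigma),1\}\,dP_{W}\le\ell(Q_{f},Q_{f'})$, and the integral on the left is in general \emph{smaller} than $\min\{\|f-f'\|_{1,P_{W}}/(\sqrt{2\pi}\sigma),1\}$, so the chain does not close to the bound as stated. (Your upper-bound integration is fine, because there concavity of $\min\{\cdot,1\}$ points in the useful direction.) As written you would either have to leave the lower bound in the integrated-min form $0.78\int\min\{|f-f'|/(\sqrt{2\pi}\sigma),1\}\,dP_{W}$, or supply a separate argument for why the truncation may be pulled outside the integral; the Jensen step you invoke cannot do that work.
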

\begin{proof}
For any two probabilities $P$ and $R$ on the measured space $(\sX,\cX)$, it is well known that the total variation distance can equivalently be written as $\|P-R\|_{TV}=R(r>p)-P(r>p),$ where $p$ and $r$ stand for the respective densities of $P$ and $R$ with respect to some common dominating measure $\mu$. Therefore, a fundamental calculation reveals that for any $w\in\sW$, 
\begin{align}
\|Q_{f(w)}-Q_{f'(w)}\|_{TV}&=Q_{f'(w)}\left(q_{f'(w)}>q_{f(w)}\right)-Q_{f(w)}\left(q_{f'(w)}>q_{f(w)}\right)\nonumber\\
&=\cro{1-\Phi\left(-\frac{|f'(w)-f(w)|}{2\sigma}\right)}-\Phi\left(-\frac{|f'(w)-f(w)|}{2\sigma}\right)\nonumber\\
&=1-2\Phi\left(-\frac{|f'(w)-f(w)|}{2\sigma}\right),\label{gaussian-accu}
\end{align}
which concludes the equality \eref{tv-equality}. We also note from \eref{gaussian-accu} that
\[
\|Q_{f(w)}-Q_{f'(w)}\|_{TV}=\P\cro{|Z|\leq\frac{1}{2}\left(\frac{|f'(w)-f(w)|}{\sigma}\right)}, 
\]
where $Z$ is a standard real-valued Gaussian random variable. Recall that 
\begin{equation}\label{def-ell-int}
\ell(Q_{f},Q_{f'})=\int_{\sW}\|Q_{f(w)}-Q_{f'(w)}\|_{TV}dP_{W}(w).
\end{equation}
Based on \eref{def-ell-int}, the conclusion of \eref{tvconnect} follows by applying Lemma~1 in \cite{Baraud2021} with $d=1$ and replacing $|m-m'|$ with $(|f'(w)-f(w)|)/\sigma$.
\end{proof}
The above result indicates that when the two functions $f$ and $f'$ are sufficiently close to each other with respect to the $\L_{1}(P_{W})$-distance, the quantity $\ell(Q_{f},Q_{f'})$ is of order $\|f-f'\|_{1,P_{W}}/(\sqrt{2\pi}\sigma)$. Conversely, when $f$ and $f'$ are far apart, the value of $\ell(Q_{f},Q_{f'})$ remains approximately of the order of 1. Combining Lemma~\ref{l1-ell} with \eref{riskbound}, we can deduce that 
\begin{align}
&\min\left\{\frac{\E\cro{\|f\et-\widehat f\|_{1,P_{W}}}}{\sqrt{2\pi}\sigma},1\right\}
\leq C_{\epsilon}\cro{\inf_{f\in\cF}\ell(Q_{f\et},Q_{f})+\sqrt{\frac{V}{n}}},\label{l1-riskbound-min}
\end{align}
where $C_{\epsilon}>0$ is a numerical constant depending on $\epsilon$ only. As we shall see it later, in typical applications, if we can find a nice model to approximate the regression function $f\et$ in the sense that the right hand of \eref{l1-riskbound-min} is smaller than 1, then we finally obtain a risk bound for $\widehat f(\bsX)$ with respect to the $\L_{1}$-distance:
\begin{equation*}
\E\cro{\|f\et-\widehat f\|_{1,P_{W}}}\leq C_{\epsilon,\sigma}\cro{\inf_{f\in\cF}\|f\et-f\|_{1,P_{W}}+\sqrt{\frac{V}{n}}},
\end{equation*}
where $C_{\epsilon,\sigma}$ is a numerical constant depending on $\epsilon,\sigma$ only. 
\subsection{Comparison with $\rho$-estimation}
As mentioned in Section~\ref{performance-ell}, one notable feature of the $\ell$-type estimators is their robustness under misspecification. Interestingly, the $\rho$-estimators also exhibit robustness properties, but they are quantified using a Hellinger-type distance, rather than the one based on the total variation distance. For a more comprehensive understanding of the $\rho$-estimation methodology, one can refer to \cite{Baraud2018} and \cite{Baraud2020}, with the latter primarily focusing on the regression setting. It is worth noting that while there exists some connection between the Hellinger distance and the total variation distance, they are not equivalent in general. The main distinctions between these two types of estimators has been examined in Section~7.1 of \cite{Baraud2021} which includes an illustration of regression under a fixed design. Leveraging the results we have established in Section~\ref{performance-ell} and \ref{l1-risk-ell-connect}, we are therefore able to delve deeper in this direction, especially under a random regression design setting.

To illustrate simply, we assume the data are truly i.i.d. with $P_{W_{i}}=P_{W}$ and $Q\et_{i}=Q\et$, for all $i\in\{1,\ldots,n\}$. We write $P\et=Q\et\cdot P_{W}$ the true distribution of $(W,Y)\in\sW\times\sY$. For some $f\in\cF$, provided the term $\ell(Q\et,Q_{f})$ and $1/n$ are both sufficiently small, employing Lemma~\ref{l1-ell}, one can deduce from \eref{exp-inequa-everypoint} that the $\ell$-type estimator $\widehat f_{\ell}(\bsX)$ satisfies
\begin{equation}\label{tv-result}
\E\cro{\|f-\widehat f_{\ell}\|_{1,P_{W}}}\leq C_{\sigma,\epsilon}\cro{\|P\et-P_{f}\|_{TV}+\sqrt{\frac{V}{n}}},
\end{equation}
where $P_{f}=Q_{f}\cdot P_{W}$. From another point of view, we can deduce, through a slight modification of Theorem~1 in \cite{Baraud2020}, that for any $f\in\cF$, the $\rho$-estimator $\widehat f_{\rho}(\bsX)$ complies with the following
\begin{equation*}
\E\cro{h^{2}(P_{f},P_{\widehat f_{\rho}})}\leq C\cro{h^{2}(P\et,P_{f})+\frac{V(1+\log n)}{n}},
\end{equation*}
where $C>0$ is some numerical constant and $h$ stands for the Hellinger distance. Considering the fact that 
\begin{align*}
h^{2}(P_{f},P_{\widehat f_{\rho}})&=\int_{\sW}1-\exp\cro{-\frac{|f(w)-\widehat f_{\rho}(w)|^{2}}{8\sigma^{2}}}dP_{W}(w)\\
&\geq(1-e^{-1})\left(\frac{\|f-\widehat f_{\rho}\|^{2}_{2,P_{W}}}{8\sigma^{2}}\wedge1\right),
\end{align*}
we can deduce, using H\"older's inequality and a similar argument to that used in obtaining \eref{tv-result}, that for some $f\in\cF$, given the term $h^{2}(P\et,P_{f})$ and $1/n$ are both sufficiently small
\begin{equation}\label{rho-result}
\E\cro{\|f-\widehat f_{\ell}\|_{1,P_{W}}}\leq C_{\sigma}\cro{h(P\et,P_{f})+\sqrt{\frac{V(1+\log n)}{n}}}.
\end{equation}

If we put the numerical constants $C_{\sigma,\epsilon}$, $C_{\sigma}$ aside, the main difference between the two risk bounds lie in the fact that they express the robustness of the two different estimators by the approximation term $\|P\et-P_{f}\|_{TV}$ and $h(P\et,P_{f})$ respectively. With the connection that for any two probabilities $P_{1}, P_{2}$,
$$\|P_{1}-P_{2}\|_{TV}\leq\sqrt{2}h(P_{1},P_{2}),$$ we can conclude that the stability of the $\ell$-type estimators will not be significantly worse than the $\rho$-estimators. In fact, the $\ell$-type estimators can posses much more robustness than the $\rho$-estimators. To explain it in details, consider the misspecified formulation:
$$P\et=(1-\beta)P_{f\et}+\beta R,\quad\mbox{for some small\ }\beta\in(0,1),$$ where $f\et\in\cF$, $R\not=P_{f\et}$ is any arbitrary distribution on $\sX=\sW\times\sY$. On the one hand, we can calculate that
\begin{equation}\label{stable-ell-order}
\|P\et-P_{f\et}\|_{TV}=\beta\|P_{f\et}-R\|_{TV},
\end{equation}
which is of the order of magnitude $\beta$. On the other hand, we have
\begin{equation}\label{stable-rho-order}
h(P\et,P_{f\et})\leq\sqrt{1-\sqrt{1-\beta}},
\end{equation}
which is at most of the order of magnitude $\sqrt{\beta/2}$. Therefore, for small values of $\beta$, the above computation indicates that the term $h(P\et,P_{f\et})$ is much larger than $\|P\et-P_{f\et}\|_{TV}$. Combining \eref{tv-result} with \eref{stable-ell-order}, we deduce that the $\ell$-type estimators remain stable as long as $\beta$ is small as compared to $1/\sqrt{n}$. Combining \eref{rho-result} with \eref{stable-rho-order}, we know that the performance of the $\rho$-estimators deteriorate immediately as long as $\beta$ becomes large as compared to $(\log n)/n$. This analysis implies that the $\ell$-type estimators possess more robustness compared to the ones obtained from $\rho$-estimation.

\section{Applications of $\ell$-type Estimation Using Neural Networks}\label{neural}
In recent years, experimental findings have demonstrated the significant success of neural networks modeling in various applications. From a theoretical perspective, it has been observed that neural networks, especially deep ones (see, for example, \cite{Schmidt2020} and \cite{suzuki2019deep}), possess a natural advantage over classical methods when approximating functions with specific characteristics. In this section, we will discuss $\ell$-type estimation for models based on neural networks. The covariates $W_{i}$ are assumed to be i.i.d. on $\sW=\cro{0,1}^{d}$, following the common distribution $P_{W}$, while $Q_{i}\et=Q\et$ holds for all $i\in\{1,\ldots,n\}$. 
\subsection{ReLU feedforward neural networks}
We start with introducing some preliminaries of the ReLU feedforward neural networks. Recall the Rectifier Linear Unit (ReLU) activation function $\sigma:\R\rightarrow\R$, which is defined as $$\sigma(x)=\max\{0,x\}.$$ For any vector ${\bs{x}}=(x_{1},\ldots,x_{p})^{\top}\in\R^{p}$, where $p\in\N^{*}$, the notation $\sigma({\bs{x}})$ represents the activation function applied component-wise, defined as follows: $$\sigma({\bs{x}})=(\max\{0,x_{1}\},\ldots,\max\{0,x_{p}\})^{\top}.$$ 

A fundamental and extensively employed type of feedforward neural networks in practice is the multi-layer perceptrons, where the neurons in consecutive layers are fully connected through linear transformation matrices. In our later discussion on applying the $\ell$-type estimation, we will focus on the multi-layer perceptrons with ReLU activation function. To begin, let's introduce the expression of the multi-layer perceptrons under consideration. For any vector ${\bs{p}}=(p_{0},\ldots,p_{L+1})\in(\N^{*})^{L+2}$ with $p_{0}=d$ and $p_{L+1}=1$ and $L\in\N^{*}$, we denote the multi-layer perceptron $\overline\cF_{(L, {\bs{p}})}$ as a collection of functions of the form:
\begin{equation*}
f:\R^{d}\rightarrow\R,\quad {\bs{w}}\mapsto f({\bs{w}})=M_{L}\circ\sigma\circ M_{L-1}\circ\cdots\circ\sigma\circ M_{0}({\bs{w}}),
\end{equation*}
where $$M_{l}({\bs{y}})=A_{l}({\bs{y}})+b_{l},\mbox{\quad for\ }l=0,\ldots,L,$$ $A_{l}$ is a $p_{l+1}\times p_{l}$ weight matrix and the shift vector $b_{l}$ is of size $p_{l+1}$ for any $l\in\{0,\ldots,L\}$. In the first layer, the input data consists of the values of the predictor $W$, whereas the last layer represents the output. With the expression given above, we say that the network $\overline\cF_{(L, {\bs{p}})}$ comprises $L$ hidden layers and a total of $(L+2)$ layers. For $l\in\{1,\ldots,L\}$, we refer to $p_{l}$ as the width of the $l$-th hidden layer. The entries in these weight matrices and vectors typically vary in $\R$ or a subinterval of $\R$, which is what we refer to as parameters. In the latter scenario, we employ the notation $\overline\cF_{(L,{\bs{p}},K)}\subset\overline\cF_{(L,{\bs{p}})}$, denoting the set of all functions with parameters ranging within the interval $\cro{-K,K}$.
Furthermore, we use the notation $\cF_{(L, {\bs{p}})}$ (or $\cF_{(L, {\bs{p}},K)}$) for the multi-layer perceptron, which shares the same architecture as $\overline\cF_{(L, {\bs{p}})}$ (or $\overline\cF_{(L, {\bs{p}},K)}$ respectively), but with the distinction that all the parameters take values in $\Q$. In some of our application scenarios, it suffices to consider a multi-layer perceptron with a rectangular design, where $p_{l}=p$ for all $l\in\{1,\ldots,L\}$. In this case, we may use the simplified notation $\cF_{(L,p)}$ (or $\cF_{(L,p,K)}$) to represent the class $\cF_{(L,{\bs{p}})}$ (or $\cF_{(L,{\bs{p}},K)}$ respectively) for ${\bs{p}}=(d,p,\ldots,p,1)$. 

We discuss the implementation of the $\ell$-type estimation on ReLU neural networks $\cF_{(L, {\bs{p}},K)}$. To implement the procedure introduced in Section~\ref{ell-construction-section}, we work on the countable subset $\cF_{(L, {\bs{p}},K)}$ of the model $\overline\cF_{(L, {\bs{p}},K)}$. We can establish the following result.

\begin{lem}\label{network-dense}
For any $L\in\N^{*}$, ${\bs{p}}=(p_{0},\ldots,p_{L+1})\in(\N^{*})^{L+2}$ with $p_{0}=d$, $p_{L+1}=1$ and a finite positive constant $K$, the class of functions $\cF_{(L,{\bs{p}},K)}$ is dense in $\overline\cF_{(L,{\bs{p}},K)}$ with respect to the supremum norm $\|\cdot\|_{\infty}$.
\end{lem}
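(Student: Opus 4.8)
The plan is to regard the function computed by the network as depending on a finite-dimensional parameter vector lying in a compact box, and to show that this dependence is Lipschitz for the supremum norm over the bounded domain $\sW=\cro{0,1}^{d}$; density then reduces to density of the rationals. Let $N$ denote the total number of scalar parameters of the architecture $(L,{\bs{p}})$, i.e. the entries of the matrices $A_{0},\dots,A_{L}$ together with the coordinates of the shift vectors $b_{0},\dots,b_{L}$. Every $f\in\overline\cF_{(L,{\bs{p}},K)}$ can then be written $f=f_{\bst}$ for some $\bst\in\cro{-K,K}^{N}$, and by definition $\cF_{(L,{\bs{p}},K)}=\{\,f_{\bst}:\bst\in\pa{\Q\cap\cro{-K,K}}^{N}\,\}$, a countable set. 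Since $\Q\cap\cro{-K,K}$ is dense in $\cro{-K,K}$, it suffices to prove that the map $\bst\mapsto f_{\bst}$ is continuous from $\pa{\cro{-K,K}^{N},\norm{\cdot}_{\infty}}$ to $\pa{C(\sW),\norm{\cdot}_{\infty}}$; I would in fact establish a Lipschitz bound.

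To do so I would argue layer by layer. Set $h_{0}({\bs{w}})={\bs{w}}$ and, for $l=0,\dots,L-1$, $h_{l+1}({\bs{w}})=\sigma\pa{M_{l}(h_{l}({\bs{w}}))}$, so that $f_{\bst}({\bs{w}})=M_{L}(h_{L}({\bs{w}}))$. First a uniform boundedness step: since $A_{l}$ is $p_{l+1}\times p_{l}$ with entries in $\cro{-K,K}$ and $\norm{{\bs{w}}}_{\infty}\le1$, one has $\norm{M_{l}({\bs{y}})}_{\infty}\le Kp_{l}\norm{{\bs{y}}}_{\infty}+K$, whence by induction $\norm{h_{l}({\bs{w}})}_{\infty}\le B_{l}$ for every ${\bs{w}}\in\sW$ and every $\bst\in\cro{-K,K}^{N}$, with $B_{l}$ depending only on $K$ and $p_{0},\dots,p_{l}$. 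Next, given $\bst,\bst'\in\cro{-K,K}^{N}$ with associated affine maps $M_{l},M_{l}'$ and hidden layers $h_{l},h_{l}'$, the $1$-Lipschitz property of $\sigma$ and the decomposition
\[
M_{l}(h_{l})-M_{l}'(h_{l}')=A_{l}\pa{h_{l}-h_{l}'}+(A_{l}-A_{l}')h_{l}'+(b_{l}-b_{l}')
\]
give, for each ${\bs{w}}\in\sW$,
\[
\norm{h_{l+1}({\bs{w}})-h_{l+1}'({\bs{w}})}_{\infty}\le Kp_{l}\norm{h_{l}({\bs{w}})-h_{l}'({\bs{w}})}_{\infty}+\pa{p_{l}B_{l}+1}\norm{\bst-\bst'}_{\infty}.
\]
Iterating from $l=0$ (where $h_{0}=h_{0}'$, so the first term drops) through the final affine layer $M_{L}$ yields $\norm{f_{\bst}-f_{\bst'}}_{\infty}\le C\,\norm{\bst-\bst'}_{\infty}$ with $C=C(L,{\bs{p}},K,d)$ finite and independent of $\bst,\bst'$.

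Granting this, the conclusion is immediate: given $f=f_{\bst}\in\overline\cF_{(L,{\bs{p}},K)}$ and $\eta>0$, pick $\bst'\in\pa{\Q\cap\cro{-K,K}}^{N}$ with $\norm{\bst-\bst'}_{\infty}\le\eta/C$; then $f_{\bst'}\in\cF_{(L,{\bs{p}},K)}$ and $\norm{f-f_{\bst'}}_{\infty}\le\eta$, proving density. The only step that is not soft is the layer-by-layer Lipschitz estimate, and the main obstacle there is the bookkeeping: one must simultaneously track the perturbation freshly introduced by altering the parameters of layer $l$ and the perturbation already accumulated in $h_{l}-h_{l}'$, and check that the resulting constants depend only on $(L,{\bs{p}},K,d)$ — not on the particular parameters — which is exactly what upgrades pointwise continuity to uniform (sup-norm) density.
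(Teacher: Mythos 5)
Your proposal is correct and follows essentially the same strategy as the paper's proof: both parameterize the network by its finitely many weights, bound the forward iterates uniformly over the compact parameter box, and then control the perturbation of the output layer-by-layer when a parameter vector is replaced by a nearby rational one, yielding a Lipschitz-in-parameters estimate with a constant depending only on $(L,\bs{p},K,d)$. The only difference is cosmetic: the paper expresses the iteration as a telescoping sum over layers (using the ``tail'' network $f_{l}^{-}$ as a Lipschitz map), while you express it as a one-step recursion for $\norm{h_{l+1}-h'_{l+1}}_{\infty}$ and then unroll it; these are equivalent bookkeeping devices.
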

The proof of Lemma~\ref{network-dense} is postponed to Section~\ref{network-dense-supnorm}. Lemma~\ref{network-dense} ensures that our estimation approach applied to the countable model $\cF_{(L,{\bs{p}},K)}$ does not compromise approximation power compared to $\overline\cF_{(L,{\bs{p}},K)}$.

The following proposition establishes VC-dimensional bounds for rectangular multi-layer perceptrons employing a ReLU activation function. This result can be derived from Proposition 5 in \cite{ChenModSel}, which also aligns with those stated in Theorem 7 of \cite{BarlettJMLR}.

\begin{prop}\label{vc-dim}
For any $L\in\N^{*}$, $p\in\N^{*}$, the class of functions $\overline\cF_{(L,p)}$ is a VC-subgraph on $\sW$ with dimension 
\begin{equation}\label{vc-infty-bound}
V(\overline\cF_{(L,p)})\leq(L+1)\left(s+1\right)\log_{2}\cro{2\left(2e(L+1)\left(\frac{pL}{2}+1\right)\right)^{2}},
\end{equation}
where $s=p^{2}(L-1)+p(L+d+1)+1$.
\end{prop}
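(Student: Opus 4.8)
The plan is to bound the VC-subgraph dimension of $\overline\cF_{(L,p)}$ by counting the number of distinct sign patterns that the functions in the class, viewed as functions on $\sW\times\R$ through their open subgraphs, can realize on a finite set of points. The standard device is: if a class shatters a set of size $V+1$, then $2^{V+1}$ distinct subsets must be realized; bounding the realizable subsets from above by a polynomial in $V+1$ and comparing with $2^{V+1}$ yields the claimed logarithmic bound on $V$.

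First I would pass from the subgraph question to a classical VC/pseudo-dimension count for the function class itself. The open subgraph $\sC_f=\{(w,u):f(w)>u\}$ of $f\in\overline\cF_{(L,p)}$ is cut out of a finite set $\cS\subset\sW\times\R$ exactly by the signs of $f(w)-u$ over $(w,u)\in\cS$; so the subgraph shattering number of $\overline\cF_{(L,p)}$ on $N$ points equals the number of sign vectors realized by the augmented family $(w,u)\mapsto f(w)-u$. This augmented family is again computable by a ReLU network: append one input coordinate $u$ and subtract it at the output, which costs at most one extra unit per layer and does not change the order of the parameter count. Hence it suffices to invoke a bound on the number of sign patterns of a ReLU network with $L+1$ affine layers and roughly $s=p^2(L-1)+p(L+d+1)+1$ parameters.

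Next I would apply the piecewise-polynomial sign-counting lemma for ReLU networks. A depth-$(L+1)$ ReLU network with $W\asymp s$ parameters computes, as a function of its parameters on any fixed input, a piecewise polynomial with a controlled number of regions and degree at most $L+1$; the Warren/ Of-the-shelf bound (as used in \cite{BarlettJMLR} [Theorem~7] and \cite{ChenModSel} [Proposition~5]) then gives that the number of distinct sign vectors on $N$ inputs is at most $\bigl(2eN(L+1)(pL/2+1)/(s+1)\bigr)^{(s+1)(L+1)}$ or a similar expression. Setting $N=V+1$ with $V$ the shattering number and comparing $2^{V+1}\le \bigl(2e(V+1)(L+1)(pL/2+1)\bigr)^{(s+1)(L+1)}$, I would take logarithms base $2$ and use the elementary fact that $x\le a\log_2(bx)$ forces $x\le 2a\log_2(2ab)$ (or the precise variant used in \cite{ChenModSel}) to solve for $V$, which produces exactly the right-hand side of \eref{vc-infty-bound}. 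Since the whole argument is already packaged in Proposition~5 of \cite{ChenModSel} and Theorem~7 of \cite{BarlettJMLR}, the proof reduces to checking that the parameter count $s$ and the layer count for the rectangular architecture $\bs p=(d,p,\ldots,p,1)$ match the hypotheses of those results and that the subgraph-to-sign-pattern reduction costs nothing asymptotically.

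The main obstacle is bookkeeping rather than conceptual: one must verify that augmenting the network to handle the extra threshold variable $u$ in the subgraph definition does not inflate the number of parameters or the depth beyond what \eref{vc-infty-bound} allows, and that the rectangular-architecture parameter tally $s=p^2(L-1)+p(L+d+1)+1$ is consistent with the generic bound in \cite{ChenModSel}. Once that alignment is confirmed, the inequality is a direct quotation of the cited propositions, so I would keep this step brief and refer the reader to \cite{ChenModSel} and \cite{BarlettJMLR} for the detailed sign-counting estimates.
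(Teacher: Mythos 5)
Your proposal is correct and follows essentially the same route as the paper, which establishes Proposition~\ref{vc-dim} by directly invoking Proposition~5 of \cite{ChenModSel} and Theorem~7 of \cite{BarlettJMLR}; you simply unpack the sign-pattern--counting argument that those references encapsulate. One small simplification worth noting: there is no need to augment the network with an extra input coordinate $u$, since the VC-subgraph dimension is by definition the pseudo-dimension, which is exactly the quantity bounded in \cite{BarlettJMLR} [Theorem~7], so the threshold variable is already handled inside the cited result.
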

This result shows the connection between the VC-dimensional bounds and the depth and width of ReLU rectangular multi-layer perceptrons. Specifically, for any finite constant $K>0$, as $\overline\cF_{(L,p,K)}\subset\overline\cF_{(L,p)}$, the dimensional bounds \eref{vc-infty-bound} also apply to the class $\overline\cF_{(L,p,K)}$. We will use Proposition~\ref{vc-dim} along with other results to derive the risk bounds for the $\ell$-type estimators when applying our approach to ReLU feedforward neural networks.\subsection{Approximating functions in H\"older space}\label{holder-risk}
In this section, we examine the performance of the $\ell$-type estimators implemented on the ReLU feedforward neural networks. We consider the regression setting, where the regression function $f\et$ exists, and we assume that it belongs to an $\alpha$-smoothness H\"older class.

Given $t\in\N^{*}$ and $\alpha\in\R_{+}^{*}$, we define $\cH^{\alpha}(D,B)$ an $\alpha$-H\"older ball with radius $B$ as the collection of functions $f:D\subset\R^{t}\rightarrow\R$ such that $$\max_{\substack{{\bs{\beta}}=(\beta_{1},\ldots,\beta_{t})^{\top}\in\N^{t}\\ \sum_{j=1}^{t}\beta_{j}\leq\lfloor\alpha\rfloor}}\|\partial^{\bs{\beta}}f\|_{\infty}\leq B\quad\mbox{and}\ \max_{\substack{{\bs{\beta}}\in\N^{t}\\ \sum_{j=1}^{t}\beta_{j}=\lfloor\alpha\rfloor}}\sup_{\substack{{\bs{x}},{\bs{y}}\in D\\{\bs{x}}\not={\bs{y}}}}\frac{\left|\partial^{\bs{\beta}}f({\bs{x}})-\partial^{\bs{\beta}}f({\bs{y}})\right|}{\|{\bs{x}}-{\bs{y}}\|_{2}^{\alpha-\lfloor\alpha\rfloor}}\leq B,$$ where for any ${\bs{\beta}}=(\beta_{1},\ldots,\beta_{t})^{\top}\in\N^{t}$, $\partial^{\bs{\beta}}=\partial^{\beta_{1}}\cdots\partial^{\beta_{t}}$.

Based on the notation introduced above, in this section, we assume that $Q\et=Q_{f\et}$, where $f\et\in\cH^{\alpha}(\cro{0,1}^{d},B)$, with a specified smoothness index $\alpha\in\R_{+}^{*}$ and a finite constant $B>0$. For any $\alpha\in\R_{+}^{*}$, the following result demonstrates the error introduced by various ReLU neural networks when approximating the class $\cH^{\alpha}(\cro{0,1}^{d},B)$, as deduced from Corollary~3.1 of \cite{Jiao2023}.

\begin{prop}\label{appro-holder}
Assume that $f\in\cH^{\alpha}(\cro{0,1}^{d},B)$ with $\alpha\in\R_{+}^{*}$ and a finite constant $B>0$. For any $M,N\in\N^{*}$, there exists a function $\overline f$ implemented by a ReLU neural network $\overline\cF_{(L,p)}$ with a width of $$p=38(\lfloor\alpha\rfloor+1)^{2}3^{d}d^{\lfloor\alpha\rfloor+1}N\lceil\log_{2}(8N)\rceil$$ and a depth of $$L=21(\lfloor\alpha\rfloor+1)^{2}M\lceil\log_{2}(8M)\rceil+2d$$ such that
\[
\big|f({\bs{w}})-\overline f({\bs{w}})\big|\leq19B(\lfloor\alpha\rfloor+1)^{2}d^{\lfloor\alpha\rfloor+\frac{\alpha\vee1}{2}}(NM)^{-\frac{2\alpha}{d}},
\]
for all ${\bs{w}}\in\cro{0,1}^{d}$.
\end{prop}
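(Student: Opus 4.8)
The plan is to derive Proposition~\ref{appro-holder} directly from Corollary~3.1 of \cite{Jiao2023} by a one-line rescaling that reduces the general H\"older radius $B$ to the unit-ball case $B=1$.

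First I would restate the cited corollary in the normalization used there: for $g$ in the unit H\"older ball $\cH^{\alpha}(\cro{0,1}^{d},1)$ and arbitrary $M,N\in\N^{*}$, there is a ReLU multi-layer perceptron of width $p=38(\lfloor\alpha\rfloor+1)^{2}3^{d}d^{\lfloor\alpha\rfloor+1}N\lceil\log_{2}(8N)\rceil$ and depth $L=21(\lfloor\alpha\rfloor+1)^{2}M\lceil\log_{2}(8M)\rceil+2d$, realizing a function $\overline g$ with $\sup_{{\bs w}\in\cro{0,1}^{d}}|g({\bs w})-\overline g({\bs w})|\le 19(\lfloor\alpha\rfloor+1)^{2}d^{\lfloor\alpha\rfloor+(\alpha\vee1)/2}(NM)^{-2\alpha/d}$. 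Two things need to be checked here: (i) that the H\"older-norm convention of \cite{Jiao2023} agrees with the one defining $\cH^{\alpha}(D,B)$ in this section (the max of $\|\partial^{\bs\beta}g\|_{\infty}$ over $\sum_j\beta_j\le\lfloor\alpha\rfloor$ together with the top-order H\"older seminorm); any discrepancy is at worst a dimensional factor that can be absorbed into the leading constant; and (ii) that the bound is genuinely uniform over all of $\cro{0,1}^{d}$, rather than merely off a small region near a dyadic grid — the additive $2d$ in the depth and the $3^{d}$ factor in the width are exactly the footprint of the boundary-handling device that makes the estimate uniform, so this is indeed the content of Corollary~3.1.

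Then I would apply this to $g=f/B$. Since the inequalities defining $\cH^{\alpha}(\cro{0,1}^{d},B)$ scale linearly and each $\partial^{\bs\beta}$ is linear, $g\in\cH^{\alpha}(\cro{0,1}^{d},1)$, so the corollary produces $\overline g\in\overline\cF_{(L,p)}$ with the stated architecture. Setting $\overline f=B\,\overline g$ and observing that multiplying the output of a ReLU network by a constant — i.e.\ replacing its last affine map $M_{L}(\cdot)=A_{L}(\cdot)+b_{L}$ by $(BA_{L})(\cdot)+Bb_{L}$ — alters neither the depth nor any layer width, we get $\overline f\in\overline\cF_{(L,p)}$ with exactly the $p$ and $L$ in the statement, and for every ${\bs w}\in\cro{0,1}^{d}$, $|f({\bs w})-\overline f({\bs w})|=B|g({\bs w})-\overline g({\bs w})|\le 19B(\lfloor\alpha\rfloor+1)^{2}d^{\lfloor\alpha\rfloor+(\alpha\vee1)/2}(NM)^{-2\alpha/d}$, which is the claim. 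The only real difficulty is the bookkeeping in the first step — transcribing Corollary~3.1 of \cite{Jiao2023} with the precise width and depth formulas and the correct H\"older-norm normalization, and confirming its error bound is uniform on the whole cube; the rescaling itself is routine.
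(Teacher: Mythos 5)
The paper gives no proof of this proposition at all: it is stated verbatim as ``deduced from Corollary~3.1 of \cite{Jiao2023}'' and no argument is supplied in Section~\ref{proof-paper}. Your proposal is correct and is essentially the intended deduction. Two small remarks: (1) Corollary~3.1 of \cite{Jiao2023} is in fact already stated for a general H\"older radius $B_{0}$, with the error bound carrying the factor $B_{0}$ linearly, so the $g=f/B$ reduction is harmless but redundant — you could instead just match notation ($s=\lfloor\alpha\rfloor$ with the paper's strict-floor convention, so that $\alpha-\lfloor\alpha\rfloor\in(0,1]$, which is exactly the decomposition $\beta=s+r$, $r\in(0,1]$, used in \cite{Jiao2023}) and read off the width, depth, and error bound directly. (2) Your observation that scaling the final affine map $M_{L}\mapsto BM_{L}$ preserves the architecture $\overline\cF_{(L,p)}$ is correct and is the only nontrivial thing to check in the rescaling route. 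The width and depth formulas and the uniformity of the error over the whole cube $\cro{0,1}^{d}$ do transfer verbatim from \cite{Jiao2023}, as you conjecture.
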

In fact, several approximation results have been established regarding the H\"older class of smoothness functions, for instance in \cite{Minshuo2019}, \cite{Schmidt2020}, and \cite{Nakada2020}, among others. The reason why we consider using Proposition~\ref{appro-holder} is mainly due to two aspects. Firstly, unlike most existing results where the prefactor in the error bound depends exponentially on the dimension $d$, the prefactor in this error bound depends only polynomially on the dimension $d$. Secondly, it offers specific structures of the neural networks to be considered, thus making the result more informative.

Building upon the results of Theorem~\ref{general}, Proposition~\ref{vc-dim}, \ref{appro-holder}, and Lemma~\ref{network-dense}, we derive the risk bounds for the $\ell$-type estimators implemented by different networks as follows.
\begin{cor}\label{risk-holder-nm}
For any $N,M\in\N^{*}$, no matter what the distribution of $W$ is, the $\ell$-type estimator $\widehat f(\bsX)$ taking values in the network class $\cF_{(L,p,K)}$ with
$$p=38(\lfloor\alpha\rfloor+1)^{2}3^{d}d^{\lfloor\alpha\rfloor+1}N\lceil\log_{2}(8N)\rceil,$$
$$L=21(\lfloor\alpha\rfloor+1)^{2}M\lceil\log_{2}(8M)\rceil+2d$$ and a sufficiently large $K$,
satisfies that for any $f\et\in\cH^{\alpha}(\cro{0,1}^{d},B)$ and $n\geq V(\overline\cF_{(L,p)})$, 
\begin{equation}\label{risk-N-M}
\E\cro{\ell(Q_{f\et},Q_{\widehat f})}\leq C_{\epsilon,\sigma,\alpha,d,B}\cro{(NM)^{-2\alpha/d}+\frac{NM}{\sqrt{n}}\left(\log_{2}(2N)\log_{2}(2M)\right)^{3/2}},
\end{equation}
where $C_{\epsilon,\sigma,\alpha,d,B}$ is a constant depending on $\epsilon,\sigma,\alpha,d$ and $B$ only.

In particular, if we take $N=1$ and $M=\lceil n^{d/2(d+2\alpha)}\rceil$, combining Lemma~\ref{l1-ell} with \eref{risk-N-M} allows us to deduce that:
\begin{equation}\label{ell-holder-convege-rate}
\E\cro{\ell(Q_{f\et},Q_{\widehat f})}\leq C_{\epsilon,\sigma,\alpha,d,B}n^{-\frac{\alpha}{d+2\alpha}}\left(\log n\right)^{3/2}.
\end{equation}
For $n$ being sufficiently large such that the right-hand side of \eref{ell-holder-convege-rate} is smaller that 0.78, according to Lemma~\ref{l1-ell}, \eref{ell-holder-convege-rate} is equivalent to 
\[
\E\cro{\|{f\et}-\widehat f\|_{1,P_{W}}}\leq C_{\epsilon,\sigma,\alpha,d,B}n^{-\frac{\alpha}{d+2\alpha}}\left(\log n\right)^{3/2},
\]
which is the typical rate of convergence with respect to the $\L_{1}(P_{W})$-norm.
\end{cor}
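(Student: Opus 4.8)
The plan is to feed the approximation bound of Proposition~\ref{appro-holder} and the VC‑dimension bound of Proposition~\ref{vc-dim} into the general risk bound \eref{riskbound} of Theorem~\ref{general}, and then optimize over the free integers $N,M$. Since the covariates are i.i.d.\ with $Q\et=Q_{f\et}$ and $\cF_{(L,p,K)}$ is a countable subset of the VC‑subgraph class $\overline\cF_{(L,p,K)}\subset\overline\cF_{(L,p)}$, Theorem~\ref{general} --- in the form \eref{riskbound}, together with the remark that lets one replace the countable model by its pointwise closure, which is legitimate here by Lemma~\ref{network-dense} and the $\|\cdot\|_\infty$‑continuity of $f\mapsto\ell(Q_{f\et},Q_f)$ coming from \eref{tvconnect} --- gives, for $n\ge V(\overline\cF_{(L,p)})$,
\begin{equation*}
\E\cro{\ell(Q_{f\et},Q_{\widehat f})}\le C_\epsilon\cro{\inf_{f\in\overline\cF_{(L,p,K)}}\ell(Q_{f\et},Q_f)+\sqrt{\frac{V(\overline\cF_{(L,p)})}{n}}}.
\end{equation*}

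For the bias term, Proposition~\ref{appro-holder} produces, with the prescribed width $p$ and depth $L$, a function $\overline f\in\overline\cF_{(L,p)}$ with $\|f\et-\overline f\|_\infty\le 19B(\lfloor\alpha\rfloor+1)^2 d^{\lfloor\alpha\rfloor+(\alpha\vee1)/2}(NM)^{-2\alpha/d}$; choosing $K$ large enough that the weights of the network constructed in \cite{Jiao2023} lie in $\cro{-K,K}$ (possible because $f\et$, hence a good approximant of it, is uniformly bounded) gives $\overline f\in\overline\cF_{(L,p,K)}$. The upper bound in \eref{tvconnect} of Lemma~\ref{l1-ell} then turns this into $\inf_{f\in\overline\cF_{(L,p,K)}}\ell(Q_{f\et},Q_f)\le\ell(Q_{f\et},Q_{\overline f})\le\|f\et-\overline f\|_\infty/(\sqrt{2\pi}\sigma)\le C_{\sigma,\alpha,d,B}(NM)^{-2\alpha/d}$.

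The only genuinely delicate step --- and the one I expect to require care --- is reconciling $\sqrt{V(\overline\cF_{(L,p)})/n}$ with the stated logarithmic exponent. One first checks $\lceil\log_2(8N)\rceil\asymp\log_2(2N)$ and $\lceil\log_2(8M)\rceil\asymp\log_2(2M)$, so $p\asymp_{\alpha,d}N\log_2(2N)$ and $L\asymp_{\alpha,d}M\log_2(2M)$. In Proposition~\ref{vc-dim}, $s=p^2(L-1)+p(L+d+1)+1\lesssim_d p^2L$, whence $(L+1)(s+1)\lesssim_d p^2L^2$, while the logarithmic factor satisfies $\log_2\cro{2\bigl(2e(L+1)(pL/2+1)\bigr)^2}\lesssim_{\alpha,d}\log_2(2N)+\log_2(2M)$; absorbing this last factor with the elementary bound $a+b\le 2ab$ (valid for $a,b\ge1$) gives $V(\overline\cF_{(L,p)})\lesssim_{\alpha,d}N^2M^2\bigl(\log_2(2N)\log_2(2M)\bigr)^3$, i.e.\ $\sqrt{V(\overline\cF_{(L,p)})/n}\lesssim_{\alpha,d}(NM/\sqrt n)\bigl(\log_2(2N)\log_2(2M)\bigr)^{3/2}$. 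Combining this with the bias estimate yields \eref{risk-N-M}.

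Finally I would substitute $N=1$ and $M=\lceil n^{d/(2(d+2\alpha))}\rceil$. Then $\log_2(2N)=1$; from $M\ge n^{d/(2(d+2\alpha))}$ the first term in \eref{risk-N-M} is $\le n^{-\alpha/(d+2\alpha)}$; and from $M\le 2n^{d/(2(d+2\alpha))}$ one gets $M/\sqrt n\le 2n^{-\alpha/(d+2\alpha)}$ with $\log_2(2M)\lesssim_{\alpha,d}\log n$, so the second term is $\lesssim_{\alpha,d}n^{-\alpha/(d+2\alpha)}(\log n)^{3/2}$; this is \eref{ell-holder-convege-rate} on the admissible range $n\ge V(\overline\cF_{(L,p)})$ (which, with this choice of $N,M$, amounts to $n$ exceeding a threshold depending on $\alpha,d$; for smaller $n$ the inequality is vacuous once the constant is enlarged, since $\ell\le1$). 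The concluding equivalence with the $\L_1(P_W)$‑risk bound then follows from \eref{l1-riskbound-min} (equivalently, from the lower bound in \eref{tvconnect}): once the right‑hand side of \eref{ell-holder-convege-rate} is below $0.78$ the truncation at $1$ is inactive, so $\E[\|f\et-\widehat f\|_{1,P_W}]$ obeys the same rate, up to the factor $\sqrt{2\pi}\,\sigma$.
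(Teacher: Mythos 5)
Your proposal follows the paper's proof essentially step for step: apply the general risk bound \eref{riskbound}, use Lemma~\ref{network-dense} and \eref{tvconnect} to pass from the countable model to $\overline\cF_{(L,p,K)}$ and from $\ell$ to the $\L_1(P_W)$-approximation error, then plug in the approximation bound of Proposition~\ref{appro-holder} and the VC-dimension bound of Proposition~\ref{vc-dim}, and finally optimize over $N,M$. The VC-dimension computation (absorbing $\log_2(pL^2)\lesssim\log_2(2N)+\log_2(2M)$ into $\bigl(\log_2(2N)\log_2(2M)\bigr)$ via $a+b\le 2ab$) and the choice $N=1$, $M=\lceil n^{d/(2(d+2\alpha))}\rceil$ match the paper's argument exactly, so this is the same proof.
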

The proof of Corollary~\ref{risk-holder-nm} is deferred to Section~\ref{risk-nm-proof}. Our remarks are provided below.
\begin{rmk}
As we will see later it in Theorem~\ref{lower-composite}, the convergence rate $n^{-\alpha/(d+2\alpha)}$ is minimax optimal with respect to the distance $\ell(\cdot,\cdot)$, at least when $W$ is uniformly distributed on $\cro{0,1}^{d}$. Therefore, the risk bound \eref{ell-holder-convege-rate} we obtained is optimal up to a logarithmic factor. As it was shown in Section~7 of \cite{Baraud2021}, $\ell$-estimators are not always optimal when addressing various estimation problems, which differs from $\rho$-estimators. However, by combining the upper bound \eref{ell-holder-convege-rate} and the subsequent lower bound stated in Theorem~\ref{lower-composite}, we demonstrate that implementing the $\ell$-type estimation procedure is optimal within our framework and offers more robustness compared to $\rho$-estimators.
\end{rmk}
\begin{rmk}
A noteworthy aspect of the presented result is that the stochastic error is not dependent on the upper bound of the sup-norms for all functions within the class $\overline\cF_{(L,p,K)}$. This is not the case, for example, in the results established in Lemma~4 of \cite{Schmidt2020} and Theorem~4.2 of \cite{Jiao2023}, both of which analyze the performance of the least squares estimator. As a consequence, the final risk bound they established deteriorates with the enlargement of the model they considered due to the inclusion of such an upper bound in their stochastic error terms. From this perspective, our estimation method does not suffer from this drawback. Therefore, we can accommodate a sufficiently large value of $K$ without compromising the risk bound for the resulting estimator.
\end{rmk}

\section{Circumventing the curse of dimensionality}\label{curse}
As we observed in Section~\ref{neural}, the minimax optimal rate over an $\alpha$-H\"older class on $\sW=\cro{0,1}^{d}$ is of order $n^{-\alpha/(d+2\alpha)}$. This rate slows down significantly as the dimensionality $d$ increases, a phenomenon known as the curse of dimensionality. To overcome this issue, in this section, we introduce structural assumptions on $f\et$ and construct specific models using deep ReLU neural networks to implement our procedure.

One natural structure for the regression function $f\et$ for neural networks to exhibit advantages is a composition of multiple functions, which was previously explored by \cite{Schmidt2020}. More precisely, for any $k\in\N^{*}$, ${\bf{d}}=(d_{0},\ldots,d_{k})\in(\N^{*})^{k+1}$, ${\bf{t}}=(t_{0},\ldots,t_{k})\in(\N^{*})^{k+1}$, ${\bs{\alpha}}=(\alpha_{0},\ldots,\alpha_{k})\in(\R_{+}^{*})^{k+1}$ and a finite constant $B\geq0$, we denote $\cF(k,{\bf{d}},{\bf{t}},{\bs{\alpha}},B)$ the class of functions as, 
\begin{align}
\cF(k,{\bf{d}},{\bf{t}},{\bs{\alpha}},B)=&\left\{f_{k}\circ\cdots\circ f_{0},\;f_{i}=(f_{ij})_{j}:\cro{a_{i},b_{i}}^{d_{i}}\rightarrow\cro{a_{i+1},b_{i+1}}^{d_{i+1}},\right.\nonumber\\
&\quad\quad\left.f_{ij}\in\cH^{\alpha_{i}}(\cro{a_{i},b_{i}}^{t_{i}},B)\;\mbox{and } (|a_{i}|\vee|b_{i}|)\leq B\right\},\label{def-f-composite}
\end{align}
where $a_{0}=0$, $b_{0}=1$, $d_{0}=d$ and $d_{k+1}=1$. In what follows, we assume the existence of an underlying regression function $f\et=f_{k}\circ\cdots\circ f_{0}\in\cF(k,{\bf{d}},{\bf{t}},{\bs{\alpha}},B)$ such that $Q\et=Q_{f\et}$ (or at least $Q\et$ is close to $Q_{f\et}$ with respect to $\ell$), where the values of $k$, ${\bf{d}}$, ${\bf{t}}$ and ${\bs{\alpha}}$ are considered to be known. We will then proceed to construct suitable networks for approximating the class $\cF(k,{\bf{d}},{\bf{t}},{\bs{\alpha}},B)$ and implement the $\ell$-type estimation procedure to derive the final estimator $Q_{\widehat f}$ of $Q\et$.


In such a composition structure, the power of approximation based on the neural network actually relies on the so-called effective smoothness indices, which are defined as $$\alpha^{*}_{i}=\alpha_{i}\prod_{l=i+1}^{k}\left(\alpha_{l}\wedge1\right),\quad\mbox{for }\ i\in\{0,\ldots,k-1\}$$ and $\alpha^{*}_{k}=\alpha_{k}$.  
Based on Proposition~\ref{appro-holder} and the basic operation rules of the neural networks, we establish the following result to approximate any function belonging to $\cF(k,{\bf{d}},{\bf{t}},{\bs{\alpha}},B)$.
\begin{prop}\label{approximate-add}
Assuming that $f\in\cF(k,{\bf{d}},{\bf{t}},{\bs{\alpha}},B)$ with $\cF(k,{\bf{d}},{\bf{t}},{\bs{\alpha}},B)$ defined by \eref{def-f-composite}. For all $i\in\{0,\ldots,k\}$, denote $$p_{i}=114(\lfloor\alpha_{i}\rfloor+1)^{2}3^{t_{i}}t_{i}^{\lfloor\alpha_{i}\rfloor+1}$$ and $$L_{i}=21(\lfloor\alpha_{i}\rfloor+1)^{2}\lceil n^{t_{i}/2(t_{i}+2\alpha_{i}^{*})}\rceil\lceil\log_{2}(8\lceil n^{t_{i}/2(t_{i}+2\alpha_{i}^{*})}\rceil)\rceil+2t_{i}.$$ There exists a function $\overline f$ implemented by a ReLU network with a width of $\overline p=\max_{i\in\{0,\ldots,k-1\}}d_{i+1}p_{i}$ and a depth of $\overline L=k+\sum_{i=0}^{k}L_{i}$ such that
\begin{align*}
&\|f-\overline f\|_{\infty}\\
&\quad\leq (2B)^{1+\sum_{i=1}^{k}\alpha_{i}}\left(\prod_{i=0}^{k}\sqrt{d_{i}}\right)\cro{\sum_{i=0}^{k}C_{\alpha_{i},t_{i},B}^{\prod_{l=i+1}^{k}(\alpha_{l}\wedge1)}(\lceil n^{t_{i}/2(t_{i}+2\alpha_{i}^{*})}\rceil)^{-2\alpha^{*}_{i}/t_{i}}},
\end{align*}
where $$C_{\alpha_{i},t_{i},B}=19(2B)^{\alpha_{i}+1}(\lfloor\alpha_{i}\rfloor+1)^{2}t_{i}^{\lfloor\alpha_{i}\rfloor+(\alpha_{i}\vee1)/2}.$$ 

%
\end{prop}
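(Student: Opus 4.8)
The plan is to approximate the composition $f = f_k \circ \cdots \circ f_0$ layer by layer, replacing each $f_i$ by a ReLU network approximant $\overline f_i$ built from Proposition~\ref{appro-holder}, and then to control the error propagation through the composition using the H\"older moduli of continuity. First, for each $i \in \{0,\ldots,k\}$ and each coordinate map $f_{ij} \in \cH^{\alpha_i}(\cro{a_i,b_i}^{t_i},B)$, I would rescale the domain $\cro{a_i,b_i}^{t_i}$ to $\cro{0,1}^{t_i}$ (absorbing the factor $(|a_i|\vee|b_i|)\le B$, which explains the $2B$ appearing throughout and the replacement of $B$ by $(2B)^{\alpha_i+1}$ in $C_{\alpha_i,t_i,B}$), apply Proposition~\ref{appro-holder} with $N=1$ and $M=\lceil n^{t_i/2(t_i+2\alpha_i^*)}\rceil$, and obtain a network of width $\le p_i$ (the factor $114 = 3\cdot 38$ coming from the $3^{t_i}t_i^{\lfloor\alpha_i\rfloor+1}$ prefactor, with the extra factor handling the rescaling) and depth $\le L_i$ approximating $f_{ij}$ to within $C_{\alpha_i,t_i,B}\,M^{-2\alpha_i/t_i}$ in sup-norm. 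Stacking the $d_{i+1}$ coordinate networks in parallel gives $\overline f_i$ of width $d_{i+1}p_i$ and depth $L_i$, whence composing the $k+1$ blocks (inserting identity-synchronization layers, accounting for the additive $k$ in $\overline L$) yields $\overline f$ of width $\overline p = \max_i d_{i+1}p_i$ and depth $\overline L = k + \sum_i L_i$.

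Next comes the error analysis, which is the heart of the argument. Write $g_i = f_i \circ \cdots \circ f_0$ and $\overline g_i = \overline f_i \circ \cdots \circ \overline f_0$. By a telescoping decomposition,
\[
\|g_k - \overline g_k\|_\infty \le \sum_{i=0}^{k} \big\| f_k\circ\cdots\circ f_{i+1}\circ\overline g_i - f_k\circ\cdots\circ f_{i+1}\circ f_i\circ\overline g_{i-1} \big\|_\infty,
\]
and each summand is bounded, using that each $f_l$ for $l>i$ is coordinatewise $\alpha_l$-H\"older (hence its modulus of continuity on a bounded set behaves like $t^{\,(\alpha_l\wedge 1)}$ up to the $\sqrt{d_l}$ factor coming from the $\|\cdot\|_2$ in the H\"older seminorm), by a product of such moduli applied to the layer-$i$ error $\|\overline f_i(\overline g_{i-1}) - f_i(\overline g_{i-1})\|_\infty \le \sqrt{d_{i+1}}\,C_{\alpha_i,t_i,B} M_i^{-2\alpha_i/t_i}$. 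Iterating the H\"older estimates through layers $i+1,\ldots,k$ produces the exponent $\prod_{l=i+1}^k(\alpha_l\wedge 1)$ on both the error term and the constant $C_{\alpha_i,t_i,B}$, which is exactly why the effective smoothness $\alpha_i^* = \alpha_i\prod_{l=i+1}^k(\alpha_l\wedge1)$ governs the rate; collecting the $\sqrt{d_i}$ factors gives $\prod_{i=0}^k\sqrt{d_i}$, and the accumulated $B$-powers assemble into the prefactor $(2B)^{1+\sum_{i=1}^k\alpha_i}$. Summing over $i$ gives the stated bound.

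The main obstacle I anticipate is bookkeeping the constants and the H\"older-modulus propagation cleanly: one must be careful that the H\"older property only gives a power $\alpha_l\wedge 1$ (not $\alpha_l$) when the increment may exceed $1$, so the modulus of $f_l$ on inputs at distance $\delta$ is controlled by something like $B\sqrt{d_l}(\delta^{\alpha_l\wedge1} \vee \delta)$, and since the layer errors $\delta$ will be small (for $n$ large) the $\delta^{\alpha_l\wedge1}$ branch dominates — but making this rigorous requires either assuming $n$ large enough or carrying the $\vee\,\delta$ term and checking it is absorbed. A secondary technical point is the domain-compatibility: one needs $\overline g_{i-1}$ to take values in (a slight enlargement of) $\cro{a_i,b_i}^{d_i}$ so that $f_i$ and its H\"older bound apply; this is handled by noting the errors are small and, if necessary, composing with a clipping/truncation network (realizable in ReLU with bounded depth and width, and $1$-Lipschitz so it does not worsen constants), or by extending each $f_{ij}$ H\"older-continuously to a neighborhood before invoking Proposition~\ref{appro-holder}. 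Modulo these care points, the computation is routine, so I would state the modulus-of-continuity lemma explicitly, then run the telescoping sum.
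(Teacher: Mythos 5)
Your proposal is correct and essentially mirrors the paper's proof. The paper proceeds just as you sketch: rescale each $f_i$ to a map $g_i$ into $\cro{0,1}^{d_{i+1}}$ (with explicit constants $Q_0=1$, $Q_i=(2B)^{\alpha_i}$, $Q_k=2^{\alpha_k}B^{\alpha_k+1}$), apply Proposition~\ref{appro-holder} coordinatewise with $N=1$ (note the factor $3$ in $p_i=114(\ldots)=3\cdot 38(\ldots)$ is $\lceil\log_2(8N)\rceil$ at $N=1$, not a rescaling artifact), parallelize and clip each coordinate to $\cro{0,1}$ via $\widetilde g_{ij}=(\overline g_{ij}\vee0)\wedge1$ (this clipping layer is the source of the additive $k$ in $\overline L$ and, because all intermediate outputs are then confined to $\cro{0,1}$, simultaneously resolves both your domain-compatibility and your ``$\vee\,\delta$'' modulus concerns), then compose and bound the composition error via the modulus-of-continuity estimate of Lemma~\ref{composite-holder-sup} (a variant of Schmidt-Hieber's Lemma 3), which is precisely the telescoping inequality you propose to establish.
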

The proof of Proposition~\ref{approximate-add} is postponed to Section~\ref{approximation-composite-proof}. The presented approximation result is notable for offering a well-defined structure for neural networks to effectively implement diverse estimation approaches, as compared to the sparsity-based networks considered in \cite{Schmidt2020}. From this point of view, Proposition~\ref{approximate-add} is more informative. 
Building upon the results of Theorem~\ref{general}, Proposition~\ref{vc-dim}, \ref{approximate-add}, and Lemma~\ref{network-dense}, we can derive the following risk bound for the $\ell$-type estimators.



\begin{cor}\label{composite-holder-riskbound}
Assume that $f\et\in\cF(k,{\bf{d}},{\bf{t}},{\bs{\alpha}},B)$ with $\cF(k,{\bf{d}},{\bf{t}},{\bs{\alpha}},B)$ defined by \eref{def-f-composite}. For all $i\in\{0,\ldots,k\}$, we set $$p_{i}=114(\lfloor\alpha_{i}\rfloor+1)^{2}3^{t_{i}}t_{i}^{\lfloor\alpha_{i}\rfloor+1}$$ and $$L_{i}=21(\lfloor\alpha_{i}\rfloor+1)^{2}\lceil n^{t_{i}/2(t_{i}+2\alpha_{i}^{*})}\rceil\lceil\log_{2}(8\lceil n^{t_{i}/2(t_{i}+2\alpha_{i}^{*})}\rceil)\rceil+2t_{i}.$$ Whatever the distribution of $W$, any $\ell$-type estimator $\widehat f(\bsX)$ implemented by a ReLU neural network $\cF_{(\overline L,\overline p,K)}$ with 
$$\overline L=k+\sum_{i=0}^{k}L_{i},\quad\quad\overline p=\max_{i=0,\ldots,k}d_{i+1}p_{i}$$ and a sufficiently large $K$,
satisfies that for all $n\geq V(\overline\cF_{(\overline L,\overline p)})$,
\begin{equation}\label{composite-holder-risk}
\E\cro{\ell\pa{Q_{f\et},Q_{\widehat f}}}\leq C_{\epsilon,\sigma,k,{\bf{d}},{\bf{t}},{\bs{\alpha}},B}\left(\sum_{i=0}^{k}n^{-\frac{\alpha_{i}^{*}}{t_{i}+2\alpha_{i}^{*}}}\right)(\log n)^{3/2},
\end{equation}
where $C_{\epsilon,\sigma,k,{\bf{d}},{\bf{t}},{\bs{\alpha}},B}$ is a numerical constant depending on $\epsilon,\sigma,k,{\bf{d}},{\bf{t}},{\bs{\alpha}},B$ only.
\end{cor}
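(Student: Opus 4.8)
The plan is to feed the network approximation of Proposition~\ref{approximate-add} into the general risk bound \eref{riskbound} of Theorem~\ref{general}, controlling the VC-dimension of the ambient model through Proposition~\ref{vc-dim} and passing from real-parameter to rational-parameter networks by Lemma~\ref{network-dense} together with Lemma~\ref{l1-ell}.

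First I would apply Proposition~\ref{approximate-add} to obtain a function $\overline f$ realised by a ReLU network of depth $\overline L=k+\sum_{i=0}^{k}L_{i}$ and width bounded by $\overline p=\max_{i}d_{i+1}p_{i}$ satisfying the displayed $\|\cdot\|_{\infty}$-bound. Since $\lceil x\rceil\ge x$ and all the exponents are positive, each term $\big(\lceil n^{t_{i}/2(t_{i}+2\alpha_{i}^{*})}\rceil\big)^{-2\alpha_{i}^{*}/t_{i}}$ is at most $n^{-\alpha_{i}^{*}/(t_{i}+2\alpha_{i}^{*})}$, so $\|f\et-\overline f\|_{\infty}\le C_{1}\sum_{i=0}^{k}n^{-\alpha_{i}^{*}/(t_{i}+2\alpha_{i}^{*})}$ for a constant $C_{1}=C_{1}(k,{\bf{d}},{\bf{t}},{\bs{\alpha}},B)$. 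After a routine padding of the hidden layers with zero neurons this network belongs to the rectangular architecture $\overline\cF_{(\overline L,\overline p)}$, and taking $K$ larger than all of its parameters places $\overline f$ in $\overline\cF_{(\overline L,\overline p,K)}$. By Lemma~\ref{l1-ell} (valid because $W_{1},\dots,W_{n}$ are i.i.d.\ with common law $P_{W}$) and the upper bound in \eref{tvconnect}, $\ell(Q_{g},Q_{g'})\le\|g-g'\|_{1,P_{W}}/(\sqrt{2\pi}\sigma)\le\|g-g'\|_{\infty}/(\sqrt{2\pi}\sigma)$ for any real-valued measurable $g,g'$ and any $P_{W}$; since $\cF_{(\overline L,\overline p,K)}$ is $\|\cdot\|_{\infty}$-dense in $\overline\cF_{(\overline L,\overline p,K)}$ by Lemma~\ref{network-dense} and $\ell$ obeys the triangle inequality, this yields $\inf_{f\in\cF_{(\overline L,\overline p,K)}}\ell(Q_{f\et},Q_{f})\le\|f\et-\overline f\|_{\infty}/(\sqrt{2\pi}\sigma)\le C_{2}\sum_{i=0}^{k}n^{-\alpha_{i}^{*}/(t_{i}+2\alpha_{i}^{*})}$.

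Next I would bound $V:=V(\overline\cF_{(\overline L,\overline p)})$ using Proposition~\ref{vc-dim}. Writing $s=\overline p^{2}(\overline L-1)+\overline p(\overline L+d+1)+1\le C_{3}\overline L\,\overline p^{2}$, where $\overline p$ is a constant independent of $n$, inequality \eref{vc-infty-bound} gives $V\le C_{4}\,\overline L^{2}\overline p^{2}\log(\overline L\,\overline p)\le C_{5}\,\overline L^{2}\log\overline L$. Since $L_{i}\le C_{6}\,n^{t_{i}/2(t_{i}+2\alpha_{i}^{*})}\log n$, we get $\overline L\le C_{7}(\log n)\max_{i}n^{t_{i}/2(t_{i}+2\alpha_{i}^{*})}$ and $\log\overline L\le C_{8}\log n$, hence for $n$ large, $V\le C_{9}(\log n)^{3}\max_{i}n^{t_{i}/(t_{i}+2\alpha_{i}^{*})}$. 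Using the identity $\frac{t_{i}}{2(t_{i}+2\alpha_{i}^{*})}-\frac12=-\frac{\alpha_{i}^{*}}{t_{i}+2\alpha_{i}^{*}}$, this gives $\sqrt{V/n}\le C_{10}(\log n)^{3/2}\max_{i}n^{-\alpha_{i}^{*}/(t_{i}+2\alpha_{i}^{*})}$.

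Finally, since $n\ge V(\overline\cF_{(\overline L,\overline p)})$ and, in the present i.i.d.\ regression setting, $\gQ\et=\gQ_{f\et}$, the risk bound \eref{riskbound} of Theorem~\ref{general} applied to the model $\cF_{(\overline L,\overline p,K)}$ reads $\E\big[\ell(Q_{f\et},Q_{\widehat f})\big]\le C_{\epsilon}\big(\inf_{f\in\cF_{(\overline L,\overline p,K)}}\ell(Q_{f\et},Q_{f})+\sqrt{V/n}\big)$; inserting the two bounds above and absorbing $\max_{i}(\cdot)\le\sum_{i}(\cdot)$ and $1\le(\log n)^{3/2}$ produces \eref{composite-holder-risk} with a constant of the stated form. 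I expect the main obstacle to be the second step: one must check that the crude VC bound \eref{vc-infty-bound}, which is quadratic in the depth, still matches the approximation rate after the $(\log n)^{3/2}$ loss, which forces careful tracking of how the ceilings and the $\lceil\log_{2}(8\lceil\cdot\rceil)\rceil$ factors hidden in the $L_{i}$ contribute only logarithmic terms, all while keeping every constant independent of $n$ yet correctly dependent on $k,{\bf{d}},{\bf{t}},{\bs{\alpha}},B$ (and on $d=d_{0}$).
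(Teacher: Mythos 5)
Your proposal is correct and follows essentially the same path as the paper's proof: apply Proposition~\ref{approximate-add} for the approximation error, Proposition~\ref{vc-dim} for the VC-dimension of $\overline\cF_{(\overline L,\overline p,K)}$, pass to the rational-parameter class via Lemma~\ref{network-dense} and convert to the $\ell$-distance via Lemma~\ref{l1-ell}/\eref{tvconnect}, and then combine everything through \eref{riskbound}. The only cosmetic difference is that you track the VC and depth bounds through $\max_{i}L_{i}$ while the paper uses $\sum_{i}L_{i}$, which are equivalent up to constants depending on $k$, so the final bound and the exponent identity $t_{i}/\bigl(2(t_{i}+2\alpha_{i}^{*})\bigr)-1/2=-\alpha_{i}^{*}/(t_{i}+2\alpha_{i}^{*})$ come out the same.
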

We left the proof of Corollary~\ref{composite-holder-riskbound} to Section~\ref{proof-composite-holder}. Our comments are presented as follows.
\begin{rmk}
Denoting $$\phi_{n}=\max_{i=0,\ldots,k}n^{-\alpha^{*}_{i}/(2\alpha^{*}_{i}+t_{i})},$$ the result \eref{composite-holder-risk} we have established indicates that, up to a logarithmic term, the $\ell$-type estimator $\widehat f$ based on the class $\cF_{(\overline L,\overline p,K)}$ converges to the regression function $f\et$ at the rate of $\phi_{n}$. Furthermore, for sufficiently large $n$ such that the right-hand side of \eref{composite-holder-risk} is smaller than 0.78, upon applying Lemma~\ref{l1-ell}, we obtain 
\[
\E\cro{\|f\et-\widehat f\|_{1,P_{W}}}\leq C_{\epsilon,\sigma,k,{\bf{d}},{\bf{t}},{\bs{\alpha}},B}\phi_{n}(\log n)^{3/2}.
\]
This aligns with the risk bound established in Theorem 1 of \cite{Schmidt2020} for the least squares estimator with respect to the $\L_{2}(P_{W})$-norm. 
\end{rmk}
\begin{rmk}
If the situation deviates from the ideal scenario where $Q\et=Q_{f\et}$ and $f\et\in\cF(k,{\bf{d}},{\bf{t}},{\bs{\alpha}},B)$, a bias term $\inf_{f\in\cF(k,{\bf{d}},{\bf{t}},{\bs{\alpha}},B)}\ell(Q\et,Q_{f})$ will be included in the final risk bound \eref{composite-holder-risk}. However, as long as the bias term is not significantly larger than the quantity on the right-hand side of \eref{composite-holder-risk}, the accuracy of the resulting estimator $\widehat f$ remains on the same order of magnitude as in the ideal case. This follows from the robustness property of the $\ell$-type estimator as we have explained in Section~\ref{ell-construction-section}.


\end{rmk}
The following lower bound demonstrates that the convergence rate $\phi_{n}$ is minimax optimal, at least when $W$ is uniformly distributed on $\cro{0,1}^{d_{0}}$.
\begin{thm}\label{lower-composite}
Let $P_{W}$ be the uniform distribution on $\cro{0,1}^{d_{0}}$. For any $k\in\N^{*}$, ${\bf{d}}\in(\N^{*})^{k+1}$, ${\bf{t}}\in(\N^{*})^{k+1}$ such that $t_{j}\leq\min(d_{0},\ldots,d_{j-1})$ for all $j$, any ${\bm{\alpha}}\in(\R_{+}^{*})^{k+1}$ and $B>0$ large enough, there exists a positive constant $c$ such that
\[
\inf_{\widehat f}\sup_{f\et\in\cF(k,{\bf{d}},{\bf{t}},{\bm{\alpha}},B)}\E\cro{\ell\pa{Q_{f\et},Q_{\widehat f}}}\geq c\phi_{n},
\]
where the infimum runs among all possible estimators of $f\et$.
\end{thm}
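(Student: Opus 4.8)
The plan is to reduce the lower bound over the composite class $\cF(k,\bm{d},\bm{t},\bm{\alpha},B)$ to a separate lower bound for each index $i\in\{0,\dots,k\}$, and then take the maximum. Fix $i$ and consider perturbing only the $i$-th component $f_i$ of the composition $f^\star=f_k\circ\cdots\circ f_0$ while freezing all other components at some fixed smooth choice; because $t_j\le\min(d_0,\dots,d_{j-1})$, one can build the fixed surrounding maps so that the composition $f_{i-1}\circ\cdots\circ f_0$ is a coordinate projection onto $[a_i,b_i]^{t_i}$ (pushing the uniform distribution on $[0,1]^{d_0}$ forward to something comparable to the uniform distribution on a $t_i$-dimensional cube) and the tail $f_k\circ\cdots\circ f_{i+1}$ contracts distances by a fixed factor reflecting the effective smoothness $\alpha_i^\star=\alpha_i\prod_{l>i}(\alpha_l\wedge 1)$. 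Concretely, inserting a small H\"older-$\alpha_i$ bump of amplitude $\eta$ into $f_i$ at scale $h$ (so $\eta\asymp h^{\alpha_i}$ for the H\"older constraint, with at most a fixed number of coordinates perturbed) produces, after composition with the contracting tail, a change in $f^\star$ of order $\eta^{\prod_{l>i}(\alpha_l\wedge 1)}\asymp h^{\alpha_i^\star}$ on a region of $P_W$-measure of order $h^{t_i}$.

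Next I would set up a standard Assouad / Varshamov–Gilbert hypercube argument. Partition the relevant $t_i$-dimensional cube into $\asymp h^{-t_i}$ disjoint sub-cubes of side $h$, place a sign-indexed bump on each, and index the resulting functions by $\omega\in\{0,1\}^{m}$ with $m\asymp h^{-t_i}$. For two parameters differing in one coordinate, $\ell(Q_{f^\star_\omega},Q_{f^\star_{\omega'}})$ is, by Lemma~\ref{l1-ell}, comparable to $\|f^\star_\omega-f^\star_{\omega'}\|_{1,P_W}/(\sqrt{2\pi}\sigma)$ as long as the per-bump sup-norm is below $\sigma$; this is $\asymp h^{\alpha_i^\star}\cdot h^{t_i}$. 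The Kullback–Leibler divergence between the $n$-fold product laws is controlled through the chi-square / $\L_2$ distance of the Gaussian shifts: $\mathrm{KL}\lesssim n\,\|f^\star_\omega-f^\star_{\omega'}\|_{2,P_W}^2/\sigma^2\asymp n\,h^{2\alpha_i^\star}h^{t_i}$. Choosing $h\asymp n^{-1/(2\alpha_i^\star+t_i)}$ makes the KL between neighbours of order a constant, so Assouad's lemma yields
\[
\inf_{\widehat f}\sup_{f^\star}\E\cro{\ell(Q_{f^\star},Q_{\widehat f})}\gtrsim m\cdot h^{\alpha_i^\star}h^{t_i}\asymp h^{\alpha_i^\star}=n^{-\alpha_i^\star/(2\alpha_i^\star+t_i)},
\]
uniformly over the subfamily, hence over $\cF(k,\bm{d},\bm{t},\bm{\alpha},B)$ once $B$ is taken large enough to accommodate the fixed surrounding maps and the bumps. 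Taking the best $i$ gives the claimed $\phi_n=\max_i n^{-\alpha_i^\star/(2\alpha_i^\star+t_i)}$.

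The main obstacle I anticipate is the bookkeeping around the effective smoothness indices: one must verify that freezing the outer maps $f_{i+1},\dots,f_k$ can genuinely be done with fixed H\"older-$\alpha_l$ functions that are globally bi-Lipschitz-like on the relevant range in precisely the way that turns an $\eta$-perturbation into an $\eta^{\prod_{l>i}(\alpha_l\wedge 1)}$ perturbation — the exponent $\alpha_l\wedge 1$ rather than $\alpha_l$ arises because for $\alpha_l\ge 1$ a Lipschitz outer map only contracts linearly, whereas for $\alpha_l<1$ a genuinely H\"older (non-Lipschitz) map amplifies small perturbations super-linearly, and one has to design $f_l$ to realize exactly the $\eta^{\alpha_l}$ modulus on the active region. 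Making sure the compositions still map the right cubes into the right cubes with all coordinates in $[-B,B]$, and that the perturbed functions stay in the H\"older balls with the \emph{same} radius $B$, is the delicate part; this is why the statement requires $B$ large enough. The reduction to a single index and the Assouad step itself are routine once this construction is in place, and the condition $t_j\le\min(d_0,\dots,d_{j-1})$ is exactly what guarantees there is enough ``room'' in each intermediate space to embed the $t_i$-dimensional perturbation as a function of only $t_i$ of the incoming coordinates.
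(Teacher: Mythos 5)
Your proposal is correct and follows essentially the same route as the paper: pick the worst index $i^*\in\argmin_i \alpha_i^*/(2\alpha_i^*+t_i)$, build a hypercube of compositions by placing scaled smooth bumps in the $i^*$-th layer, realize the inner maps as coordinate projections and the outer maps as power functions $w\mapsto w^{\alpha_l\wedge 1}$ so a bump of amplitude $h^{\alpha_{i^*}}$ contributes $h^{\alpha_{i^*}^*}$ to the final function, convert $\ell$ to an $\L_1$-distance via Lemma~\ref{l1-ell}, bound the discrepancy between neighbouring hypotheses (the paper uses Hellinger where you write KL, but these are interchangeable here), and apply an Assouad-type argument with $h\asymp n^{-1/(2\alpha_{i^*}^*+t_{i^*})}$. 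The paper also spells out the explicit kernel $\cK$ with vanishing boundary derivatives, the check that each perturbed composition stays in $\cF(k,\mathbf d,\mathbf t,\bs\alpha,B)$ for $B$ large, and the constraint on $h_n$ ensuring $\|f_{\bs\varepsilon}-f_{\bs\varepsilon'}\|_1\leq\sqrt{2\pi}\sigma$ for all pairs (not just neighbours), which is needed so the minimum in \eref{tvconnect} is attained at the $\L_1$ term when verifying hypothesis (i) of the Assouad lemma — this is the precise version of your informal remark about the per-bump sup-norm being small compared with $\sigma$.
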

The proof of Theorem~\ref{lower-composite} is deferred to Section~\ref{proof-lower-bound}. 
\section{Proofs}\label{proof-paper}
\subsection{Proof of Lemma~\ref{t-property}}\label{Lem-t-property}
\begin{proof}
Drawing on the formulation of $t_{(f_{1},f_{2})}$ and the definition of $\ell(\cdot,\cdot)$ as provided in \eref{n1-ell-def}, we can deduce that
\begin{align*}
&\E_{P\et}\cro{t_{(f_{1},f_{2})}(W,Y)}\\
=&\int_{\sW}\cro{Q\et_{(w)}\left(q_{f_{2}(w)}>q_{f_{1}(w)}\right)-Q_{f_{1}(w)}\left(q_{f_{2}(w)}>q_{f_{1}(w)}\right)}dP_{W}(w)\\
\leq&\int_{\sW}\|Q\et_{(w)}-Q_{f_{1}(w)}\|_{TV}dP_{W}(w)\\
=&\ \ell(Q\et,Q_{f_{1}}),
\end{align*}
which gives the second inequality in \eref{t-preproperty}. Furthermore, for any two probabilities $P$ and $R$ on the measured space $(\sX,\cX)$, it is well known that the total variation distance can equivalently be written as
$$\|P-R\|_{TV}=R(r>p)-P(r>p),$$ where $p$ and $r$ stand for the respective densities of $P$ and $R$ with respect to some common dominating measure $\mu$. Given this fact, we can calculate
\begin{align*}
&\E_{P\et}\cro{t_{(f_{1},f_{2})}(W,Y)}\\
=&\int_{\sW}\cro{Q\et_{(w)}\left(q_{f_{2}(w)}>q_{f_{1}(w)}\right)-Q_{f_{2}(w)}\left(q_{f_{2}(w)}>q_{f_{1}(w)}\right)}dP_{W}(w)\\
\quad&+\int_{\sW}\cro{Q_{f_{2}(w)}\left(q_{f_{2}(w)}>q_{f_{1}(w)}\right)-Q_{f_{1}(w)}\left(q_{f_{2}(w)}>q_{f_{1}(w)}\right)}dP_{W}(w)\\
\geq&\int_{\sW}\cro{\|Q_{f_{1}(w)}-Q_{f_{2}(w)}\|_{TV}-\|Q\et_{(w)}-Q_{f_{2}(w)}\|_{TV}}dP_{W}(w)\\
=&\ \ell(Q_{f_{1}},Q_{f_{2}})-\ell(Q\et,Q_{f_{2}}),
\end{align*}
which yields the first inequality in \eref{t-preproperty}.
\end{proof}

\subsection{Proof of Theorem~\ref{general}}\label{general-thm-proof}
Prior to proving Theorem~\ref{general}, we will initially establish several auxiliary results that will serve as the foundation for deriving Theorem~\ref{general}.
\begin{prop}\label{vcsubsets}
For any $\overline f\in\cF$, we define $$\sC_{+}(\cF,\overline f)=\left\{\left\{(w,y)\in\sW\times\sY\ s.t.\ q_{f(w)}(y)>q_{\overline f(w)}(y)\right\},\;f\in\cF\right\}$$ and $$\sC_{-}(\cF,\overline f)=\left\{\left\{(w,y)\in\sW\times\sY\ s.t.\ q_{f(w)}(y)<q_{\overline f(w)}(y)\right\},\;f\in\cF\right\}.$$ Under Assumption~\ref{vcass}, the classes of subsets $\sC_{+}(\cF,\overline f)$ and $\sC_{-}(\cF,\overline f)$ are both VC with dimensions not larger than $9.41V$. 
\end{prop}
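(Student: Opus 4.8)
The plan is to reduce the problem about the sets $\sC_{+}(\cF,\overline f)$ and $\sC_{-}(\cF,\overline f)$ to a statement about VC-subgraph classes of \emph{functions}, then combine the VC-subgraph assumption on $\overline\cF$ with standard permanence properties of VC classes. The key observation is that, since $q_{f(w)}$ is the density of $\cN(f(w),\sigma^{2})$, we have the explicit identity
\begin{equation*}
q_{f(w)}(y)>q_{\overline f(w)}(y)\iff (y-\overline f(w))^{2}>(y-f(w))^{2}\iff (f(w)-\overline f(w))\bigl(2y-f(w)-\overline f(w)\bigr)>0 .
\end{equation*}
So the set $\{q_{f(w)}(y)>q_{\overline f(w)}(y)\}$ is the region where a product of two affine-in-$(y$, and in the values $f(w),\overline f(w))$ expressions is positive, namely where $g_{f}(w,y)\eqd (f(w)-\overline f(w))(2y-f(w)-\overline f(w))$ is positive. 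Hence $\sC_{+}(\cF,\overline f)=\{\{g_{f}>0\},\,f\in\cF\}$ and similarly $\sC_{-}(\cF,\overline f)=\{\{g_{f}<0\},\,f\in\cF\}$, and it suffices to bound the VC-subgraph dimension of the class $\{g_{f}:f\in\cF\}\subset\{g_f : f \in \overline\cF\}$ (a class of positive-subgraph sets of a function class has VC index controlled by the VC-subgraph index of that function class, up to taking a constant multiple / adding one; cf.\ Lemma~2.6.15 and the surrounding results in \cite{MR1385671}).

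The main work is then to track how the VC-subgraph dimension propagates through the algebraic operations building $g_{f}$ from $f$. First, $\{f:f\in\overline\cF\}$ is VC-subgraph with dimension $\le V$ by Assumption~\ref{vcass}; the constant function $\overline f$ and the function $(w,y)\mapsto y$ contribute dimension bounded by an absolute constant. Then: (a) $f-\overline f$ and $2y-f-\overline f$ are obtained from these by linear combinations, which keeps one inside a VC-subgraph class whose dimension is a constant multiple of $V$ (adding functions multiplies the dimension bound by a universal factor; see the standard calculus of VC-subgraph classes, e.g.\ the lemmas in \cite{MR1385671} or the packing-number arguments used in this literature); (b) the pointwise product of two VC-subgraph classes is again VC-subgraph with dimension at most a constant multiple of the sum of the two dimensions. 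Chaining these, the class $\{g_{f}:f\in\overline\cF\}$ is VC-subgraph with dimension $\le c\,V$ for an explicit numerical constant $c$, and one checks that the accumulated constant is $\le 9.41$. Finally, passing from the subgraphs of $g_{f}$ to the \emph{strict} positivity sets $\{g_{f}>0\}$ (and negativity sets $\{g_{f}<0\}$) costs nothing more than, at worst, a $+1$ in the dimension or is already absorbed, so $\sC_{+}(\cF,\overline f)$ and $\sC_{-}(\cF,\overline f)$ have VC dimension $\le 9.41V$; since $\cF\subset\overline\cF$, working with $\cF$ only makes the class smaller.

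The main obstacle is bookkeeping the numerical constant $9.41$: each permanence step (linear combination, product, passing to a sign set) inflates the dimension bound by a universal factor coming from a combinatorial inequality (of Sauer--Shelah or dual-shattering type), and one must choose the cleanest available versions of these lemmas so that the product of the constants lands below $9.41$. The qualitative statement — VC dimension $\le c V$ — is routine; getting the sharp-ish constant requires care, and I would isolate each step as a quantitative sub-lemma (ideally citing the precise constants already worked out in \cite{Baraud2021} or \cite{Baraud2020}, since an essentially identical computation for Gaussian location families appears there) rather than re-deriving the combinatorics from scratch. A secondary, minor point is the measurability of these sets and of the relevant suprema, which is handled by the countability of $\cF$ together with the discussion already in Section~\ref{stat-setting}.
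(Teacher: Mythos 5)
Your reduction of the Gaussian inequality $q_{f(w)}(y)>q_{\overline f(w)}(y)$ to the sign of $g_{f}(w,y)=(f(w)-\overline f(w))(2y-f(w)-\overline f(w))$ is correct, and working with $\cF\subset\overline\cF$ does only shrink the class of sets. The problem is the permanence step you invoke to conclude that $\{g_{f}:f\in\cF\}$ is VC-subgraph. Your claim (b), that ``the pointwise product of two VC-subgraph classes is again VC-subgraph with dimension at most a constant multiple of the sum of the two dimensions,'' is not a theorem. The standard permanence lemmas (e.g.\ Lemma~2.6.18 of \cite{MR1385671}, or Proposition~42 of \cite{MR3595933}) cover adding a \emph{fixed} function, multiplying by a \emph{fixed} function, and composing with a \emph{fixed} monotone or unimodal map; none of them covers the product of two classes that both vary with $f$. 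Indeed, VC-subgraph is not stable under pointwise products of two variable classes, and there is no universal additive bound to cite. Your step (a) is also stated too loosely (sums of two variable VC-subgraph classes need not be VC-subgraph), although in your concrete situation only a fixed function is being added, so that step could be repaired. Step (b) is the genuine gap: as written the argument does not yield that $\{g_{f}\}$ is VC-subgraph, and hence the passage to the positivity sets is not justified.

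The paper sidesteps this entirely. It rewrites the inequality via $\widetilde q_{f(w)}(y)=\exp\bigl[yf(w)/\sigma^{2}-f^{2}(w)/(2\sigma^{2})\bigr]$, which has the exponential-family form $e^{S(y)f(w)-A(f(w))}$ with $S(y)=y/\sigma^{2}$ and $A(u)=u^{2}/(2\sigma^{2})$, and then invokes Proposition~5 of \cite{Baraud2020}: for $\cF$ VC-subgraph of dimension $\le V$, the class $\{(w,y)\mapsto e^{S(y)f(w)-A(f(w))}:f\in\cF\}$ is VC-subgraph of dimension $\le 9.41V$. This single bespoke lemma absorbs the whole nonlinear dependence on $f(w)$ in one step and is where the constant $9.41$ comes from. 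After that it subtracts the fixed function $\widetilde q_{\overline f}$ (which does not change the dimension, Lemma~2.6.18(v) of \cite{MR1385671}) and uses Proposition~2.1 of \cite{Baraud2016} to pass from a VC-subgraph class to the VC class of its strict super-level sets at $0$, which again costs nothing. You actually gesture at exactly this route at the end (``citing the precise constants already worked out in \cite{Baraud2021} or \cite{Baraud2020}''), but the body of your argument takes a different, unsupported path.

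One constructive comment: your algebraic identity can be turned into a correct proof by not going through products at all, instead writing $\{g_{f}>0\}$ as a union of two intersections of level sets, $\bigl\{f>\overline f\bigr\}\cap\bigl\{y>\tfrac{f+\overline f}{2}\bigr\}$ and $\bigl\{f<\overline f\bigr\}\cap\bigl\{y<\tfrac{f+\overline f}{2}\bigr\}$, each built from VC classes of dimension $\le V$ by fixed shifts, and then applying the union/intersection bound of \cite{MR2797943}. That argument is elementary and avoids the exponential-composition lemma, but the accumulated constant from two intersections and one union comes out larger than $9.41$, so it would prove a weaker statement than the one in the proposition.
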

\begin{proof}
We first prove the result holds for the class $\sC_{+}(\cF,\overline f)$. For any $f\in\cF$, we define the function $\widetilde q_{f}$ on $\sW\times\sY$ as $$\widetilde q_{f(w)}(y)=\exp\cro{\frac{yf(w)}{\sigma^{2}}-\frac{f^{2}(w)}{2\sigma^{2}}}.$$ 
Then the class of subsets $\sC_{+}(\cF,\overline f)$ can be rewritten as $$\sC_{+}(\cF,\overline f)=\left\{\left\{(w,y)\in\sW\times\sY\ s.t.\ \widetilde q_{f(w)}(y)>\widetilde q_{\overline f(w)}(y)\right\},\;f\in\cF\right\}.$$ 

We introduce the result of Proposition~5 in \cite{Baraud2020} as follows.
\begin{prop}\label{vcconnect}
Let $I\subset\R$ be a non-trivial interval and $\cF$ a class of functions from $\sW$ into $I$. If $\cF$ is VC-subgraph on $\sW$ with dimension not larger than $V$, the class of functions $$\left\{h_{f}:(w,y)\mapsto e^{S(y)f(w)-A(f(w))},\;f\in\cF\right\}$$ is VC-subgraph on $\sW\times\sY$ with dimension not larger than $9.41V$, where $S$ is a real-valued measurable function on $\sY$ and $A$ is convex and continuous on $I$.
\end{prop}
Note that function $\widetilde q_{f(w)}(y)$ takes a particular form as described in Proposition~\ref{vcconnect} with $S(y)=y/\sigma^{2}$, for all $y\in\R$, and $A(u)=u^{2}/(2\sigma^{2})$, for all $u\in\R$. Therefore, under Assumption~\ref{vcass}, the class of functions $\left\{\widetilde q_{f},\;f\in\cF\right\}$ on $\sW\times\sY$ is VC-subgraph with dimension not larger than $9.41V$. Moreover, since for any given function $\overline f\in\cF$, $\widetilde q_{\overline f}$ is a fixed function taking its values in $\R$, applying Lemma 2.6.18 (v) of \cite{MR1385671} (see also Proposition~42 (i) in \cite{MR3595933}), we obtain that the class of functions $\left\{\widetilde q_{f}-\widetilde q_{\overline f},\;f\in\cF\right\}$ on $\sW\times\sY$ is VC-subgraph with dimension not larger than $9.41V$. According Proposition~2.1 of \cite{Baraud2016}, $\left\{\widetilde q_{f}-\widetilde q_{\overline f},\;f\in\cF\right\}$ is weak VC-major with dimension not larger than $9.41V$, which implies that the class of subsets $$\left\{\left\{(w,y)\in\sW\times\sY\ s.t.\ \widetilde q_{f(w)}(y)-\widetilde q_{\overline f(w)}(y)>0\right\},\;f\in\cF\right\}$$ is a VC-class of subsets of $\sW\times\sY$ with dimension not larger than $9.41V$. Hence, the conclusion holds for $\sC_{+}(\cF,\overline f)$.

Now we show the conclusion also holds for the class of subsets $\sC_{-}(\cF,\overline f)$. As we have seen, under Assumption~\ref{vcass}, $\left\{\widetilde q_{f},\;f\in\cF\right\}$ on $\sW\times\sY$ is VC-subgraph with dimension not larger than $9.41V$. By applying Proposition~42 (iii) of \cite{MR3595933}, we can establish that $\{-\widetilde q_{f},\;f\in\cF\}$ on $\sW\times\sY$ is a VC-subgraph with dimension not exceeding $9.41V$. As a result of Lemma 2.6.18 (v) of \cite{MR1385671}, this property also holds for the class $\{\widetilde q_{\overline f}-\widetilde q_{f},\;f\in\cF\}$ for any fixed $\overline f\in\cF$. Finally, we can conclude using a similar argument as we did for $\sC_{+}(\cF,\overline f)$. 
\end{proof}

\begin{prop}\label{vcequa}
For any $\overline f\in\cF$, we define $$\sC_{=}(\cF,\overline f)=\left\{\{w\in\sW\ s.t.\ f(w)=\overline f(w)\},\;f\in\cF\right\}.$$ Under Assumption~\ref{vcass}, the class of subsets $\sC_{=}(\cF,\overline f)$ is a VC-class of sets on $\sW$ with dimension not larger than $9.41V$. 
\end{prop}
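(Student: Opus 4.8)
The plan is to reduce the statement about the collection $\sC_{=}(\cF,\overline f)$ of "agreement sets" $\{w\in\sW: f(w)=\overline f(w)\}$ to facts already proved in Proposition~\ref{vcsubsets}, together with standard stability properties of VC-classes of sets under finite Boolean operations. The key observation is that, pointwise on $\sW$,
\[
\{w\in\sW:f(w)=\overline f(w)\}=\sW\setminus\big(\{w:f(w)>\overline f(w)\}\cup\{w:f(w)<\overline f(w)\}\big),
\]
so each agreement set is the complement of a union of two sets of a very controlled type. I would like to relate the $\sW$-sets $\{w:f(w)>\overline f(w)\}$ to the $\sW\times\sY$-sets appearing in $\sC_{+}(\cF,\overline f)$; indeed, since $q_{f(w)}(y)>q_{\overline f(w)}(y)$ is equivalent (after taking logarithms of the Gaussian densities) to $y(f(w)-\overline f(w))>\tfrac12(f^2(w)-\overline f^2(w))$, the section of the set in $\sC_{+}(\cF,\overline f)$ at a fixed $y$, or a suitable limit/intersection over rational $y$, recovers $\{w:f(w)>\overline f(w)\}$ (and similarly $\sC_{-}$ recovers $\{w:f(w)<\overline f(w)\}$). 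Hence both families $\{\{w:f(w)>\overline f(w)\}:f\in\cF\}$ and $\{\{w:f(w)<\overline f(w)\}:f\in\cF\}$ are VC on $\sW$ with dimension at most $9.41V$.

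First I would make precise the passage from $\sC_{\pm}(\cF,\overline f)$ on $\sW\times\sY$ to the corresponding families of subsets of $\sW$: restricting a VC-class of subsets of a product space to a fixed slice $\sW\times\{y_0\}$ yields a VC-class on $\sW$ of no larger dimension, and the identity $\{w:f(w)>\overline f(w)\}=\{w:\widetilde q_{f(w)}(1/\sigma^2)>\widetilde q_{\overline f(w)}(1/\sigma^2)\}$ (or an analogous evaluation) does exactly this, using the explicit affine-in-$y$ form of $\log q$. Alternatively, and perhaps more cleanly, I would observe directly that $\{\widetilde q_f-\widetilde q_{\overline f}:f\in\cF\}$ being weak VC-major on $\sW\times\sY$ with dimension $\le 9.41V$, as established inside the proof of Proposition~\ref{vcsubsets}, already yields via the slice argument that the $\sW$-families of positivity and negativity sets of $f-\overline f$ are VC with dimension $\le 9.41V$.

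Second, I would invoke the standard combinatorial lemma that if $\cC$ is a VC-class of subsets of a set with dimension $V_0$, then the class of complements $\{A^c:A\in\cC\}$ has the same VC-dimension, and if $\cC_1,\cC_2$ are VC with dimensions $V_1,V_2$, then $\{A_1\cup A_2:A_i\in\cC_i\}$ has VC-dimension bounded by a function of $V_1,V_2$ (e.g. Lemma~2.6.17 of \cite{MR1385671}, or the Sauer-Shelah counting bound). Applying this with $\cC_1=\{\{w:f(w)>\overline f(w)\}\}$ and $\cC_2=\{\{w:f(w)<\overline f(w)\}\}$ and then taking complements produces $\sC_{=}(\cF,\overline f)$ as a subfamily of a VC-class, hence VC itself. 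I would then check that the resulting dimension bound can be taken to be $9.41V$: this is where a little care is needed, since the generic union bound would give something larger than $9.41V$, so one must use the special structure — namely that the two families are governed by the single scalar comparison $f(w)\lessgtr\overline f(w)$, so the agreement set is literally determined by where the single function $f-\overline f$ vanishes, i.e. $\sC_{=}(\cF,\overline f)=\{(f-\overline f)^{-1}(\{0\}):f\in\cF\}$, and the level-set-at-zero family of a VC-subgraph class has dimension controlled directly by the VC-subgraph dimension of $\{f-\overline f:f\in\cF\}$ (which is $\le 9.41V$ — actually one can hope for $\le V$ here by Lemma~2.6.15/2.6.18 of \cite{MR1385671} applied to $\{f-\overline f:f\in\cF\}$, but $9.41V$ certainly suffices for the claimed bound).

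The main obstacle I anticipate is getting the numerical constant right: the easy route through "complement of a union of two VC-classes" inflates the dimension beyond $9.41V$, so the real content is to argue that agreement is a \emph{single} scalar condition and therefore inherits the dimension of one VC-subgraph class rather than paying for two. Concretely, I expect the cleanest argument to run: $\{f-\overline f:f\in\cF\}$ is VC-subgraph on $\sW$ with dimension $\le 9.41V$ (it is, since $\{\widetilde q_f-\widetilde q_{\overline f}\}$ is, or directly since $\{f-\overline f\}$ is VC-subgraph of dimension $\le V$ by the translation-stability of VC-subgraph classes, Lemma~2.6.18(v) of \cite{MR1385671}); the zero-level sets $\{w:g(w)=0\}$ of a VC-subgraph family $\{g\}$ form a VC-class of subsets of $\sW$ whose dimension is bounded by (a constant times) the VC-subgraph dimension, via the identity $\{g=0\}=\{g>-\varepsilon\}\cap\{g<\varepsilon\}$ handled at the combinatorial level, or via Proposition~2.1 of \cite{Baraud2016} and the weak-VC-major machinery already used in Proposition~\ref{vcsubsets}; and then one reads off the bound $9.41V$. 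Writing out this last combinatorial step carefully — matching whatever constant-tracking convention the earlier propositions use — is the only non-routine part.
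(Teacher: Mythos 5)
You correctly land on the key starting point in your final paragraph: $\{f-\overline f:\;f\in\cF\}$ is VC-subgraph on $\sW$ of dimension at most $V$ (translation by a fixed function, Proposition~42(i) of \cite{MR3595933}), and the task is to control the zero-level sets of this family. But you misdiagnose where the constant $9.41$ comes from, and this leads you to look for a shortcut that does not exist. The paper does \emph{not} avoid the ``pay for two'' cost; it pays it exactly once, and that single payment is what produces $9.41V$. The mechanism you never name is Theorem~1.1 of \cite{MR2797943}: the intersection class $\ac{C_1\cap C_2:\;C_1\in\cC_1,\;C_2\in\cC_2}$ of two VC-classes each of dimension $\le V$ is VC of dimension $\le 9.41V$. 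Concretely, the paper's chain is: weak VC-major (Proposition~2.1 of \cite{Baraud2016}) gives that $\ac{\{w:f(w)>\overline f(w)\}:\;f\in\cF}$ is VC on $\sW$ of dimension $\le V$; complementation (Lemma~2.6.17(i) of \cite{MR1385671}) gives $\sC_{\leq}(\cF,\overline f)$ of dimension $\le V$, and symmetrically $\sC_{\geq}(\cF,\overline f)$ of dimension $\le V$; the intersection class $\sC_{\geq}\bigwedge\sC_{\leq}$ is then VC of dimension $\le 9.41V$ by the cited result; and $\sC_{=}(\cF,\overline f)\subset\sC_{\geq}\bigwedge\sC_{\leq}$ finishes. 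Your instinct that ``agreement is a single scalar condition and therefore inherits the dimension of one VC-subgraph class rather than paying for two'' is not supported by any lemma you have at hand, and the paper does not claim it; nothing stronger than $9.41V$ is needed or proved.

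Your first route is a dead end for two independent reasons. First, slicing $\sC_{+}(\cF,\overline f)$ at a fixed $y_0$ does not recover $\{w:f(w)>\overline f(w)\}$: after taking logarithms of the Gaussian densities, the sliced condition is $y_0\bigl(f(w)-\overline f(w)\bigr)>\tfrac12\bigl(f^2(w)-\overline f^2(w)\bigr)$, which depends on $f$ through the quadratic term and is not a monotone transform of $\mathrm{sign}(f(w)-\overline f(w))$ uniformly over $f$. You acknowledge this needs ``a suitable limit/intersection over rational $y$,'' but that is precisely the kind of infinitary operation that does not preserve VC dimension without further combinatorial payment. Second, even if a clean slice existed, the classes $\sC_{\pm}(\cF,\overline f)$ already have dimension $9.41V$, so you have already spent your entire budget before combining the sign sets into agreement sets; starting instead from $\{f-\overline f\}$ at dimension $V$, as you do in the second half of your proposal and as the paper does, is what leaves room for exactly one application of the intersection bound.
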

\begin{proof}
We set $$\sC_{\geq}(\cF,\overline f)=\left\{\{w\in\sW\ s.t.\ f(w)-\overline f(w)\geq0\},\;f\in\cF\right\}$$ and $$\sC_{\leq}(\cF,\overline f)=\left\{\{w\in\sW\ s.t.\ f(w)-\overline f(w)\leq0\},\;f\in\cF\right\}.$$ Since $\cF$ is VC-subgraph on $\sW$ with dimension not larger than $V$ and $\overline f$ is a fixed function, $\{f-\overline f,\; f\in\cF\}$ is VC-subgraph on $\sW$ with dimension not larger than $V$ as a consequence of applying Proposition~42 (i) in \cite{MR3595933}. According to Proposition~2.1 of \cite{Baraud2016}, $\{f-\overline f,\; f\in\cF\}$ is weak VC-major with dimension not larger than $V$, which implies that the class of subsets $$\left\{\left\{w\in\sW\ s.t.\ f(w)>\overline f(w)\right\},\;f\in\cF\right\}$$ is a VC-class of subsets of $\sW$ with dimension not larger than $V$. Then Lemma~2.6.17 (i) of \cite{MR1385671} implies that $\sC_{\leq}(\cF,\overline f)$ is a VC-class of subsets of $\sW$ with dimension not larger than $V$. Following a similar argument, we can show that the same conclusion also holds for the class $\sC_{\geq}(\cF,\overline f)$.

Writing $$\sC_{\geq}(\cF,\overline f)\bigwedge\sC_{\leq}(\cF,\overline f)=\left\{C_{\geq}\cap C_{\leq},\;C_{\geq}\in\sC_{\geq}(\cF,\overline f),\;C_{\leq}\in\sC_{\leq}(\cF,\overline f)\right\},$$ we can deduce that $\sC_{\geq}(\cF,\overline f)\bigwedge\sC_{\leq}(\cF,\overline f)$ is a VC-class of subsets of $\sW$ with dimension not larger than $9.41V$ according to Theorem~1.1 of \cite{MR2797943}. It is easy to note that $\sC_{=}(\cF,\overline f)\subset\sC_{\geq}(\cF,\overline f)\bigwedge\sC_{\leq}(\cF,\overline f)$, which implies the completion of the proof.
\end{proof}

\begin{lem}\label{thmlemma}
Under Assumption~\ref{vcass}, whatever the conditional distributions $\gQ\et=(Q\et_{1},\ldots,Q\et_{n})$ of the $Y_{i}$ given $W_{i}$ and the distributions of $W_{i}$, any $\ell$-type estimator $\widehat f$ based on the set $\cF$ satisfies that for any $\overline f\in\cF$ and any $\xi>0$, with a probability at least $1-e^{-\xi}$,
\begin{equation}\label{equa4}
\ell(\gQ_{\overline f},\gQ_{\widehat f})\leq2\ell(\gQ\et,\gQ_{\overline f})+\frac{2\bs\vartheta(\overline f)}{n}+\sqrt{\frac{8(\xi+\log2)}{n}}+\frac{\epsilon}{n},
\end{equation}
where 
\begin{align*}
\bs\vartheta(\overline f)=&\E\cro{\sup_{f'\in\cF}\cro{\gT_{l}(\bsX,\overline f,f')-\E\cro{\gT_{l}(\bsX,\overline f,f')}}}\\
&\vee\E\cro{\sup_{f'\in\cF}\cro{\E\cro{\gT_{l}(\bsX,f',\overline f)}-\gT_{l}(\bsX,f',\overline f)}}.
\end{align*}

\end{lem}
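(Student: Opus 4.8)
The plan is to follow the standard $\ell$-estimation argument (as in \cite{Baraud2021,Baraud2022}), deriving a deterministic bound from the definition of the estimator, then controlling the stochastic fluctuations by a bounded-differences concentration inequality, and finally reducing the expected suprema of the empirical process to the quantity $\bs\vartheta(\overline f)$. First I would record the deterministic consequence of the construction of $\widehat f$: since $\widehat f \in \sE(\bsX,\epsilon)$, for any competitor $\overline f\in\cF$ we have $\gT_l(\bsX,\widehat f)\le\gT_l(\bsX,\overline f)+\epsilon$, hence, using $\gT_l(\bsX,\overline f)=\sup_{f'}\gT_l(\bsX,\overline f,f')\ge\gT_l(\bsX,\overline f,\widehat f)$ and $\gT_l(\bsX,\widehat f)\ge\gT_l(\bsX,\widehat f,\overline f)$, together with the antisymmetry $\gT_l(\bsX,f_1,f_2)=-\gT_l(\bsX,f_2,f_1)$ coming from the antisymmetry of $t_{(f_1,f_2)}$ (which I would check: $\1_{q_{f_2}>q_{f_1}}+\1_{q_{f_1}>q_{f_2}}=1$ on the event $q_{f_1}\ne q_{f_2}$, and $Q_{f_1(w)}(q_{f_2}>q_{f_1})+Q_{f_2(w)}(q_{f_1}>q_{f_2})=\|Q_{f_1(w)}-Q_{f_2(w)}\|_{TV}+$ the mass of the tie set, so care is needed on $\{q_{f_1}=q_{f_2}\}$), obtain
\[
\gT_l(\bsX,\widehat f,\overline f) \;\le\; \gT_l(\bsX,\overline f,\widehat f) + \epsilon,
\]
i.e. a bound on $\gT_l(\bsX,\widehat f,\overline f)$ by $\epsilon/2$ plus a symmetric term; more precisely it gives $\gT_l(\bsX,\widehat f,\overline f)\le \tfrac12(\gT_l(\bsX,\overline f,\widehat f)-\gT_l(\bsX,\widehat f,\overline f))+\epsilon \le \epsilon$ after using antisymmetry, so $\gT_l(\bsX,\widehat f,\overline f)$ is essentially nonpositive up to $\epsilon$.

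Next I would pass to expectations using Lemma~\ref{t-property}. Applied with $(f_1,f_2)=(\overline f,\widehat f)$ at each design point $W_i$ (conditionally, the independence structure lets us sum), the left inequality of \eqref{t-preproperty} gives
\[
n\,\ell(\gQ_{\overline f},\gQ_{\widehat f}) - n\,\ell(\gQ\et,\gQ_{\widehat f}) \;\le\; \E\cro{\gT_l(\bsX,\overline f,\widehat f)\mid \widehat f},
\]
while the right inequality applied with $(f_1,f_2)=(\widehat f,\overline f)$ gives $\E[\gT_l(\bsX,\widehat f,\overline f)\mid\widehat f]\le n\,\ell(\gQ\et,\gQ_{\overline f})$. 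Combining these two with the deterministic bound on $\gT_l(\bsX,\widehat f,\overline f)$ and adding and subtracting the conditional means, one arrives at
\[
n\,\ell(\gQ_{\overline f},\gQ_{\widehat f}) \;\le\; n\,\ell(\gQ\et,\gQ_{\widehat f}) + \big(\gT_l(\bsX,\overline f,\widehat f)-\E[\gT_l(\bsX,\overline f,\widehat f)]\big) + \big(\E[\gT_l(\bsX,\widehat f,\overline f)]-\gT_l(\bsX,\widehat f,\overline f)\big) + \epsilon,
\]
and then, since $\ell(\gQ\et,\gQ_{\widehat f})\le \ell(\gQ\et,\gQ_{\overline f}) + \ell(\gQ_{\overline f},\gQ_{\widehat f})$ would reintroduce the target term, I would instead bound $\ell(\gQ\et,\gQ_{\widehat f})$ crudely by $\ell(\gQ\et,\gQ_{\overline f})+\ell(\gQ_{\overline f},\gQ_{\widehat f})$ but keep the factor: re-tracing the constants in the $\ell$-estimation literature one gets the $2\ell(\gQ\et,\gQ_{\overline f})$ and a factor $2$ in front of the supremum terms after absorbing $\ell(\gQ_{\overline f},\gQ_{\widehat f})$ — this bookkeeping is where the ``$2$''s in \eqref{equa4} come from. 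Replacing both random deviation terms by their suprema over $f'\in\cF$ and then by their expectations plus a fluctuation controlled via McDiarmid's inequality (each $t_{(f_1,f_2)}$ is bounded in $[-1,1]$, so changing one coordinate $X_i$ changes each $\gT_l$ by at most $2$, giving a bounded-differences constant that yields the $\sqrt{8(\xi+\log 2)/n}$ term after a union bound over the two suprema, hence the $\log 2$) produces exactly \eqref{equa4} with $\bs\vartheta(\overline f)$ as defined.

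The main obstacle I expect is the careful handling of the tie sets $\{q_{f_1(w)}=q_{f_2(w)}\}$ (equivalently $\{f_1(w)=f_2(w)\}$, which is why Proposition~\ref{vcequa} on $\sC_=(\cF,\overline f)$ was proved) so that the antisymmetry relating $\gT_l(\bsX,f_1,f_2)$ to $-\gT_l(\bsX,f_2,f_1)$ is exact rather than merely approximate, and the bookkeeping of numerical constants when converting the deterministic near-optimality of $\widehat f$ into the clean form \eqref{equa4}; the concentration step itself is routine given the uniform boundedness of the $t$-statistics, and the measurability of the suprema over the countable set $\cF$ causes no difficulty. The VC arguments of Propositions~\ref{vcsubsets}--\ref{vcequa} are not needed for this lemma — they enter only afterwards, to bound $\bs\vartheta(\overline f)$ by $O(\sqrt{Vn})$ and thereby derive Theorem~\ref{general} from Lemma~\ref{thmlemma}.
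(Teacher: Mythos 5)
There is a genuine gap: the antisymmetry $\gT_l(\bsX,f_1,f_2)=-\gT_l(\bsX,f_2,f_1)$ that your argument leans on is false, and the paper's proof never uses it. A direct computation from the definition (and Proposition~\ref{Q-equality}) gives, for $P_{W}$-a.e.\ $w$ and Lebesgue-a.e.\ $y$,
\[
t_{(f_1,f_2)}(w,y)+t_{(f_2,f_1)}(w,y)=\|Q_{f_1(w)}-Q_{f_2(w)}\|_{TV}\;\ge 0,
\]
since on $\{f_1(w)\neq f_2(w)\}$ the two indicators sum to $1$ a.e.\ while $Q_{f_1(w)}(q_{f_2}>q_{f_1})+Q_{f_2(w)}(q_{f_1}>q_{f_2})=2\Phi(-|f_1(w)-f_2(w)|/2\sigma)=1-\|Q_{f_1(w)}-Q_{f_2(w)}\|_{TV}$, and on $\{f_1(w)=f_2(w)\}$ all four terms vanish. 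So $\gT_l(\bsX,f_1,f_2)+\gT_l(\bsX,f_2,f_1)$ equals the (nonnegative, random) empirical total-variation sum, not zero. Consequently your claimed chain ``$\gT_l(\bsX,\widehat f,\overline f)\le\tfrac12\big(\gT_l(\bsX,\overline f,\widehat f)-\gT_l(\bsX,\widehat f,\overline f)\big)+\epsilon\le\epsilon$'' does not follow, and the step from the near-optimality of $\widehat f$ to a bound on $\gT_l(\bsX,\widehat f,\overline f)$ by $\epsilon$ is unjustified. (Your instinct that ``care is needed'' here was right, but the care resolves in the opposite direction.)

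The paper sidesteps this entirely. It never compares $\gT_l(\bsX,\widehat f,\overline f)$ with $\gT_l(\bsX,\overline f,\widehat f)$; instead it starts from the left inequality of Lemma~\ref{t-property}, $n\ell(\gQ_{\overline f},\gQ_f)\le n\ell(\gQ\et,\gQ_{\overline f})+\E[\gT_l(\bsX,f,\overline f)]$, centers $\gT_l(\bsX,f,\overline f)$ to pick up the supremum term $\gZ_-(\bsX,\overline f)$, and then uses the trivial bound $\gT_l(\bsX,f,\overline f)\le\gT_l(\bsX,f)=\sup_{f'}\gT_l(\bsX,f,f')$. Substituting $f=\widehat f$ and the definition of the $\epsilon$-minimizer gives $\gT_l(\bsX,\widehat f)\le\gT_l(\bsX,\overline f)+\epsilon$, and $\gT_l(\bsX,\overline f)$ is in turn bounded by $\gZ_+(\bsX,\overline f)+n\ell(\gQ\et,\gQ_{\overline f})$ using the right inequality of Lemma~\ref{t-property}. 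This keeps both supremum terms and both expectation terms anchored at the deterministic $\overline f$, which is what makes the concentration step work, and yields $n\ell(\gQ_{\overline f},\gQ_{\widehat f})\le 2\gZ(\bsX,\overline f)+2n\ell(\gQ\et,\gQ_{\overline f})+\epsilon$. Your McDiarmid/union-bound step (bounded differences of $2$, yielding $\sqrt{8(\xi+\log 2)/n}$ after a factor $2$) does match the paper and is correct; the gap is only in the deterministic reduction preceding it.
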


\begin{proof}
The proof of Lemma~\ref{thmlemma} builds upon the idea presented in the proof of Theorem~1 in \cite{Baraud2022}, but with certain modifications to adapt it to the regression setting.

For any $f_{1}, f_{2}\in\cF$, define
$$\gZ_{+}(\bsX,f_{1})=\sup_{f_{2}\in\cF}\cro{\gT_{l}(\bsX,f_{1},f_{2})-\E\cro{\gT_{l}(\bsX,f_{1},f_{2})}}$$
$$\gZ_{-}(\bsX,f_{1})=\sup_{f_{2}\in\cF}\cro{\E\cro{\gT_{l}(\bsX,f_{2},f_{1})}-\gT_{l}(\bsX,f_{2},f_{1})}$$ and set
$$\gZ(\bsX,f_{1})=\gZ_{+}(\bsX,f_{1})\vee\gZ_{-}(\bsX,f_{1}).$$
As per Lemma~\ref{t-property}, for any $f,\overline f\in\cF$, it holds that
\begin{align}
n\ell(\gQ_{\overline f},\gQ_{f})&\leq n\ell(\gQ\et,\gQ_{\overline f})+\E\cro{\gT_{l}(\bsX,f,\overline f)}\nonumber\\
&\leq n\ell(\gQ\et,\gQ_{\overline f})+\E\cro{\gT_{l}(\bsX,f,\overline f)}-\gT_{l}(\bsX,f,\overline f)+\gT_{l}(\bsX,f,\overline f)\nonumber\\
&\leq n\ell(\gQ\et,\gQ_{\overline f})+\gZ(\bsX,\overline f)+\gT_{l}(\bsX,f,\overline f)\nonumber\\
&\leq n\ell(\gQ\et,\gQ_{\overline f})+\gZ(\bsX,\overline f)+\gT_{l}(\bsX,f).\label{basic-inequa}
\end{align}

By utilizing \eref{basic-inequa}, substituting $f$ with $\widehat f(\bsX)\in\sE(\bsX,\epsilon)$, and employing the definition of $\widehat f(\bsX)$, we can derive that
\begin{align}
n\ell(\gQ_{\overline f},\gQ_{\widehat f})&\leq n\ell(\gQ\et,\gQ_{\overline f})+\gZ(\bsX,\overline f)+\gT_{l}(\bsX,\widehat f)\nonumber\\
&\leq n\ell(\gQ\et,\gQ_{\overline f})+\gZ(\bsX,\overline f)+\gT_{l}(\bsX,\overline f)+\epsilon.\label{equa1}
\end{align}
Moreover, we can compute that
\begin{align}
\gT_{l}(\bsX,\overline f)&=\sup_{f\in\cF}\gT_{l}(\bsX,\overline f,f)\nonumber\\
&\leq\sup_{f\in\cF}\cro{\gT_{l}(\bsX,\overline f,f)-\E\cro{\gT_{l}(\bsX,\overline f,f)}}+\sup_{f\in\cF}\E\cro{\gT_{l}(\bsX,\overline f,f)}\nonumber\\
&\leq\gZ(\bsX,\overline f)+n\ell(\gQ\et,\gQ_{\overline f}),\label{gT-inequa}
\end{align}
where the second inequality is obtained by applying Lemma~\ref{t-property}. Combining \eref{equa1} and \eref{gT-inequa}, we obtain that for any $\overline f\in\cF$,
\begin{equation}\label{inequa-semi}
n\ell(\gQ_{\overline f},\gQ_{\widehat f})\leq2\gZ(\bsX,\overline f)+2n\ell(\gQ\et,\gQ_{\overline f})+\epsilon.
\end{equation}
In what follows, we study the term $\gZ(\bsX,\overline f)$ to have a further insight of the risk bound for the estimator $\widehat f$. 
It is worth noting that for any $\overline f, f\in\cF$ and $(w,y),(w',y')\in\sW\times\sY$, the following inequality holds:
$$\big|t_{(\overline f,f)}(w,y)-t_{(\overline f,f)}(w',y')\big|\leq2.$$ Writing $\bs{x}=(x_{1},\ldots,x_{n})\in\sX^{n}$ and $\bs{x}'_{(i)}=(x_{1},\ldots,x'_{i},\ldots,x_{n})\in\sX^{n}$, as an immediate consequence, we can derive that
$$\frac{1}{2}\big|\gZ_{+}(\bs{x},\overline f)-\gZ_{+}(\bs{x}'_{(i)},\overline f)\big|\leq1.$$
By following a similar approach as in the proof of Lemma~2 in \cite{Baraud2021} and considering the term $\xi$ replaced with $\xi+\log2$, one can conclude that with a probability of at least $1-(1/2)e^{-\xi}$,
\begin{align}
\gZ_{+}(\bsX,\overline f)&\leq\E\cro{\gZ_{+}(\bsX,\overline f)}+\sqrt{2n(\xi+\log2)}\nonumber\\
&=\E\cro{\sup_{f\in\cF}\cro{\gT_{l}(\bsX,\overline f,f)-\E\cro{\gT_{l}(\bsX,\overline f,f)}}}+\sqrt{2n(\xi+\log2)}\nonumber\\
&\leq\bs\vartheta(\overline f)+\sqrt{2n(\xi+\log2)}.\label{equa2}
\end{align}
A similar argument gives that with a probability at least $1-(1/2)e^{-\xi}$,
\begin{equation}\label{equa3}
\gZ_{-}(\bsX,\overline f)\leq\bs\vartheta(\overline f)+\sqrt{2n(\xi+\log2)}.
\end{equation}
By combining \eref{equa2} and \eref{equa3}, we can derive that with a probability at least $1-e^{-\xi}$,
\begin{equation}\label{gZ-bound}
\gZ(\bsX,\overline f)=\gZ_{+}(\bsX,\overline f)\vee\gZ_{-}(\bsX,\overline f)\leq\bs\vartheta(\overline f)+\sqrt{2n(\xi+\log2)}.
\end{equation}
Finally, plugging \eref{gZ-bound} into \eref{inequa-semi} gives the upper bound
\begin{equation}
\ell(\gQ_{\overline f},\gQ_{\widehat f})\leq2\ell(\gQ\et,\gQ_{\overline f})+\frac{2\bs\vartheta(\overline f)}{n}+\sqrt{\frac{8(\xi+\log2)}{n}}+\frac{\epsilon}{n}.
\end{equation}
\end{proof}

\begin{prop}\label{Q-equality}
Let $f_{1}$ and $f_{2}$ be two functions belonging to $\cF$. For all $w\in\sW$, the following equality holds 
\[
Q_{f_{1}(w)}(q_{f_{2}(w)}>q_{f_{1}(w)})=\Phi\left(-\frac{|f_{1}(w)-f_{2}(w)|}{2\sigma}\right)-\frac{1}{2}\1_{f_{1}(w)=f_{2}(w)},
\]
where $\Phi$ stands for the cumulative distribution function of the standard normal distribution.
\end{prop}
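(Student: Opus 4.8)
The plan is to compute $Q_{f_1(w)}(q_{f_2(w)} > q_{f_1(w)})$ directly by analyzing the event $\{q_{f_2(w)} > q_{f_1(w)}\}$ as a subset of $\R$ (the range of $Y$), splitting into the cases $f_1(w) = f_2(w)$ and $f_1(w) \neq f_2(w)$.

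First, suppose $f_1(w) = f_2(w)$. Then $q_{f_2(w)}(y) = q_{f_1(w)}(y)$ for all $y$, so the event $\{q_{f_2(w)} > q_{f_1(w)}\}$ is empty and the probability is $0$. On the other hand, $\Phi(-|f_1(w)-f_2(w)|/(2\sigma)) = \Phi(0) = 1/2$ and $\frac{1}{2}\1_{f_1(w)=f_2(w)} = \frac12$, so the right-hand side is $1/2 - 1/2 = 0$; the two sides agree.

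Next, suppose $f_1(w) \neq f_2(w)$. Write $m_1 = f_1(w)$, $m_2 = f_2(w)$ for brevity, and recall $q_{m}(y) = (2\pi\sigma^2)^{-1/2}\exp(-(y-m)^2/(2\sigma^2))$. The inequality $q_{m_2}(y) > q_{m_1}(y)$ is equivalent to $(y-m_1)^2 > (y-m_2)^2$, i.e. $-2m_1 y + m_1^2 > -2m_2 y + m_2^2$, i.e. $2(m_2 - m_1)y > m_2^2 - m_1^2 = (m_2-m_1)(m_2+m_1)$. Dividing by $2(m_2-m_1)$ and keeping track of the sign: if $m_2 > m_1$ the event is $\{y > (m_1+m_2)/2\}$, and if $m_2 < m_1$ it is $\{y < (m_1+m_2)/2\}$. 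Under $Q_{m_1} = \cN(m_1,\sigma^2)$ we have $Y = m_1 + \sigma Z$ with $Z$ standard normal, so in the first case the probability is $\P[Z > (m_2-m_1)/(2\sigma)] = \Phi(-(m_2-m_1)/(2\sigma)) = \Phi(-|m_1-m_2|/(2\sigma))$, and in the second case it is $\P[Z < (m_2-m_1)/(2\sigma)] = \Phi((m_2-m_1)/(2\sigma)) = \Phi(-|m_1-m_2|/(2\sigma))$ as well (since $m_2 - m_1 < 0$). In both subcases the probability equals $\Phi(-|f_1(w)-f_2(w)|/(2\sigma))$, which matches the right-hand side because the indicator term vanishes.

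There is no serious obstacle here; this is a routine Gaussian computation. The only point requiring a little care is correctly handling the direction of the inequality when dividing by $2(m_2-m_1)$ (whose sign depends on which mean is larger), and checking that both orderings give the same final answer $\Phi(-|f_1(w)-f_2(w)|/(2\sigma))$ — which they do by symmetry of $\Phi(-|\cdot|)$. Combining the two cases yields the claimed identity for all $w \in \sW$.
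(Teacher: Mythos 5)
Your proof is correct and follows essentially the same route as the paper: split on whether $f_1(w) = f_2(w)$, and in the unequal case identify the event $\{q_{f_2(w)} > q_{f_1(w)}\}$ as a half-line with endpoint $(f_1(w)+f_2(w))/2$, then standardize the Gaussian. The paper just writes the two half-line integrals directly; your reduction via the algebraic comparison $(y-m_1)^2 > (y-m_2)^2$ is the same argument with the derivation of the half-line made explicit.
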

\begin{proof}
For all $w\in\sW$ satisfying $f_{1}(w)=f_{2}(w)$, it is easy to see that $Q_{f_{1}(w)}(q_{f_{2}(w)}>q_{f_{1}(w)})=0$. The equality naturally holds since $$\Phi\left(-\frac{|f_{1}(w)-f_{2}(w)|}{2\sigma}\right)-\frac{1}{2}\1_{f_{1}(w)=f_{2}(w)}=\Phi(0)-\frac{1}{2}=0.$$ For all $w\in\sW$ satisfying $f_{1}(w)>f_{2}(w)$,
\begin{align*}
Q_{f_{1}(w)}(q_{f_{2}(w)}>q_{f_{1}(w)})&=\int_{-\infty}^{\cro{f_{1}(w)+f_{2}(w)}/2}q_{f_{1}(w)}(y)dy\\
&=\int_{-\infty}^{\cro{f_{2}(w)-f_{1}(w)}/2\sigma}\frac{1}{\sqrt{2\pi}}e^{-\frac{t^{2}}{2}}dt\\
&=\Phi\left(-\frac{|f_{1}(w)-f_{2}(w)|}{2\sigma}\right).
\end{align*}
For all $w\in\sW$ satisfying $f_{1}(w)<f_{2}(w)$,
\begin{align*}
Q_{f_{1}(w)}(q_{f_{2}(w)}>q_{f_{1}(w)})&=\int_{\cro{f_{1}(w)+f_{2}(w)}/2}^{+\infty}q_{f_{1}(w)}(y)dy\\
&=\int_{\cro{f_{2}(w)-f_{1}(w)}/2\sigma}^{+\infty}\frac{1}{\sqrt{2\pi}}e^{-\frac{t^{2}}{2}}dt\\
&=1-\Phi\left(\frac{f_{2}(w)-f_{1}(w)}{2\sigma}\right)\\
&=\Phi\left(-\frac{|f_{1}(w)-f_{2}(w)|}{2\sigma}\right).
\end{align*}
Therefore, we can conclude the equality.
\end{proof}

The following result comes from the Proposition~3.1 in \cite{Baraud2016}, and we shall repeatedly use it in our proof.
\begin{lem}\label{Baraud2016prop}
Let $X_{1},\ldots,X_{n}$ be independent random variables with values in $(E,\cE)$ and $\cC$ a $VC$-class of subsets of $E$ with $VC$-dimension not larger than $V\geq1$ that satisfies for $\sigma\in(0,1]$, $\sum_{i=1}^{n}\P(X_{i}\in C)\leq n\sigma^{2}$ for all $C\in\cC$. Then,
\[
\E\cro{\sup_{C\in\cC}\Big|\sum_{i=1}^{n}\left(\1_{C}(X_{i})-\P(X_{i}\in C)\right)\Big|}\leq 10(\sigma\vee a)\sqrt{nV\cro{5+\log\left(\frac{1}{\sigma\vee a}\right)}} 
\]
where $$a=\cro{32\sqrt{\frac{(V\wedge n)}{n}\log\left(\frac{2en}{V\wedge n}\right)}}\wedge1.$$
\end{lem}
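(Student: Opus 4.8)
The plan is to prove this along the classical lines for suprema of empirical processes over VC classes — which is exactly the route of Proposition~3.1 in \cite{Baraud2016} — built from four ingredients: symmetrization, Sauer's lemma, a \emph{localized} chaining bound, and a self-bounding argument. Write $Z=\sup_{C\in\cC}\ab{\sum_{i=1}^{n}\pa{\1_{C}(X_{i})-\P(X_{i}\in C)}}$ for the quantity to be bounded. First I would symmetrize: introducing independent Rademacher signs $\eta_{1},\dots,\eta_{n}$ independent of $X_{1},\dots,X_{n}$, the Gin\'e--Zinn inequality gives $\E[Z]\le 2\,\E\cro{\E_{\eta}\sup_{C\in\cC}\ab{\sum_{i=1}^{n}\eta_{i}\1_{C}(X_{i})}}$, so it suffices to control the conditional Rademacher average, given $(X_{i})_{i}$.

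Conditionally on $(X_{i})_{i}$, the class $\cC$ induces only finitely many distinct $0/1$ vectors $v_{C}=(\1_{C}(X_{i}))_{i\le n}$; by Sauer's lemma their number is at most $(en/(V\wedge n))^{V\wedge n}$, and each satisfies $\norm{v_{C}}_{2}^{2}=\sum_{i}\1_{C}(X_{i})=:n_{C}\le R:=\max_{C}n_{C}$. Rather than a crude union bound (which would only produce $\log(n/V)$ where we want $\log(1/\sigma)$), I would run Dudley's chaining, $\E_{\eta}\sup_{C}\ab{\sum_{i}\eta_{i}(v_{C})_{i}}\lesssim\int_{0}^{\sqrt{R}}\sqrt{\log N(u,\{v_{C}\},\norm{\cdot}_{2})}\,du$, and bound the covering numbers by Haussler's universal estimate for VC classes, $N(u,\{v_{C}\},\norm{\cdot}_{2})=N(u/\sqrt{n},\cC,L_{2}(P_{n}))\le c(V+1)(2en/u^{2})^{V}$, where $P_{n}$ is the empirical measure. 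Since the $L_{2}$-radius is only $\sqrt{R}$ (and $R$ will turn out to be of order $n\sigma^{2}$), the substitution $u=\sqrt{R}\,s$ makes the integral evaluate to $\lesssim\sqrt{VR}\,\sqrt{\log(en/R)}$ up to a universal constant and lower-order terms; this localization is what buys the sharp logarithmic factor.

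The final step is self-bounding. Deterministically $n_{C}=\sum_{i}\P(X_{i}\in C)+\sum_{i}\pa{\1_{C}(X_{i})-\P(X_{i}\in C)}\le n\sigma^{2}+Z$, hence $R\le n\sigma^{2}+Z$ surely, and in particular $R\le n$. Combining the previous steps and applying Jensen's inequality to the concave map $r\mapsto\sqrt{r\log(en/r)}$ on $(0,n]$ yields
\[
\E[Z]\le c\sqrt{V}\,\sqrt{\pa{n\sigma^{2}+\E[Z]}\log\!\pa{\frac{en}{n\sigma^{2}+\E[Z]}}}.
\]
Setting $x=\E[Z]$ this is a self-referential inequality; resolving it separately in the two regimes $x\le n\sigma^{2}$ and $x>n\sigma^{2}$ — in the second regime $x$ is pinned at a fixed point of order $V\log(n/V)$, equivalently $\sqrt{x/n}$ is of order $a$ — and merging them into a single bound in terms of $\sigma\vee a$, while tracking constants, produces the displayed inequality with the numerical constant $10$ and the factor $5+\log(1/(\sigma\vee a))$.

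The main obstacle is the constant-chasing in the chaining step and the bootstrap: obtaining the clean form $10(\sigma\vee a)\sqrt{nV[5+\log(1/(\sigma\vee a))]}$ requires choosing the constants in Haussler's bound, in Dudley's integral, and in the resolution of the quadratic-type inequality with care, and the value $a$ is precisely the floor imposed on $\sqrt{R/n}$ by the self-bounding term, so its exact form $32\sqrt{(V\wedge n)/n\,\log(2en/(V\wedge n))}\wedge 1$ falls out of that computation. A secondary point is the $V$-versus-$V\wedge n$ bookkeeping, which guarantees the bound degrades to a (trivially true) $O(\sqrt{nV})$ estimate when $V\ge n$, since then $\sigma\vee a=1$ while $Z\le n$ anyway; for the precise constants one may simply invoke \cite{Baraud2016} directly.
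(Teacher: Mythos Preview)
The paper does not prove this lemma at all: it is quoted verbatim as Proposition~3.1 of \cite{Baraud2016} and invoked as a black box. Your sketch (symmetrization, Haussler's covering-number bound inside Dudley's entropy integral, then a self-bounding step to replace the random radius $R$ by $n\sigma^{2}\vee\E[Z]$) is precisely the route taken in that reference, so in substance you are reproducing the cited proof rather than offering an alternative. For the purposes of this paper, a one-line citation suffices; if you want to include the argument, be aware that the exact constants $10$, $5$, and $32$ come from specific numerical choices in \cite{Baraud2016} and are not recoverable from the level of detail in your outline.
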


To prove Theorem~\ref{general}, we also need the following result, which can be obtained by making a modification to the proof of Theorem~2 in \cite{Baraud2020}.
\begin{lem}\label{supemp}
Let $W_{1},\ldots,W_{n}$ be $n$ independent random variables with values in $(\sW,\cW)$ and $\cF$ an at most countable VC-subgraph class of functions with values in $[0,1]$ and VC-dimension not larger than $V\geq 1$. If 
\[
Z(\cF)=\sup \limits_{f\in\cF} \left|\sum \limits_{i=1}^{n}(f(W_{i})-\E\cro{f(W_{i})})\right|\;\; \text{and}\;\; \sup_{f\in\cF}\frac{1}{n}\sum_{i=1}^{n}\E\cro{f^{2}(W_{i})}\leq\sigma^{2}\leq 1,
\]
then
\begin{equation*}
\E\cro{Z(\cF)}\leq 4.61\sqrt{nV\sigma^{2}\cL(\sigma)}+85V\cL(\sigma),
\end{equation*}
with $\cL(\sigma)=9.11+\log(1/\sigma^{2})$.
\end{lem}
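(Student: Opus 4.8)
The plan is to adapt the proof of Theorem~2 in \cite{Baraud2020}: we dominate the function-indexed empirical process $Z(\cF)$ by an integral over levels of set-indexed empirical processes, via the layer-cake formula, and then apply Lemma~\ref{Baraud2016prop} at each level. Since every $f\in\cF$ takes values in $[0,1]$, for each realization of the sample we have $\sum_{i=1}^{n}(f(W_{i})-\E\cro{f(W_{i})})=\int_{0}^{1}\sum_{i=1}^{n}(\1_{f(W_{i})>t}-\P(f(W_{i})>t))\,dt$, so that
\[
Z(\cF)\le\int_{0}^{1}Z_{t}\,dt,\qquad Z_{t}:=\sup_{f\in\cF}\Big|\sum_{i=1}^{n}\big(\1_{W_{i}\in A_{f,t}}-\P(W_{i}\in A_{f,t})\big)\Big|,
\]
where $A_{f,t}:=\{w\in\sW:f(w)>t\}$; countability of $\cF$ guarantees measurability of all the suprema involved, so that by Tonelli $\E\cro{Z(\cF)}\le\int_{0}^{1}\E\cro{Z_{t}}\,dt$. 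For fixed $t$, the family $\cC_{t}:=\{A_{f,t}:f\in\cF\}$ is obtained by intersecting the open subgraphs $\sC_{f}$ of the functions $f\in\cF$ with the line $\sW\times\{t\}$, hence by Assumption~\ref{vcass} it is a VC-class of subsets of $\sW$ of dimension at most $V$.

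Next I would bound the weak variance of $\cC_{t}$. Because $f\le 1$ we have $t^{2}\1_{f(w)>t}\le f^{2}(w)$, and therefore $\frac1n\sum_{i=1}^{n}\P(W_{i}\in A_{f,t})\le\sigma^{2}/t^{2}$ for every $f\in\cF$; this average is also trivially at most $1$. Setting $\sigma_{t}^{2}:=\min(\sigma^{2}/t^{2},1)\in(0,1]$, Lemma~\ref{Baraud2016prop} applied to $\cC_{t}$ yields $\E\cro{Z_{t}}\le 10(\sigma_{t}\vee a)\sqrt{nV(5+\log(1/(\sigma_{t}\vee a)))}$, where $a=[32\sqrt{((V\wedge n)/n)\log(2en/(V\wedge n))}]\wedge 1$ is the quantity appearing in that lemma (a fixed constant, independent of $t$).

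It then remains to integrate this bound over $t\in(0,1]$. I would split the interval into the three regions where $\sigma_{t}\vee a$ equals $1$ (that is, $t\le\sigma$), equals $\sigma/t$ (that is, $\sigma<t\le\sigma/a$), and equals $a$ (that is, $t>\sigma/a$, which can occur only when $\sigma<a$). On the first region the integrand is constant and contributes a term of order $\sigma\sqrt{nV}$. On the second region the substitutions $u=t/\sigma$ and then $v=\log u$ reduce the integral to a combination of integrals of the type $\int_{0}^{\log(1/\sigma)}\sqrt{5+v}\,dv$, producing the leading term of order $\sqrt{nV\sigma^{2}\cL(\sigma)}$ together with lower-order remainders. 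On the third region the integrand is again essentially constant and contributes a term of order $a\sqrt{nV\log(1/a)}$, which by the definition of $a$ is of order $V\sqrt{\log(2en/V)\log(1/a)}$ and is absorbed into the $V\cL(\sigma)$ term. Summing the three contributions and estimating the numerical constants — using crude inequalities such as $xy\le\frac12(x^{2}+y^{2})$ to trade a $\sqrt{nV}$ factor for a $V$ factor, and $5+\log(1/\sigma^{2})\le\cL(\sigma)$ — leads to $\E\cro{Z(\cF)}\le 4.61\sqrt{nV\sigma^{2}\cL(\sigma)}+85V\cL(\sigma)$.

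The elementary but delicate step, which I expect to be the main obstacle, is this last integration: the difficulty is not the calculus but the bookkeeping of absolute constants, since the three pieces must be assembled so as to fit under the single profile $4.61\sqrt{nV\sigma^{2}\cL(\sigma)}+85V\cL(\sigma)$. This forces a careful choice of the splitting thresholds and of the way the sub-leading $\sigma\sqrt{nV}$ contribution and the $a$-dependent tail are traded off against the $V\cL(\sigma)$ remainder, so that the explicit constants $4.61$ and $85$ actually come out.
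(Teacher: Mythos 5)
Your layer-cake decomposition, the measurability/Tonelli step, the identification of $\cC_t=\{A_{f,t}:f\in\cF\}$ as a VC class of dimension at most $V$, and the weak-variance bound $\frac1n\sum_i\P(W_i\in A_{f,t})\le\min(\sigma^2/t^2,1)$ are all correct. The gap is in the middle-region integral, and it is not a matter of bookkeeping: the calculus simply does not produce the claimed leading order. After the substitutions you describe, the contribution from $\sigma<t\le 1$ (taking $\sigma\ge a$ for concreteness) is
\[
10\sigma\sqrt{nV}\int_0^{\log(1/\sigma)}\sqrt{5+v}\,dv
=10\sigma\sqrt{nV}\cdot\tfrac23\Bigl[\bigl(5+\log(1/\sigma)\bigr)^{3/2}-5^{3/2}\Bigr],
\]
which is of order $\sigma\sqrt{nV}\,\cL(\sigma)^{3/2}$, not $\sqrt{nV\sigma^2\cL(\sigma)}=\sigma\sqrt{nV}\sqrt{\cL(\sigma)}$. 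So the layer-cake approach overshoots the target leading term by a full factor of $\cL(\sigma)$, and this excess cannot be traded into the $85V\cL(\sigma)$ remainder: choosing e.g.\ $V=1$, $\sigma^2=10^{-4}$, $n$ large, your middle piece is $\approx 1.2\sqrt{n}$ while $4.61\sqrt{nV\sigma^2\cL(\sigma)}\approx 0.2\sqrt{n}$ and $85V\cL(\sigma)$ is a bounded constant, so the discrepancy grows without bound with $n$.

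The underlying issue is that a single pass of Lemma~\ref{Baraud2016prop} at each level $t$ wastes the chaining gain: it treats the level sets at different heights as if they were unrelated, whereas the correct argument must exploit the metric structure of $\cF$ in $L_2$. The proof the paper has in mind (a modification of Theorem~2 in \cite{Baraud2020}, which itself goes back to the mechanism in \cite{MR3595933} and Gin\'e--Koltchinskii-type bounds) proceeds via symmetrization and an entropy-integral / chaining argument at the level of the functions themselves, using that for a VC-subgraph class of dimension $V$ with envelope $1$ the $L_2(\Q)$ covering numbers satisfy $\log N(\cF,\varepsilon,L_2(\Q))\lesssim V\log(1/\varepsilon)$, which yields $\int_0^{\sigma}\sqrt{V\log(1/\varepsilon)}\,d\varepsilon\asymp\sigma\sqrt{V\log(1/\sigma)}$ — the correct $\sqrt{\cL(\sigma)}$ power. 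The additive $V\cL(\sigma)$ term then comes from controlling the fluctuation of the empirical second moment around $\sigma^2$. A level-by-level application of the set-indexed bound, however carefully the three regions are split, cannot recover that $\sqrt{\cL}$ behaviour.
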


\begin{proof}[Proof of Theorem~\ref{general}]
Now we will proceed to prove Theorem~\ref{general}. Utilizing the result from Lemma~\ref{thmlemma}, we only need to establish an upper bound for the term $\bs\vartheta(\overline f)$. Let us express $\bs\vartheta(\overline f)$ as $\bs\vartheta(\overline f)=\bs\vartheta_{1}(\overline f)\vee\bs\vartheta_{2}(\overline f)$, where $$\bs\vartheta_{1}(\overline f)=\E\cro{\sup_{f'\in\cF}\cro{\gT_{l}(\bsX,\overline f,f')-\E\cro{\gT_{l}(\bsX,\overline f,f')}}}$$ and $$\bs\vartheta_{2}(\overline f)=\E\cro{\sup_{f'\in\cF}\cro{\E\cro{\gT_{l}(\bsX,f',\overline f)}-\gT_{l}(\bsX,f',\overline f)}}.$$ In what follows, we will derive an upper bound for the term $\bs\vartheta_{1}(\overline f)$.

For any $f_{1},f_{2}\in\cF$, define $$g_{(f_{1},f_{2})}(w,y)=\1_{q_{f_{2}(w)}(y)>q_{f_{1}(w)}(y)},\quad\mbox{for all}\ (w,y)\in\sW\times\sY.$$
Let $\Phi$ be the cumulative distribution function of the standard normal distribution. For any $f_{1},f_{2}\in\cF$, define $$h_{(f_{1},f_{2})}(w)=\Phi\left(-\frac{|f_{1}(w)-f_{2}(w)|}{2\sigma}\right),\quad\mbox{for all}\ w\in\sW$$ and $$k_{(f_{1},f_{2})}(w)=\frac{1}{2}\1_{f_{1}(w)=f_{2}(w)},\quad\mbox{for all}\ w\in\sW.$$ Given any $f_{1},f_{2}\in\cF$, according to Proposition~\ref{Q-equality}, we have that for all $(w,y)\in\sW\times\sY$,
\begin{align}
t_{(f_{1},f_{2})}(w,y)&=g_{(f_{1},f_{2})}(w,y)-\cro{h_{(f_{1},f_{2})}(w)-k_{(f_{1},f_{2})}(w)}\nonumber\\
&=g_{(f_{1},f_{2})}(w,y)-h_{(f_{1},f_{2})}(w)+k_{(f_{1},f_{2})}(w).\label{qequa}
\end{align}
By the definition of $\gT_{l}$ and the equality \eref{qequa}, we deduce that
\begin{align*}
\bs\vartheta_{1}(\overline f)=&\ \E\cro{\sup_{f'\in\cF}\cro{\sum_{i=1}^{n}\left(t_{(\overline f,f')}(W_{i},Y_{i})-\E\cro{t_{(\overline f,f')}(W_{i},Y_{i})}\right)}}\\
\leq&\ \E\cro{\sup_{f'\in\cF}\left|\sum_{i=1}^{n}\left(g_{(\overline f,f')}(W_{i},Y_{i})-\E\cro{g_{(\overline f,f')}(W_{i},Y_{i})}\right)\right|}\\
&+\E\cro{\sup_{f'\in\cF}\left|\sum_{i=1}^{n}\left(h_{(\overline f,f')}(W_{i})-\E\cro{h_{(\overline f,f')}(W_{i})}\right)\right|}\\
&+\E\cro{\sup_{f'\in\cF}\left|\sum_{i=1}^{n}\left(k_{(\overline f,f')}(W_{i})-\E\cro{k_{(\overline f,f')}(W_{i})}\right)\right|}.
\end{align*}
As it has been shown in Proposition~\ref{vcsubsets} that under Assumption~\ref{vcass}, the class of subset $\sC_{+}(\cF,\overline f)$ is VC with dimension not larger than $9.41V$. Hence, applying Lemma~\ref{Baraud2016prop} with $\sigma=1$, we can obtain that
\begin{equation}\label{bound1}
\E\cro{\sup_{f'\in\cF}\left|\sum_{i=1}^{n}\left(g_{(\overline f,f')}(W_{i},Y_{i})-\E\cro{g_{(\overline f,f')}(W_{i},Y_{i})}\right)\right|}\leq68.6\sqrt{nV}.
\end{equation}
According to Proposition~\ref{vcequa}, under Assumption~\ref{vcass}, the class of subsets $\sC_{=}(\cF,\overline f)$ is VC on $\sW$ with dimension not larger than $9.41V$. Applying Lemma~\ref{Baraud2016prop} again, we derive that
\begin{equation}\label{bound2}
\E\cro{\sup_{f'\in\cF}\left|\sum_{i=1}^{n}\left(k_{(\overline f,f')}(W_{i})-\E\cro{k_{(\overline f,f')}(W_{i})}\right)\right|}\leq34.3\sqrt{nV}.
\end{equation}
Moreover, under Assumption~\ref{vcass}, the class of functions $\{f'-\overline f,\;f'\in\cF\}$ is VC-subgraph on $\sW$ with dimension not larger than $V$. Given the value of $\sigma>0$, since the function $\psi(z)=-|z|/2\sigma$, for all $z\in\R$ is unimodal, the class $\{\psi\circ(f'-\overline f),\;f'\in\cF\}$ is VC-subgraph on $\sW$ with dimension not larger than $9.41V$, as stated in Proposition~42 (vi) of \cite{MR3595933}. Then according to Proposition~42 (ii) of \cite{MR3595933}, $\{h_{(\overline f,f')},\;f'\in\cF\}=\{\Phi\circ\cro{\psi\circ(f'-\overline f)},\;f'\in\cF\}$ is VC-subgraph on $\sW$ with dimension not larger than $9.41V$. 

It is easy to note that for any $f'\in\cF$, $$\frac{1}{n}\sum_{i=1}^{n}\E\cro{h_{(\overline f,f')}^{2}(W_{i})}\leq 1.$$ Applying Lemma~\ref{supemp} to the class $\{h_{(\overline f,f')},\;f'\in\cF\}$ gives the result that
\begin{equation}
\E\cro{\sup_{f'\in\cF}\left|\sum_{i=1}^{n}\left(h_{(\overline f,f')}(W_{i})-\E\cro{h_{(\overline f,f')}(W_{i})}\right)\right|}\leq42.7\sqrt{nV}+7286.7V.\label{bound3}
\end{equation}
Combining \eref{bound1}, \eref{bound2} and \eref{bound3} together, we can conclude that for any $\overline f\in\cF$
\begin{equation}\label{v1}
\bs\vartheta_{1}(\overline f)\leq145.6\sqrt{nV}+7286.7V.
\end{equation}
By following a similar line of proof, one can also derive that
\begin{equation}\label{v2}
\bs\vartheta_{2}(\overline f)\leq145.6\sqrt{nV}+7286.7V.
\end{equation}
Therefore, \eref{v1} and \eref{v2} together imply that for any $\overline f\in\cF$,
\begin{equation}\label{vbound}
\bs\vartheta(\overline f)=\bs\vartheta_{1}(\overline f)\vee\bs\vartheta_{2}(\overline f)\leq145.6\sqrt{nV}+7286.7V.
\end{equation}
By substituting the bound \eref{vbound} into equation \eref{equa4}, we infer that for any $\overline f\in\cF$ and any $\xi>0$, with a probability at least $1-e^{-\xi}$,
\begin{equation}
\ell(\gQ_{\overline f},\gQ_{\widehat f})\leq2\ell(\gQ\et,\gQ_{\overline f})+291.2\sqrt{\frac{V}{n}}+14573.4\frac{V}{n}+\sqrt{\frac{8(\xi+\log2)}{n}}+\frac{\epsilon}{n},
\end{equation}
which concludes the inequality \eref{exp-inequa-everypoint}. Using the triangle inequality,
\[
\ell(\gQ\et,\gQ_{\widehat f})\leq\ell(\gQ\et,\gQ_{\overline f})+\ell(\gQ_{\overline f},\gQ_{\widehat f}),
\]
we derive that any $\ell$-type estimator $\widehat f$ on the set $\cF$ satisfies that for all $\xi>0$, with a probability at least $1-e^{-\xi}$,
\[
\ell(\gQ\et,\gQ_{\widehat f})\leq3\ell(\gQ\et,\sbQ)+291.2\sqrt{\frac{V}{n}}+14573.4\frac{V}{n}+\sqrt{\frac{8(\xi+\log2)}{n}}+\frac{\epsilon}{n}.
\]
\end{proof}

\subsection{Proof of Lemma~\ref{network-dense}}\label{network-dense-supnorm}
\begin{proof}
Lemma~\ref{network-dense} can be proven using a similar argument as in the proof of Lemma~11 in \cite{ChenModSel}, where the main idea is inspired by the proof of Lemma~5 of \cite{Schmidt2020}. We only need to show that for any $f\in\overline\cF_{(L,{\bs{p}},K)}$, there exists a sequence of functions $f_{i}\in\cF_{(L,{\bs{p}},K)}$, $i\in\N^{*}$ such that $$\lim_{i\rightarrow+\infty}\|f-f_{i}\|_{\infty}=0.$$ For any $f\in\overline\cF_{(L,{\bs{p}},K)}$, recall that it can be written as 
$$f({\bs{w}})=M_{L}\circ\sigma\circ M_{L-1}\circ\cdots\circ\sigma\circ M_{0}({\bs{w}})\quad\mbox{for any\ }{\bs{w}}\in\cro{0,1}^{d},$$ 
where $$M_{l}({\bs{y}})=A_{l}({\bs{y}})+b_{l},\mbox{\quad for\ }l=0,\ldots,L,$$ $A_{l}$ is a $p_{l+1}\times p_{l}$ weight matrix and the shift vector $b_{l}$ is of size $p_{l+1}$ for any $l\in\{0,\ldots,L\}$. 

For $l\in\{1,\ldots,L\}$, we define the function $f_{l}^{+}:\cro{0,1}^{d}\rightarrow\R^{p_{l}}$,
\[
f_{l}^{+}({\bs{w}})=\sigma\circ M_{l-1}\circ\cdots\circ\sigma\circ M_{0}({\bs{w}})
\]
and for $l\in\{1,\ldots,L+1\}$, we define $f_{l}^{-}:\R^{p_{l-1}}\rightarrow\R$
\[
f_{l}^{-}({\bs{x}})=M_{L}\circ\sigma\circ\cdots\circ\sigma\circ M_{l-1}({\bs{x}}).
\]
We set the notations $f_{0}^{+}({\bs{x}})=f_{L+2}^{-}({\bs{x}})={\bs{x}}$. Given a vector ${\bs{v}}=(v_{1},\ldots,v_{p})^{\top}$ of any size $p\in\N^{*}$, we denote $|{\bs{v}}|_{\infty}=\max_{i=1,\ldots,p}|v_{i}|$. 

For any $f\in\overline\cF_{(L,{\bs{p}},K)}$, with the fact that the absolute values of all the parameters are bounded by $K$ and ${\bs{w}}\in\cro{0,1}^{d}$, we have for all $l\in\{1,\ldots,L\}$
\begin{equation*}
\left|f_{l}^{+}({\bs{w}})\right|_{\infty}\leq K_{+}^{l}\prod_{k=0}^{l-1}(p_{k}+1),
\end{equation*}
where $K_{+}=\max\{K,1\}$, and $f_{l}^{-}$, $l\in\{1,\ldots,L+1\}$, is a multivariate Lipschitz function with Lipschitz constant bounded by $\prod_{k=l-1}^{L}(K_{+}p_{k})$.

For any $f\in\overline\cF_{(L,{\bs{p}},K)}$ with weight matrices and shift vectors $\{M_{l}=(A_{l},b_{l})\}_{l=0}^{L}$ and for all $\epsilon>0$, since $\Q$ is dense in $\R$, there exist a $N_{\epsilon}>0$ such that for all $i\geq N_{\epsilon}$, all the non-zero parameters in $f_{i}\in\cF_{(L,{\bs{p}},K)}$ are smaller than $$\frac{\epsilon}{(L+1)\prod_{k=0}^{L+1}\cro{K_{+}(p_{k}+1)}}$$ away from the corresponding ones in $f$. We denote the weight matrices and shift vectors of function $f_{i}$ as $\{M_{l}^{i}=(A_{l}^{i},b_{l}^{i})\}_{l=0}^{L}$. We note that $$f_{i}({\bs{w}})=f_{i,2}^{-}\circ\sigma\circ M^{i}_{0}\circ f_{0}^{+}({\bs{w}})$$ and $$f({\bs{w}})=f_{i,L+2}^{-}\circ M_{L}\circ f_{L}^{+}({\bs{w}}).$$ Therefore, for all $i\geq N_{\epsilon}$ and all ${\bs{w}}\in\cro{0,1}^{d}$
\begin{align*}
\left|f_{i}({\bs{w}})-f({\bs{w}})\right|\leq&\sum_{l=1}^{L}\left|f_{i,l+1}^{-}\circ\sigma\circ M^{i}_{l-1}\circ f_{l-1}^{+}({\bs{w}})-f_{i,l+1}^{-}\circ\sigma\circ M_{l-1}\circ f_{l-1}^{+}({\bs{w}})\right|\\
&+\left|M_{L}^{i}\circ f_{L}^{+}({\bs{w}})-M_{L}\circ f_{L}^{+}({\bs{w}})\right|\\
\leq&\sum_{l=1}^{L}\cro{\prod_{k=l}^{L}K_{+}p_{k}}\left|M^{i}_{l-1}\circ f_{l-1}^{+}({\bs{w}})-M_{l-1}\circ f_{l-1}^{+}({\bs{w}})\right|_{\infty}\\
&+\left|M_{L}^{i}\circ f_{L}^{+}({\bs{w}})-M_{L}\circ f_{L}^{+}({\bs{w}})\right|\\
\leq&\sum_{l=1}^{L+1}\cro{\prod_{k=l}^{L+1}K_{+}p_{k}}\left|M^{i}_{l-1}\circ f_{l-1}^{+}({\bs{w}})-M_{l-1}\circ f_{l-1}^{+}({\bs{w}})\right|_{\infty}\\
\leq&\sum_{l=1}^{L+1}\cro{\prod_{k=l}^{L+1}K_{+}p_{k}}\cro{\left|\left(A^{i}_{l-1}-A_{l-1}\right)\circ f_{l-1}^{+}({\bs{w}})\right|_{\infty}+|b^{i}_{l-1}-b_{l-1}|_{\infty}}\\
<&\frac{\sum_{l=1}^{L+1}\cro{\prod_{k=l}^{L+1}K_{+}p_{k}}\left(p_{l-1}\left|f_{l-1}^{+}({\bs{w}})\right|_{\infty}+1\right)}{(L+1)\prod_{k=0}^{L+1}\cro{K_{+}(p_{k}+1)}}\epsilon\\
<&\epsilon.
\end{align*}
Hence, by the definition we can conclude that $\cF_{(L,{\bs{p}},K)}$ is dense in $\overline\cF_{(L,{\bs{p}},K)}$ with respect to the supremum norm $\|\cdot\|_{\infty}$.

\end{proof}

\subsection{Proof of Corollary~\ref{risk-holder-nm}}\label{risk-nm-proof}
\begin{proof}
Recall that, in accordance with the general result \eref{riskbound}, for any $n\geq V(\overline\cF_{(L,p)})\geq V(\overline\cF_{(L,p,K)})$, we can obtain
\begin{equation}\label{risk-holder-initial}
\E\cro{\ell(Q_{f\et},Q_{\widehat f})}\leq C_{\epsilon}\cro{\inf_{f\in\cF_{(L,p,K)}}\ell(Q_{f\et},Q_{f})+\sqrt{\frac{V(\overline\cF_{(L,p,K)})}{n}}},
\end{equation}
where $C_{\epsilon}>0$ is a numerical constant depending on $\epsilon$ only. Then, applying Lemma~\ref{network-dense} and inequality \eref{tvconnect}, we derive from \eref{risk-holder-initial} that
\begin{align}
\E\cro{\ell(Q_{f\et},Q_{\widehat f})}&\leq C_{\epsilon,\sigma}\cro{\inf_{f\in\cF_{(L,p,K)}}\|f\et-f\|_{1,P_{W}}+\sqrt{\frac{V(\overline\cF_{(L,p,K)})}{n}}}\nonumber\\
&\leq C_{\epsilon,\sigma}\cro{\inf_{f\in\overline\cF_{(L,p,K)}}\|f\et-f\|_{1,P_{W}}+\sqrt{\frac{V(\overline\cF_{(L,p,K)})}{n}}},\label{risk-raw}
\end{align}
where $C_{\epsilon,\sigma}$ is a numerical constant depending only on $\epsilon$ and $\sigma$. On the one hand, as a consequence of Proposition~\ref{appro-holder}, we have that for the network $\overline\cF_{(L,p,K)}$ with 
\begin{equation}\label{p-value}
p=38(\lfloor\alpha\rfloor+1)^{2}3^{d}d^{\lfloor\alpha\rfloor+1}N\lceil\log_{2}(8N)\rceil,
\end{equation}
\begin{equation}\label{L-value}
L=21(\lfloor\alpha\rfloor+1)^{2}M\lceil\log_{2}(8M)\rceil+2d
\end{equation}
and $K$ being large enough,
\begin{align}\label{risk-appro-bound}
\inf_{f\in\overline\cF_{(L,p,K)}}\|f\et-f\|_{1,P_{W}}&=\inf_{f\in\overline\cF_{(L,p,K)}}\int_{\sW}|f\et(w)-f(w)|dP_{W}(w)\nonumber\\
&\leq 19B(\lfloor\alpha\rfloor+1)^{2}d^{\lfloor\alpha\rfloor+(\alpha\vee1)/2}(NM)^{-2\alpha/d}.
\end{align}
On the other hand, given the equalities \eref{p-value} and \eref{L-value},
we have $p\geq342$ and $L\geq65$, for any $\alpha\in\R_{+}^{*}$. By applying Proposition~\ref{vc-dim}, we can derive through a basic computation that
\begin{align}\label{risk-vc-bound}
V(\overline\cF_{(L,p,K)})&\leq(L+1)\left(s+1\right)\log_{2}\cro{2\left(2e(L+1)\left(\frac{pL}{2}+1\right)\right)^{2}}\nonumber\\
&\leq C_{d}p^{2}L^{2}\log_{2}\left(pL^{2}\right)\nonumber\\
&\leq C_{\alpha,d}(NM)^{2}\cro{\log_{2}(2N)\log_{2}(2M)}^{3},
\end{align}
where $C_{d}$ only depends on $d$ and $C_{\alpha,d}$ only depends on $d$ and $\alpha$. Plugging \eref{risk-appro-bound} and \eref{risk-vc-bound} into \eref{risk-raw}, we can conclude that
\[
\E\cro{\ell(Q_{f\et},Q_{\widehat f})}\leq C_{\epsilon,\sigma,\alpha,d,B}\cro{(NM)^{-2\alpha/d}+\frac{NM}{\sqrt{n}}\left(\log_{2}(2N)\log_{2}(2M)\right)^{3/2}},
\]
where $C_{\epsilon,\sigma,\alpha,d,B}>0$ only depends on $\epsilon,\sigma,\alpha,d$ and $B$.
\end{proof}

\subsection{Proof of Proposition~\ref{approximate-add}}\label{approximation-composite-proof}
\begin{proof}
Prior to proving Proposition~\ref{approximate-add}, we will first introduce the following rules for network combination, which are extensively detailed in Section 7.1 of \cite{Schmidt2020}.

{\em Composition}: Let $f_{1}\in\overline\cF(L,{\bs{p}})$ and $f_{2}\in\overline\cF(L',{\bs{p}}')$ be such that $p_{L+1}=p_{0}'$. Let ${\bs{v}}\in\R^{p_{L+1}}$ be a vector. We define the composed network $f_{2}\circ\sigma_{\bs{v}}(f_{1})$, where $$\sigma_{\bs{v}}\begin{pmatrix}
y_{1}\\
\vdots\\
y_{p_{L+1}}
\end{pmatrix}=\begin{pmatrix}
\sigma(y_{1}-v_{1})\\
\vdots\\
\sigma(y_{p_{L+1}}-v_{p_{L+1}})
\end{pmatrix},$$ for any vector ${\bs{y}}=(y_{1},\ldots,y_{p_{L+1}})^{\top}\in\R^{p_{L+1}}$. Then $f_{2}\circ\sigma_{\bs{v}}(f_{1})$ belongs to the space $\overline\cF(L+L'+1,({\bs{p}},p'_{1},\ldots,p'_{L+1}))$.

{\em Parallelization}: Let $f_{1}$ and $f_{2}$ be two networks with an equal number of hidden layers and identical input dimensions. Specifically, let $f_{1}\in\overline\cF(L,{\bs{p}})$ and $f_{2}\in\overline\cF(L,{\bs{p}}')$, where $p_{0}=p'_{0}$. The parallelized network $(f_{1},f_{2})$ concurrently computes $f_{1}$ and $f_{2}$ within a joint network belonging to the class $\overline\cF(L,(p_{0},p_{1}+p'_{1},\ldots,p_{L+1}+p'_{L+1}))$.

We will also use the following inequality later in the proof. It can be derived through a minor modification of the proof of Lemma~3 in \cite{Schmidt2020}. 
\begin{lem}\label{composite-holder-sup}
Let $k\in\N^{*}$, ${\bf{d}}=(d_{0},\ldots,d_{k})\in(\N^{*})^{k+1}$, ${\bf{t}}=(t_{0},\ldots,t_{k})\in(\N^{*})^{k+1}$ with $t_{i}\leq d_{i}$ and ${\bs{\alpha}}=(\alpha_{0},\ldots,\alpha_{k})\in(\R_{+}^{*})^{k+1}$. For any $i\in\{0,\ldots,k\}$ and $j\in\{1,\ldots,d_{i+1}\}$ with $d_{k+1}=1$, let $h_{ij}\in\cH^{\alpha_{i}}(\cro{0,1}^{t_{i}},Q_{i})$ taking values in $\cro{0,1}$ for some $Q_{i}\geq1$ and $h_{i}=(h_{i1},\ldots,h_{id_{i+1}})^{\top}$. Then for any function $\widetilde h_{i}=(\widetilde h_{i1},\ldots,\widetilde h_{id_{i+1}})^{\top}$ with $\widetilde h_{ij}:\cro{0,1}^{t_{i}}\rightarrow\cro{0,1}$,
\begin{equation*}
\|h_{k}\circ\cdots\circ h_{0}-\widetilde h_{k}\circ\cdots\circ\widetilde h_{0}\|_{\infty}\leq \left(\prod_{i=0}^{k}Q_{i}\sqrt{d_{i}}\right)\sum_{i=0}^{k}|||h_{i}-\widetilde h_{i}|||_{\infty}^{\prod_{l=i+1}^{k}(\alpha_{l}\wedge1)},
\end{equation*}
where $|||f|||_{\infty}$ denotes the sup-norm of the function ${\bs{x}}\mapsto |f({\bs{x}})|_{\infty}$. 
\end{lem}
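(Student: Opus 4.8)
The plan is to follow the proof of Lemma~3 in \cite{Schmidt2020}, peeling off the outermost map and telescoping, with the only genuine change being that here every domain and range is the unit cube $\cro{0,1}$ rather than a general box. Write $g_{m}=h_{m}\circ\cdots\circ h_{0}$ and $\widetilde g_{m}=\widetilde h_{m}\circ\cdots\circ\widetilde h_{0}$ for $m\in\{0,\ldots,k\}$, and prove by induction on $m$ that
\[
|||g_{m}-\widetilde g_{m}|||_{\infty}\le\Big(\prod_{i=0}^{m}Q_{i}\sqrt{d_{i}}\Big)\sum_{i=0}^{m}|||h_{i}-\widetilde h_{i}|||_{\infty}^{\prod_{l=i+1}^{m}(\alpha_{l}\wedge1)},
\]
the case $m=k$ being exactly the asserted bound. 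For $m=0$ this is immediate because $Q_{0}\sqrt{d_{0}}\ge1$. A key structural remark used throughout: since all the $h_{ij}$ and $\widetilde h_{ij}$ take values in $\cro{0,1}$, so do $g_{m}$ and $\widetilde g_{m}$, hence every "intermediate error" $|||g_{m-1}-\widetilde g_{m-1}|||_{\infty}$ appearing below is $\le1$ and may be raised to any smaller positive power at no cost.

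For the inductive step I would split
\[
g_{m}-\widetilde g_{m}=\big(h_{m}(g_{m-1})-h_{m}(\widetilde g_{m-1})\big)+\big(h_{m}(\widetilde g_{m-1})-\widetilde h_{m}(\widetilde g_{m-1})\big).
\]
The second term has sup-norm at most $|||h_{m}-\widetilde h_{m}|||_{\infty}$, since $\widetilde g_{m-1}$ takes values in the cube on which $h_{m}$ and $\widetilde h_{m}$ are defined. For the first term I would estimate each coordinate $|h_{mj}(g_{m-1}(x))-h_{mj}(\widetilde g_{m-1}(x))|$ using the regularity $h_{mj}\in\cH^{\alpha_{m}}(\cro{0,1}^{t_{m}},Q_{m})$, distinguishing two cases: if $\alpha_{m}\le1$, the Hölder property gives $|h_{mj}(u)-h_{mj}(v)|\le Q_{m}|u-v|_{2}^{\alpha_{m}}$, and one converts $|u-v|_{2}\le\sqrt{t_{m}}\,|u-v|_{\infty}\le\sqrt{d_{m}}\,|u-v|_{\infty}$; if $\alpha_{m}>1$, one uses that the first-order partials of $h_{mj}$ are bounded by $Q_{m}$, giving a Lipschitz estimate in $|u-v|_{\infty}$. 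Invoking $t_{m}\le d_{m}$, $Q_{m}\ge1$ and $|||g_{m-1}-\widetilde g_{m-1}|||_{\infty}\le1$, both cases are then brought into the common form
\[
|||h_{m}(g_{m-1})-h_{m}(\widetilde g_{m-1})|||_{\infty}\le Q_{m}\sqrt{d_{m}}\;|||g_{m-1}-\widetilde g_{m-1}|||_{\infty}^{\alpha_{m}\wedge1}.
\]
Substituting the induction hypothesis for $|||g_{m-1}-\widetilde g_{m-1}|||_{\infty}$, applying the elementary subadditivity $(\sum_{i}c_{i})^{\theta}\le\sum_{i}c_{i}^{\theta}$ with $\theta=\alpha_{m}\wedge1\in(0,1]$ (so that the cumulative exponent on $|||h_{i}-\widetilde h_{i}|||_{\infty}$ becomes $(\alpha_{m}\wedge1)\prod_{l=i+1}^{m-1}(\alpha_{l}\wedge1)=\prod_{l=i+1}^{m}(\alpha_{l}\wedge1)$), using $\big(\prod_{j<m}Q_{j}\sqrt{d_{j}}\big)^{\theta}\le\prod_{j<m}Q_{j}\sqrt{d_{j}}$, and adding back the term $|||h_{m}-\widetilde h_{m}|||_{\infty}$ (absorbed since the running constant exceeds $1$), one recovers the level-$m$ bound and closes the induction.

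The routine-but-delicate part — and the place where the "minor modification" of \cite{Schmidt2020} is actually made — is the constant bookkeeping: one must verify that the dimensional factors produced when passing from $|\cdot|_{\infty}$ to $|\cdot|_{2}$ (case $\alpha_{m}\le1$) and those produced by the derivative bound (case $\alpha_{m}>1$) can both be subsumed into the single multiplicative factor $Q_{m}\sqrt{d_{m}}$ per layer, and that after an error $|||h_{i}-\widetilde h_{i}|||_{\infty}$ has been pushed through layers $i+1,\ldots,k$ it carries exactly the exponent $\prod_{l=i+1}^{k}(\alpha_{l}\wedge1)$. I expect the only mild friction to be the Lipschitz case $\alpha_{m}>1$, where the naive estimate yields a factor proportional to $t_{m}$ times the first-order error; it is precisely the normalization of all ranges to $\cro{0,1}$ (so the propagating errors are $\le1$) together with $t_{m}\le d_{m}$ and $Q_{m}\ge1$ that lets this contribution be folded back into the same clean bound, in contrast with the boxes $\cro{a_{i},b_{i}}^{d_{i}}$ of \cite{Schmidt2020} where an extra rescaling step is needed.
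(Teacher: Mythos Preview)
Your proposal is correct and follows exactly the route the paper indicates: the paper does not give a self-contained proof of this lemma but simply states that it ``can be derived through a minor modification of the proof of Lemma~3 in \cite{Schmidt2020}'', and your induction on the number of composed layers with the telescoping split $h_m(g_{m-1})-h_m(\widetilde g_{m-1})+h_m(\widetilde g_{m-1})-\widetilde h_m(\widetilde g_{m-1})$ is precisely that argument. The one point worth tightening is your handling of the case $\alpha_m>1$: the naive Lipschitz bound indeed produces a factor $t_m$ (or $Q_m\sqrt{t_m}\|u-v\|_2$) rather than $\sqrt{d_m}$, and the normalization to $[0,1]$ alone does not turn $t_m$ into $\sqrt{d_m}$; however, this only affects the multiplicative constant, not the structure of the bound, and in every subsequent use of the lemma in the paper the constant is absorbed into $C_{\epsilon,\sigma,k,{\bf d},{\bf t},\bs{\alpha},B}$ anyway.
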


The essential strategy for establishing Proposition~\ref{approximate-add} is derived from a section of the proof of Theorem~1 in \cite{Schmidt2020}. However, we employ distinct fundamental networks as suggested by Proposition~\ref{appro-holder} to approximate functions with H\"older smoothness. This, in turn, leads to more specific neural network structures for approximating $f\et=f_{k}\circ\cdots\circ f_{0}$ compared to the sparsity-based networks considered in Theorem~1 of \cite{Schmidt2020}.

To begin with, we rewrite $$f\et=f_{k}\circ\cdots\circ f_{0}=g_{k}\circ\cdots\circ g_{0},$$ where $$g_{0}:=\frac{f_{0}}{2B}+\frac{1}{2},\quad g_{k}:=f_{k}(2B\cdot-B)$$ and $$g_{i}:=\frac{f_{i}(2B\cdot-B)}{2B}+\frac{1}{2}\quad\mbox{for all\ }i\in\{1,\ldots,k-1\}.$$
Given the condition $B\geq1$, we can readily confirm that
$g_{0j}\in\cH^{\alpha_{0}}(\cro{0,1}^{t_{0}},Q_{0})$, $g_{ij}\in\cH^{\alpha_{i}}(\cro{0,1}^{t_{i}},Q_{i})$, for $i\in\{1,\ldots,k-1\}$ and $g_{kj}\in\cH^{\alpha_{k}}(\cro{0,1}^{t_{k}},Q_{k})$, with $Q_{0}=1$, $Q_{i}=(2B)^{\alpha_{i}}$, for $i\in\{1,\ldots,k-1\}$ and $Q_{k}=2^{\alpha_{k}}B^{\alpha_{k}+1}$. 

We apply Proposition~\ref{appro-holder} to approximate each function $g_{ij}$, for all $j\in\{1,\ldots,d_{i+1}\}$, $i\in\{0,\ldots,k\}$. In particular, for all the functions $g_{i1},\ldots,g_{id_{i+1}}$, we take $N_{i}=1$, $M_{i}=\lceil n^{t_{i}/2(t_{i}+2\alpha_{i}^{*})}\rceil$ and consider a ReLU network $\overline\cF_{(L_{i},(t_{i},p_{i},\ldots,p_{i},1))}$ with $$p_{i}=114(\lfloor\alpha_{i}\rfloor+1)^{2}3^{t_{i}}t_{i}^{\lfloor\alpha_{i}\rfloor+1},$$
$$L_{i}=21(\lfloor\alpha_{i}\rfloor+1)^{2}\lceil n^{t_{i}/2(t_{i}+2\alpha_{i}^{*})}\rceil\lceil\log_{2}(8\lceil n^{t_{i}/2(t_{i}+2\alpha_{i}^{*})}\rceil)\rceil+2t_{i}.$$ According to Proposition~\ref{appro-holder}, there exists a function $\overline g_{ij}\in\overline\cF_{(L_{i},(t_{i},p_{i},\ldots,p_{i},1))}$ such that
\begin{align}\label{primary-appro}
\|\overline g_{ij}-g_{ij}\|_{\infty}&\leq19Q_{i}(\lfloor\alpha_{i}\rfloor+1)^{2}t_{i}^{\lfloor\alpha_{i}\rfloor+(\alpha_{i}\vee1)/2}(N_{i}M_{i})^{-2\alpha_{i}/t_{i}},\nonumber\\
&\leq19Q_{i}(\lfloor\alpha_{i}\rfloor+1)^{2}t_{i}^{\lfloor\alpha_{i}\rfloor+(\alpha_{i}\vee1)/2}(\lceil n^{t_{i}/2(t_{i}+2\alpha_{i}^{*})}\rceil)^{-2\alpha_{i}/t_{i}}.
\end{align}
Let $\widetilde g_{ij}=(\overline g_{ij}\vee0)\wedge1=1-(1-\overline g_{ij})_{+}$. It is straightforward to observe that $\widetilde g_{ij}$ assumes values in the interval $\cro{0,1}$. Recall that since $\overline g_{ij}\in\overline\cF_{(L_{i},(t_{i},p_{i},\ldots,p_{i},1))}$, it can be written as $$\overline g_{ij}=\overline M^{(i)}_{L_{i}}\circ\sigma\circ\cdots\circ\sigma\circ\overline M^{(i)}_{0},$$ for some linear transformations $\overline M^{(i)}_{0},\ldots,\overline M^{(i)}_{L_{i}}$. Let $M_{L_{i}+2}(x)=M_{L_{i}+1}(x)=1-x$, for any $x\in\R$. Then we have 
\begin{align*}
\widetilde g_{ij}&=M_{L_{i}+2}\circ\sigma\circ M_{L_{i}+1}\overline g_{ij}\\
&=M_{L_{i}+2}\circ\sigma\circ M_{L_{i}+1}\overline M^{(i)}_{L_{i}}\circ\sigma\circ\cdots\circ\sigma\circ\overline M^{(i)}_{0}\\
&=M_{L_{i}+2}\circ\sigma\circ \widetilde M^{(i)}_{L_{i}}\circ\sigma\circ\cdots\circ\sigma\circ\overline M^{(i)}_{0}
\end{align*}
where $\widetilde M^{(i)}_{L_{i}}=M_{L_{i}+1}\circ\overline M^{(i)}_{L_{i}}$. Hence, we deduce that $\widetilde g_{ij}\in\overline\cF_{(L_{i}+1,(t_{i},p_{i},\ldots,p_{i},1,1))}$. Furthermore, as each function $g_{ij}$ assumes values in the interval $\cro{0,1}$ due to the transformation, this implies that
\begin{equation}\label{truncated-inequa}
\|\sigma(\widetilde g_{ij})-g_{ij}\|_{\infty}=\|\widetilde g_{ij}-g_{ij}\|_{\infty}\leq\|\overline g_{ij}-g_{ij}\|_{\infty}.
\end{equation}

Next, we amalgamate these individual small networks by employing the fundamental operations of neural networks introduced at the outset of this proof. Note that $\overline\cF_{(L_{i}+1,(t_{i},p_{i},\ldots,p_{i},1,1))}\subset\overline\cF_{(L_{i}+1,(d_{i},p_{i},\ldots,p_{i},1,1))}$, for $t_{i}\leq d_{i}$. By the parallelization rule, the function $\widetilde g_{i}=(\widetilde g_{i1},\ldots, \widetilde g_{id_{i+1}})$ can be implemented by the ReLU neural network $\overline\cF_{(L_{i}+1,(d_{i},d_{i+1}p_{i},\ldots,d_{i+1}p_{i},d_{i+1},d_{i+1}))}$. A similar analysis implies that $\overline g_{k}$ can be implemented by the ReLU neural network $\overline\cF_{(L_{k},(d_{k},d_{k+1}p_{k},\ldots,d_{k+1}p_{k},d_{k+1}))}$. To construct the function $\widetilde{f}=\overline g_{k}\circ\widetilde g_{k-1}\cdots\circ\widetilde g_{0}$ that approximates the function $f\et=g_{k}\circ\cdots\circ g_{0}$, we apply the composition rule to amalgamate the networks we have considered earlier. It can be shown with a similar argument as we did before that for any $k\in\N^{*}$, $\widetilde g_{k-1}\circ\cdots\circ\widetilde g_{0}$ can be implemented by the ReLU neural network $$\overline\cF\left(\sum_{i=0}^{k-1}(L_{i}+1),\left(d_{0},\underbrace{d_{1}p_{0},\ldots,d_{1}p_{0}}_{L_{0}\ \mbox{times}},\ldots,d_{k-1},\underbrace{d_{k}p_{k-1},\ldots,d_{k}p_{k-1}}_{L_{k-1}\ \mbox{times}},d_{k},d_{k}\right)\right).$$ Note that $d_{i}\geq1$ for all $0\leq i\leq k+1$ and $p_{i}\geq 1$ for $0\leq i\leq k$. Denote $$\overline p=\max_{i=0,\ldots,k}d_{i+1}p_{i}.$$ Finally, we can conclude that the function $\widetilde{f}=\overline g_{k}\circ\widetilde g_{k-1}\cdots\circ\widetilde g_{0}$ can be implemented by the ReLU neural network $\overline\cF_{(\overline L,(d_{0},\overline p,\ldots,\overline p,d_{k+1}))}$ with $\overline L=k+\sum_{i=0}^{k}L_{i}$.

Recall that $Q_{0}=1$, $Q_{i}=(2B)^{\alpha_{i}}$, for $i\in\{1,\ldots,k-1\}$ and $Q_{k}=2^{\alpha_{k}}B^{\alpha_{k}+1}$. Combining Lemma~\ref{composite-holder-sup} with \eref{primary-appro} and \eref{truncated-inequa} yields the following upper bound for the approximation error,
\begin{align*}
&\inf_{f\in\overline\cF_{(\overline L,(d_{0},\overline p,\ldots,\overline p,d_{k+1}))}}\|f\et-f\|_{\infty}\\
&\leq (2B)^{1+\sum_{i=1}^{k}\alpha_{i}}\left(\prod_{i=0}^{k}\sqrt{d_{i}}\right)\cro{\sum_{i=0}^{k}C_{\alpha_{i},t_{i},B}^{\prod_{l=i+1}^{k}(\alpha_{l}\wedge1)}(\lceil n^{t_{i}/2(t_{i}+2\alpha_{i}^{*})}\rceil)^{-2\alpha^{*}_{i}/t_{i}}},
\end{align*}
where $$C_{\alpha_{i},t_{i},B}=19(2B)^{\alpha_{i}+1}(\lfloor\alpha_{i}\rfloor+1)^{2}t_{i}^{\lfloor\alpha_{i}\rfloor+(\alpha_{i}\vee1)/2}.$$ 
\end{proof}

\subsection{Proof of Corollary~\ref{composite-holder-riskbound}}\label{proof-composite-holder}
\begin{proof}
Firstly, we establish an upper bound for the VC-dimension of the ReLU neural network $\overline\cF_{(\overline L,\overline p,K)}$. Using the fact that for any $k$, ${\bf{d}}$, ${\bf{t}}$, ${\bs{\alpha}}$ and any $n\geq1$, $\overline L\geq L_{0}\geq65$, and $\overline p\geq p_{0}\geq342$, we can deduce, through the application of Proposition~\ref{vc-dim}, that
\begin{align}
V(\overline\cF_{(\overline L,\overline p,K)})&\leq C_{d_{0}}\overline p^{2}\overline L^{2}\log_{2}\left(\overline p\overline L^{2}\right)\nonumber\\
&\leq C_{k,{\bf{d}},{\bf{t}},{\bs{\alpha}}}\left(\sum_{i=0}^{k}L_{i}\right)^{2}\log_{2}\left(\sum_{i=0}^{k}L_{i}\right),\label{vc-add-explicit}
\end{align}
where $C_{k,{\bf{d}},{\bf{t}},{\bs{\alpha}}}>0$ is a numerical constant depending only on $k,{\bf{d}},{\bf{t}}$ and ${\bs{\alpha}}$. Combining \eref{riskbound} with the inequality \eref{tvconnect}, we obtain that for any $n\geq V(\overline\cF_{(\overline L,\overline p)})\geq V(\overline\cF_{(\overline L,\overline p,K)})$,
\begin{align}
\E\cro{\ell(Q_{f\et},Q_{\widehat f})}&\leq C_{\epsilon,\sigma}\cro{\inf_{f\in\cF_{(\overline L,\overline p,K)}}\|f\et-f\|_{1,P_{W}}+\sqrt{\frac{V(\overline\cF_{(\overline L,\overline p,K)})}{n}}}\nonumber\\
&\leq C_{\epsilon,\sigma}\cro{\inf_{f\in\overline\cF_{(\overline L,\overline p,K)}}\|f\et-f\|_{1,P_{W}}+\sqrt{\frac{V(\overline\cF_{(\overline L,\overline p,K)})}{n}}},\label{half-risk}
\end{align}
where the second inequality rises from the fact that $\cF_{(\overline L,\overline p,K)}$ is dense in $\overline\cF_{(\overline L,\overline p,K)}$ with respect to the sup-norm according to Lemma~\ref{network-dense}.

Finally, plugging the result provided in Proposition~\ref{approximate-add} and \eref{vc-add-explicit} into \eref{half-risk}, we can conclude that
\begin{align*}
\E\cro{\ell(Q_{f\et},Q_{\widehat f})}&\leq C_{\epsilon,\sigma,k,{\bf{d}},{\bf{t}},{\bs{\alpha}},B}\cro{\left(\sum_{i=0}^{k}n^{-\frac{\alpha_{i}^{*}}{t_{i}+2\alpha_{i}^{*}}}\right)+\left(\sum_{i=0}^{k}L_{i}\right)\sqrt{\frac{\log_{2}\left(\sum_{i=0}^{k}L_{i}\right)}{n}}}\nonumber\\
&\leq C_{\epsilon,\sigma,k,{\bf{d}},{\bf{t}},{\bs{\alpha}},B}\cro{\sum_{i=0}^{k}\left(n^{-\frac{\alpha_{i}^{*}}{t_{i}+2\alpha_{i}^{*}}}+\frac{L_{i}}{\sqrt{n}}\right)}\sqrt{\log_{2}\left(\sum_{i=0}^{k}L_{i}\right)}\nonumber\\
&\leq C_{\epsilon,\sigma,k,{\bf{d}},{\bf{t}},{\bs{\alpha}},B}\left(\sum_{i=0}^{k}n^{-\frac{\alpha_{i}^{*}}{t_{i}+2\alpha_{i}^{*}}}\right)(\log n)^{3/2}.
\end{align*}

\end{proof}

\subsection{Proof of Theorem~\ref{lower-composite}}\label{proof-lower-bound}
To establish lower bounds, we initially prove the following variant of Assouad's lemma.
\begin{lem}\label{assouad-lemma}
Let $\cP$ be a family of probabilities on a measurable space $(\sX,\cX)$. If for some integer $D\geq1$, there is a subset of $\cP$ of the form $\left\{P_{{\bs{\varepsilon}}},\;{\bs{\varepsilon}}\in\{0,1\}^{D}\right\}$ satisfying
\begin{listi}
\item there exists $\eta>0$ such that for all ${\bs{\varepsilon}},{\bs{\varepsilon}}'\in\{0,1\}^{D}$, $$\|P_{{\bs{\varepsilon}}}-P_{{\bs{\varepsilon}}'}\|_{TV}\geq\eta \delta({\bs{\varepsilon}},{\bs{\varepsilon}}')\quad \text{with}\quad \delta({\bs{\varepsilon}},{\bs{\varepsilon}}')=\sum_{j=1}^{D}\1_{\eps_{j}\ne \eps_{j}'}$$
\item there exists a constant $a\in\cro{0,1/2}$ such that 
\[
h^{2}\pa{P_{{\bs{\varepsilon}}},P_{{\bs{\varepsilon}}'}}\leq\frac{a}{n}\quad \text{for all ${\bs{\varepsilon}},{\bs{\varepsilon}}'\in \{0,1\}^{D}$ satisfying $\delta({\bs{\varepsilon}},{\bs{\varepsilon}}')=1$.}
\]
\end{listi}
Then for all measurable mappings $\widehat P:\sX^{n}\to \cP$,
%
\begin{equation}\label{concAssouad}
\sup_{P\in \cP}\E_{\gP}\cro{\|P-\widehat P(\bsX)\|_{TV}}\geq \frac{\eta D}{4}\max\ac{1-\sqrt{2a},\;\frac{1}{2}\left(1-\frac{a}{n}\right)^{2n}},
\end{equation}
where $\E_{\gP}$ denotes the expectation with respect to a random variable $\bsX=(\etc{X})$ with distribution $\gP=P\on$.
\end{lem}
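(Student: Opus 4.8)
Here is how I would prove this variant of Assouad's lemma. The plan is to run an Assouad-type reduction, but with one twist forced on us by the loss: since $\|P-\widehat P\|_{TV}$ is the total variation distance between \emph{distributions}, it does not split over the coordinates of $\bs\varepsilon$, so the usual coordinatewise bookkeeping does not apply directly; instead I would first convert $\widehat P$ into a decoder on the hypercube. Concretely, I would begin by bounding the worst-case risk below by the Bayes risk over $\{0,1\}^D$,
\[
\sup_{P\in\cP}\E_{\gP}\cro{\|P-\widehat P(\bsX)\|_{TV}}\ \ge\ \frac{1}{2^{D}}\sum_{\bs\varepsilon\in\{0,1\}^{D}}\E_{P_{\bs\varepsilon}\on}\cro{\|P_{\bs\varepsilon}-\widehat P(\bsX)\|_{TV}},
\]
and then, given $\widehat P$, define a measurable decoder $\widehat{\bs\varepsilon}(\bsX)\in\{0,1\}^{D}$ as any minimiser of $\bs\varepsilon\mapsto\|\widehat P(\bsX)-P_{\bs\varepsilon}\|_{TV}$ over the finite set $\{0,1\}^{D}$. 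By optimality of $\widehat{\bs\varepsilon}$ and the triangle inequality, $\|P_{\bs\varepsilon}-P_{\widehat{\bs\varepsilon}}\|_{TV}\le 2\|P_{\bs\varepsilon}-\widehat P\|_{TV}$, so hypothesis (i) yields $\|P_{\bs\varepsilon}-\widehat P\|_{TV}\ge\frac{\eta}{2}\,\delta(\bs\varepsilon,\widehat{\bs\varepsilon})$; taking expectations gives $\E_{P_{\bs\varepsilon}\on}[\|P_{\bs\varepsilon}-\widehat P\|_{TV}]\ge\frac{\eta}{2}\sum_{j=1}^{D}\P_{P_{\bs\varepsilon}\on}(\widehat\varepsilon_{j}\ne\varepsilon_{j})$.

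The next step is a pairing/two-point testing argument on each coordinate. Writing $\bs\varepsilon^{(j)}$ for $\bs\varepsilon$ with its $j$-th bit flipped and grouping the sum over $\bs\varepsilon$ into the $2^{D-1}$ pairs with $\varepsilon_{j}=0$, the contribution of such a pair is $\P_{P_{\bs\varepsilon}\on}(\widehat\varepsilon_{j}=1)+\P_{P_{\bs\varepsilon^{(j)}}\on}(\widehat\varepsilon_{j}=0)$, i.e.\ the total error of the test ``$\widehat\varepsilon_{j}=1$?'' for $P_{\bs\varepsilon}\on$ versus $P_{\bs\varepsilon^{(j)}}\on$, which is at least $1-\|P_{\bs\varepsilon}\on-P_{\bs\varepsilon^{(j)}}\on\|_{TV}$. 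Setting $\beta:=\max_{\delta(\bs\varepsilon,\bs\varepsilon')=1}\|P_{\bs\varepsilon}\on-P_{\bs\varepsilon'}\on\|_{TV}$ and averaging, $\frac{1}{2^{D}}\sum_{\bs\varepsilon}\P_{P_{\bs\varepsilon}\on}(\widehat\varepsilon_{j}\ne\varepsilon_{j})\ge\frac12(1-\beta)$; summing over $j=1,\dots,D$ and combining with the previous display gives
\[
\sup_{P\in\cP}\E_{\gP}\cro{\|P-\widehat P(\bsX)\|_{TV}}\ \ge\ \frac{\eta D}{4}\,(1-\beta).
\]

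It then remains to show $1-\beta\ge\max\{1-\sqrt{2a},\ \tfrac12(1-a/n)^{2n}\}$ using hypothesis (ii); this is where the two Hellinger–total-variation comparisons enter. For a pair with $\delta(\bs\varepsilon,\bs\varepsilon')=1$, tensorisation of the Hellinger affinity gives $h^{2}(P_{\bs\varepsilon}\on,P_{\bs\varepsilon'}\on)=1-(1-h^{2}(P_{\bs\varepsilon},P_{\bs\varepsilon'}))^{n}\le 1-(1-a/n)^{n}\le a$ (Bernoulli's inequality), whence $\|P_{\bs\varepsilon}\on-P_{\bs\varepsilon'}\on\|_{TV}\le\sqrt{2}\,h(P_{\bs\varepsilon}\on,P_{\bs\varepsilon'}\on)\le\sqrt{2a}$, giving the first branch. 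For the second branch I would use the elementary inequality $1-\|P-Q\|_{TV}=\int (p\wedge q)\,d\mu\ge\frac12\big(\int\sqrt{pq}\,d\mu\big)^{2}$ (Cauchy–Schwarz, since $\int p\vee q\le2$) together with multiplicativity of the affinity, $\int\sqrt{p\on q\on}=\big(\int\sqrt{pq}\big)^{n}=(1-h^{2}(P_{\bs\varepsilon},P_{\bs\varepsilon'}))^{n}\ge(1-a/n)^{n}$, so that $1-\|P_{\bs\varepsilon}\on-P_{\bs\varepsilon'}\on\|_{TV}\ge\frac12(1-a/n)^{2n}$. Taking the larger of the two lower bounds on $1-\beta$ yields \eqref{concAssouad}.

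The argument is essentially routine once organised this way; the only points needing a little care are the measurable selection of the decoder $\widehat{\bs\varepsilon}$ (standard, as $\{0,1\}^{D}$ is finite) and tracking constants so that both $1-\sqrt{2a}$ and $\tfrac12(1-a/n)^{2n}$ emerge with the prefactor $\eta D/4$. If there is a ``hard part'', it is the conceptual one of replacing the non-coordinatewise TV loss by a decoder and two-point tests, and then recalling the affinity inequality $\int(p\wedge q)\ge\tfrac12(\int\sqrt{pq})^{2}$ that powers the second branch.
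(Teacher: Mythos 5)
Your proof is correct and follows, at its core, the same reduction the paper uses: project $\widehat P$ onto the cube via a nearest-codeword decoder $\widehat{{\bs{\varepsilon}}}$, use the triangle inequality together with hypothesis~(i) to obtain $\|P_{{\bs{\varepsilon}}}-\widehat P\|_{TV}\ge(\eta/2)\,\delta({\bs{\varepsilon}},\widehat{{\bs{\varepsilon}}})$, and then run a coordinatewise two-point argument. The difference is one of self-containment. The paper, after rewriting the right-hand side as a sum of coordinatewise losses $\eps_i\ell_i(P)+(1-\eps_i)\ell_i'(P)$ with $\ell_i,\ell_i'\in\{0,\eta/2\}$, outsources the remaining work to Birg\'e's (1986) version of Assouad's lemma, invoked with $\alpha=\eta/2$ and $\beta_i=a/n$. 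You instead unpack that citation: the Bayes-risk reduction over the cube, the pairing of ${\bs{\varepsilon}}$ with ${\bs{\varepsilon}}^{(j)}$ that turns each coordinate into a two-point test with error at least $1-\|P_{{\bs{\varepsilon}}}^{\otimes n}-P_{{\bs{\varepsilon}}^{(j)}}^{\otimes n}\|_{TV}$, and the two Hellinger-to-TV comparisons that produce the two branches of the maximum, namely $h^2(P^{\otimes n},Q^{\otimes n})\le 1-(1-a/n)^n\le a$ combined with $\|\cdot\|_{TV}\le\sqrt2\,h$ for the first branch, and $1-\|P-Q\|_{TV}=\int p\wedge q\,d\mu\ge\tfrac12\bigl(\int\sqrt{pq}\,d\mu\bigr)^2$ together with multiplicativity of the Hellinger affinity for the second. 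Both routes arrive at the prefactor $\eta D/4$; yours has the advantage of being entirely self-contained and of making explicit exactly which affinity inequalities drive each branch, whereas the paper's is terser by delegating to the cited lemma.
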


\begin{proof}
Let ${\bs{\overline \eps}}$ minimize ${\bs{\eps}}\mapsto \|P-P_{{\bs{\eps}}}\|_{TV}$ over $\{0,1\}^{D}$ for a given probability $P$ on $(\sX,\cX)$. Note that for all ${\bs{\eps}}\in\{0,1\}^{D}$,
\[
\|P_{{\bs{\eps}}}-P_{\overline{\bs{\eps}}}\|_{TV}\leq\|P-P_{{\bs{\eps}}}\|_{TV}+\|P-P_{\overline{\bs{\eps}}}\|_{TV}\leq 2\|P-P_{{\bs{\eps}}}\|_{TV}.
\]
Thus, using property (i), we have for all ${\bs{\eps}}\in\{0,1\}^{D}$:
\[
\|P_{{\bs{\eps}}}-P\|_{TV}\geq \frac{\eta}{2}\delta({\bs{\eps}},\overline{\bs{\eps}})=\sum_{i=1}^{D}\cro{\eps_{i}\ell_{i}(P)+(1-\eps_{i})\ell_{i}'(P)},
\]
where $\ell_{i}(P)=(\eta/2)\1_{\overline \eps_{i}=0}$ and $\ell_{i}'(P)=(\eta/2)\1_{\overline \eps_{i}=1}$, for  $i\in\{1,\ldots,D\}$. Finally, the conclusion follows by applying a version of Assouad's lemma from \cite{Birge1986} with $\beta_{i}=a/n$, for all $i\in\{1,\ldots,D\}$ and $\alpha=\eta/2$.
\end{proof}

Now we prove Theorem~\ref{lower-composite}. The roadmap is to first find a suitable collection of probabilities $\cP$ then apply Lemma~\ref{assouad-lemma} to derive the lower bound. 

The construction idea is inspired by the proof of Theorem~3 of \cite{Schmidt2020}. Denote $i^{*}\in\argmin_{i=0,\ldots,k}\alpha^{*}_{i}/(2\alpha^{*}_{i}+t_{i})$. For simplicity, we write $t^{*}=t_{i^{*}}$, $\alpha^{*}=\alpha_{i^{*}}$ and $\alpha^{**}=\alpha^{*}_{i^{*}}$. We define $N_{n}=\lfloor\rho n^{1/(2\alpha^{**}+t^{*})}\rfloor$, $h_{n}=1/N_{n}$ and $\Lambda=\left\{0,h_{n},\ldots,(N_{n}-1)h_{n}\right\}$.

Following the construction outlined on page 93 of \cite{Tsybakov2009}, we consider the function $$\cK(x)=a\exp\left(-\frac{1}{1-(2x-1)^{2}}\right)\1_{|2x-1|\leq1}$$ with $a>0$. Provided that $a$ is sufficiently small, we have $\cK\in\cH^{\alpha^{*}}(\R,1)$ with support on $\cro{0,1}$.
Moreover, for any $\beta\in\N$ satisfying $\beta\leq\lfloor\alpha^{*}\rfloor$, the $\beta$-th derivative of $\cK$ is zero at both $x=0$ and $x=1$, i.e., $\cK^{(\beta)}(0)=\cK^{(\beta)}(1)=0$.
We define
the function $\psi_{\bf{u}}$ on $\cro{0,1}^{t^{*}}$ as 
\[
\psi_{\bf{u}}(w_{1},\ldots,w_{t^{*}})=h_{n}^{\alpha^{*}}\prod_{j=1}^{t^{*}}\cK\left(\frac{w_{j}-u_{j}}{h_{n}}\right),
\]
where ${\bf{u}}=(u_{1},\ldots,u_{t^{*}})\in\cU_{n}=\left\{(u_{1},\ldots,u_{t^{*}}),\;u_{i}\in\Lambda\right\}$. Note that for any ${\bf{u}}, {\bf{u}}'\in\cU_{n}$, ${\bf{u}}\not={\bf{u}}'$, the supports of $\psi_{\bf{u}}$ and $\psi_{{\bf{u}}'}$ are disjoint. For any ${\bs{\beta}}=(\beta_{1},\ldots,\beta_{t^{*}})\in\N^{t^{*}}$ satisfying $\sum_{j=1}^{t^{*}}\beta_{j}\leq\lfloor\alpha^{*}\rfloor$, it holds that $\|\partial^{\bs{\beta}}\psi_{\bf{u}}\|_{\infty}\leq1$ due to the fact that $\cK\in\cH^{\alpha^{*}}(\R,1)$. Set $\cI_{\bf{u}}=\cro{u_{1},u_{1}+h_{n}}\times\cdots\times\cro{u_{t^{*}},u_{t^{*}}+h_{n}}$. Moreover, for any ${\bs{\beta}}=(\beta_{1},\ldots,\beta_{t^{*}})$ with $\sum_{j=1}^{t^{*}}\beta_{j}=\lfloor\alpha^{*}\rfloor$, with the fact that $\cK\in\cH^{\alpha^{*}}(\R,1)$ and triangle inequality, we obtain that for any ${\bs{x}}, {\bs{y}}\in\cI_{\bf{u}}$,
\[
\frac{\big|\partial^{\bs{\beta}}\psi_{\bf{u}}({\bs{x}})-\partial^{\bs{\beta}}\psi_{\bf{u}}({\bs{y}})\big|}{\|{\bs{x}}-{\bs{y}}\|_{2}^{\alpha^{*}-\lfloor\alpha^{*}\rfloor}}\leq t^{*}.
\]
Therefore, we have $\psi_{\bf{u}}\in\cH^{\alpha^{*}}(\cI_{\bf{u}},t^{*})$.
For any vector ${\bs{\varepsilon}}=(\eps_{{\bf{u}}})_{{\bf{u}}\in\cU_{n}}\in\{0,1\}^{|\cU_{n}|}$, define the function $\phi_{{\bs{\varepsilon}}}$ on $\cro{0,1}^{t^{*}}$ as $$\phi_{{\bs{\varepsilon}}}(w_{1},\ldots,w_{t^{*}})=\sum_{{\bf{u}}\in\cU_{n}}\eps_{{\bf{u}}}\psi_{{\bf{u}}}(w_{1},\ldots,w_{t^{*}}).$$ Given that $\cK\in\cH^{\alpha^{*}}(\R,1)$ and $\cK^{(\beta)}(0)=\cK^{(\beta)}(1)=0$, for any $\beta\leq\lfloor\alpha^{*}\rfloor$, it is not difficult to verify that $\phi_{{\bs{\varepsilon}}}\in\cH^{\alpha^{*}}(\cro{0,1}^{t^{*}},2t^{*})$.

Let $d'_{i}=\min\{d_{0},\ldots,d_{i}\}$, for all $i\in\{0,\ldots,k\}$. For $0\leq i<i^{*}$, we denote $f_{i}({\bs{w}})=(w_{1},\ldots,w_{d_{i+1}})^{\top}$, if $d_{i+1}=d'_{i+1}$; otherwise, we set $f_{i}({\bs{w}})=(w_{1},\ldots,w_{d'_{i}},0,\ldots,0)^{\top}$. We denote $f_{{\bs{\varepsilon}},i^{*}}({\bs{w}})=(\phi_{{\bs{\varepsilon}}}(w_{1},\ldots,w_{t^{*}}),0,\ldots,0)^{\top}$, $f_{i}({\bs{w}})=(w_{1}^{\alpha_{i}\wedge1},0,\ldots,0)^{\top}$, for $i^{*}<i\leq k-1$ and $f_{k}({\bs{w}})=w_{1}^{\alpha_{k}\wedge1}$. Let $\cA=\prod_{l=i^{*}+1}^{k}(\alpha_{l}\wedge1)$. Since $t_{j}\leq\min(d_{0},\ldots,d_{j-1})$, we can set
\begin{align*}
f_{{\bs{\varepsilon}}}({\bs{w}})&=f_{k}\circ\cdots\circ f_{i^{*}+1}\circ f_{{\bs{\varepsilon}},i^{*}}\circ f_{i^{*}-1}\circ\cdots\circ f_{0}({\bs{w}})\\
&=\sum_{{\bf{u}}\in\cU_{n}}\eps_{{\bf{u}}}\cro{\psi_{\bf{u}}(w_{1},\ldots,w_{t^{*}})}^{\cA}.
\end{align*}
Consequently, we can observe that the resulting function $f_{{\bs{\varepsilon}}}$ belongs to the class $\cF(k,{\bf{d}},{\bf{t}},{\bm{\alpha}},B)$ when $B$ is sufficiently large.

Since $W$ is uniformly distributed on $\cro{0,1}^{d_{0}}$, we can compute
\begin{equation}\label{l2-hamming-connect}
\|f_{{\bs{\varepsilon}}}-f_{{\bs{\varepsilon}}'}\|_{2}^{2}=\delta({\bs{\varepsilon}},{\bs{\varepsilon}}')h_{n}^{2\alpha^{**}+t^{*}}\|\cK^{\cA}\|_{2}^{2t^{*}}
\end{equation}
and
\begin{equation}\label{l1-hamming-connect}
\|f_{{\bs{\varepsilon}}}-f_{{\bs{\varepsilon}}'}\|_{1}=\delta({\bs{\varepsilon}},{\bs{\varepsilon}}')h_{n}^{\alpha^{**}+t^{*}}\|\cK^{\cA}\|_{1}^{t^{*}},
\end{equation}
where $\delta(\cdot,\cdot)$ denotes the Hamming distance. For any $P_{f_{{\bs{\varepsilon}}}}=Q_{f_{{\bs{\varepsilon}}}}\cdot P_{W}$ and $P_{f_{{\bs{\varepsilon}}'}}=Q_{f_{{\bs{\varepsilon}}'}}\cdot P_{W}$, where $P_{W}$ is the uniform distribution on $\cro{0,1}^{d_{0}}$, we can derive that
\begin{align}
h^{2}(P_{f_{{\bs{\varepsilon}}}},P_{f_{{\bs{\varepsilon}}'}})&=\int_{\sW}\left(1-\exp\cro{-\frac{|f_{{\bs{\varepsilon}}}(w)-f_{{\bs{\varepsilon}}'}(w)|^{2}}{8\sigma^{2}}}\right)dP_{W}(w)\nonumber\\
&\leq\int_{\sW}\frac{|f_{{\bs{\varepsilon}}}(w)-f_{{\bs{\varepsilon}}'}(w)|^{2}}{8\sigma^{2}}dP_{W}(w)\nonumber\\
&=\frac{\|f_{{\bs{\varepsilon}}}-f_{{\bs{\varepsilon}}'}\|^{2}_{2}}{8\sigma^{2}}.\label{h-upper-b}
\end{align}
According to Lemma~\ref{l1-ell}, we can deduce that for $W$ uniformly distributed on $\cro{0,1}^{d_{0}}$,
\begin{equation}\label{tv-lower-b}
\ell(Q_{f_{{\bs{\varepsilon}}}},Q_{f_{{\bs{\varepsilon}}'}})=\|P_{f_{{\bs{\varepsilon}}}}-P_{f_{{\bs{\varepsilon}}'}}\|_{TV}\geq \frac{0.78}{\sqrt{2\pi}\sigma}\|f_{{\bs{\varepsilon}}}-f_{{\bs{\varepsilon}}'}\|_{1},
\end{equation}
provided $\rho\geq1+\cro{\|\cK^{\cA}\|_{1}^{t^{*}}/(\sqrt{2\pi}\sigma)}^{1/\alpha^{**}}$ such that $h_{n}^{\alpha^{**}}\leq\sqrt{2\pi}\sigma/\|\cK^{\cA}\|_{1}^{t^{*}}$. Putting \eref{l2-hamming-connect}, \eref{l1-hamming-connect}, \eref{h-upper-b} and \eref{tv-lower-b} together, we observe that the family of probabilities $\cP=\{P_{\bsg_{{\bs{\varepsilon}}}},\;{\bs{\varepsilon}}\in\{0,1\}^{|\cU_{n}|}\}$ satisfies the assumptions of Lemma~\ref{assouad-lemma} with $D=N_{n}^{t^{*}}$, $$\eta=\frac{0.78}{\sqrt{2\pi}\sigma}h_{n}^{\alpha^{**}+t^{*}}\|\cK^{\cA}\|_{1}^{t^{*}}\quad\text{and}\quad a=\frac{1}{8\sigma^{2}}nh_{n}^{2\alpha^{**}+t^{*}}\|\cK^{\cA}\|_{2}^{2t^{*}}.$$ Finally, taking the constant $$\rho\geq\cro{1+\left(\frac{\|\cK^{\cA}\|_{1}^{t^{*}}}{\sqrt{2\pi}\sigma}\right)^{\frac{1}{\alpha^{**}}}}\vee\cro{1+\left(\frac{\|\cK^{\cA}\|_{2}^{2t^{*}}}{\sigma^{2}}\right)^{\frac{1}{2\alpha^{**}+t^{*}}}}$$ such that $h_{n}^{\alpha^{**}}\leq(n\|\cK^{\cA}\|_{2}^{2t^{*}}/\sigma^{2})^{-\frac{\alpha^{**}}{2\alpha^{**}+t^{*}}}\wedge\left(\sqrt{2\pi}\sigma/\|\cK^{\cA}\|_{1}^{t^{*}}\right)$, we derive by Lemma~\ref{assouad-lemma} that there exists some constant $c>0$ such that 
\[
\inf_{\widehat f}\sup_{f\et\in\cF(k,{\bf{d}},{\bf{t}},{\bm{\alpha}},B)}\E\cro{\ell(Q_{f\et},Q_{\widehat f})}\geq cn^{-\frac{\alpha^{**}}{2\alpha^{**}+t^{*}}}.
\]

\bibliographystyle{apalike}

\end{document}